\newtheorem{thm}{Theorem}
\newtheorem{defn}[thm]{Definition}
\newtheorem{lemma}[thm]{Lemma}
\newtheorem{corollary}[thm]{Corollary}
\lstdefinestyle{defaultstyle}{
	backgroundcolor = \color{lightgray!40!white},
	basicstyle = \ttfamily\footnotesize,
	keywordstyle = \color{purple},
	numbers = left,
	numberstyle = \tiny\color{gray}
}
\title{Controlling tail risk in two-slope ski rental\thanks{Supported in part by NSF award 2228995.}}
\author{Qiming Cui\\Johns Hopkins University\\Baltimore, MD USA\\ \texttt{qcui6@jhu.edu} \and Michael Dinitz\\Johns Hopkins University\\Baltimore, MD USA\\ \texttt{mdinitz@cs.jhu.edu}}
\date{}
\begin{document}
	
\maketitle

\begin{abstract}
    We study the optimal solution to a general two slope ski rental problem with a tail risk, i.e., the chance of the competitive ratio exceeding a specific value $\gamma$ is bounded by a constant $\delta$. This extends the recent study of tail bounds for ski rental initiated by [Dinitz, Im, Lavastida, Moseley, Vassilvitskii SODA 2024] to the two-slope version of ski rental defined by [Lotker, Patt-Shamir, Rawitz IPL 2008].  In this version, even after ``buying'' we must still pay a rental cost at each time step (unlike the traditional setting), but the rental cost is lower after buying than it is beforehand.  This models many real-world ``rent-or-buy'' scenarios where we can invest in some sort of improvement for a one-time cost, which then decreases (but does not eliminate) the per-time cost.  
    
    Despite this being perhaps the simplest extension of the classical ski-rental problem, we find that after adding bounds on the tail risk the optimal solution has fundamentally different structure from the classical setting with tail bounds, which already had significantly different structure compared to settings without tail bounds.  For example, in our setting there is a possibility that we never buy in an optimal solution (which can also occur without tail bounds), but more strangely (and unlike the case without tail bounds or the classical case with tail bounds) we also show that the optimal solution might need to have nontrivial probabilities of buying even at finite points beyond the time corresponding to the buying cost.  Moreover, in many regimes there does not exist a unique optimal solution.  As our first contribution, we develop a series of structure theorems to characterize some features of optimal solutions.  
    
    The complex structure of optimal solutions makes it more difficult to develop an algorithm to compute such a solution.  For example, it is not a priori clear whether the support in an optimal solution is bounded (and indeed there are optimal solutions where the support is unbounded).  Nevertheless, as our second contribution we utilize features of our structure theorems to design two algorithms, one based on a greedy algorithm combined with binary search that is fast but yields solutions that are suboptimal (although they are arbitrarily close to optimal), and a slower algorithm based on linear programming which computes exact optimal solutions.  
\end{abstract}

\section{Introduction}
\label{sec:introduction}
The ski rental problem is one of the most fundamental problems in online algorithms, modeling the fundamental ``rent-or-buy'' decision making task that arises constantly in real decision-making settings.  In this problem there is an unknown number of ski days, and before each day we can choose to either rent skis for the day for \$1 or buy skis for \$b.  If we have perfect knowledge then this problem is trivial: we rent each day if the number of ski days is at most $b$, and otherwise we buy skis at the beginning of day $1$.  The difficulty arises from \emph{not} knowing the number of ski days in advance, and instead being forced to make our rent/buy decision \emph{online}.  Ski rental has been the object of intensive study since its introduction~\cite{Karlin1986CompetitiveSC}, and has long been known to admit a deterministic $2$-competitive algorithm (an algorithm that never pays more than twice the optimal cost in hindsight)~\cite{Karlin1986CompetitiveSC} and a randomized algorithm that is $\frac{e}{e-1} \approx 1.58$-competitive in expectation~\cite{Karlin1990CompetitiveRA}.  Note that a randomized algorithm is simply a probability distribution over days; we will call this the \emph{purchase distribution} of the algorithm.

Recently Dinitz, Im, Lavastida, Moseley, and Vassilvitskii~\cite{Dinitz2024ControllingTR} initiated the study of \emph{tail bounds} for the classical randomized algorithm.  They showed that the classical algorithm does not exhibit concentration, and has a reasonably large probability of actually performing worse than the classical deterministic algorithm.  To rectify this, they considered algorithms that minimize the expected competitive ratio \emph{subject} to satisfying explicit tail bounds which they called $(\gamma, \delta)$-constraints: they required that their algorithm have probability at most $\delta$ of having competitive ratio worse than $\gamma$.  They were able to construct an (efficient) algorithm to compute the optimal purchase distribution for any collection of such constraints, but perhaps more surprisingly, they showed that optimal purchase distributions satisfying even a single $(\gamma, \delta)$-constraint exhibited dramatically different structure from the classical randomized algorithm (without tail bounds).  For example, while the classical algorithm is a simple exponentially increasing distribution in $[1,b]$, with a single $(2,\delta)$-constraint the optimal purchase distribution becomes non-monotonic, can have arbitrarily many discontinuities, and even where there are not discontinuities the purchase distribution can grow arbitrarily quickly (i.e., have arbitrarily large derivative).  

Ski rental is a fundamental problem because it abstracts a class of problems in which we can either pay a (small) repeating cost (the rental fee) or can pay a single one-time cost (the buying fee).  However, many problems exhibit a different but extremely related structure: one can again pay a repeating cost, or one can pay a single one-time cost which, rather than eliminating the rental fee, instead \emph{reduces it}.  This models, for example, any manufacturing setting where one can buy a machine, piece of infrastructure, or invest in a piece of technology that makes some process more efficient: we still need to pay every day to run the better machine, but since it is more efficient we pay less per output.  It is again trivial to determine whether to buy such an improved machine if we know the number of days that the factory will run, and the difficulty comes from \emph{not knowing} this time horizon.  

In order to model this, Lotker, Patt-Shamir, and Rawitz~\cite{Lotker2008SkiRW} introduced the \emph{two-option generalized ski rental problem}.  As usual, we first switch to a continuous setting where $b=1$ and time is continuous starting at $0$.  Then the classical ski rental setting can be thought of as having a choice of two states, each of which incurs some cost per time to be in: state 1, which costs \$1 per time to be in, and state 2, which costs \$0 per time to be in.  Moreover, it costs \$1 to switch from state 1 to state 2.  In other words, the two states can be modeled as \emph{lines}: if we spend $t$ time in state $1$ then we pay $1\cdot t + 0$, while if we spend $t$ time in state $2$ we pay $0\cdot t + 1$.  Lotker et al.\cite{Lotker2008SkiRW} generalized this to allow the second state to \emph{also} have some cost $0 \leq a \leq 1$ per time and switch a switching cost of $1-a$, corresponding to the line $a\cdot t + (1-a)$.  Note that any algorithm for this problem is similarly a purchase distribution that just specifies a distribution over when to switch from state 1 to state 2.  Their main result was an optimal randomized algorithm that is $\frac{e}{e-1+a}$-competitive in expectation for this problem.  Moreover, the optimal purchase distribution corresponding to this algorithm is remarkably similar to the randomized algorithm for the classical ski rental problem: it is essentially a scaled down version of the same exponential distribution in $[0,1]$, but with some extra probability mass on the ``never switch'' option (corresponding to placing probability mass at time $\infty$).  

By combining these two lines of work we get a number of interesting and important questions.  For the two-state ski rental problem in the presence of tail bounds, can we still design an algorithm to find the optimal purchase distribution, as in Dinitz et al.~\cite{Dinitz2024ControllingTR}?  Since the structure of the two-state solution~\cite{Lotker2008SkiRW} and the classical (``pure buy'') solution~\cite{Karlin1990CompetitiveRA} are extremely similar (just adding some probability mass at time $\infty$), when we add tail constraints do we similarly get the same basic structure as in~\cite{Dinitz2024ControllingTR}?

\subsection{Our Contributions}
In this paper we show that the structure of optimal purchase distributions in the presence of even a single tail bound is even more complicated than in Dinitz et al.~\cite{Dinitz2024ControllingTR}, but that we can still design efficient algorithms to compute this distribution (even for a collection of tail bounds).  For example, we prove that optimal purchase distributions can have the following structures, in contrast to the pure buy setting  (we follow~\cite{Lotker2008SkiRW} and refer to the classical setting as the ``pure buy'' setting).
\begin{itemize}
\item There may not be a unique purchase distribution, despite the fact that there is always a unique optimal purchase distribution for either the pure buy case with a single tail bound~\cite{Dinitz2024ControllingTR} or the two-state setting without tail bounds~\cite{Lotker2008SkiRW}.  
\item For some parameter settings, \emph{every} optimal solution might include probability mass at finite points larger than $1$ (or in the discrete case, larger than $b$).
Again, this is in contrast to the pure buy case even with a collection of tail bounds~\cite{Dinitz2024ControllingTR} (where the optimal purchase distribution only has support in $[0,1]$) and the two-state case without tail bounds~\cite{Lotker2008SkiRW} (where the optimal solution only has support in $[0,1] \cup \{\infty\}$).
\end{itemize}

Note that these features make it more difficult to compute an optimal purchase distribution.  In particular, since we not have an \emph{a priori} bound on the support, it is not clear that one can even \emph{represent} an optimal solution efficiently, much less compute it.  Nevertheless, we design algorithms to compute optimal purchase distributions (in the discrete setting).  

\paragraph{Definitions.} 
In order to more formally present our results and techniques, we first need to introduce some notation.  
In this paper, we use time $t$ and point $t$ interchangeably. A \emph{purchase distribution} $f$ is a probability distribution over $\mathbb{R}_{\geq 0} \cup \{\infty\}$, where $f_x$ represents the probability that we buy (move from state 1 to state 2) at time $x$ (we will use $f$ to mean both the distribution and its probability density function interchangeably).  Note that since our actions are probabilistic, our competitive ratio is now a random variable.  We will let $\alpha_f(x)$ denote the expected competitive ratio when the true stopping time is $x$ and we choose our buying time from $f$.  So our goal is to compute the purchase distribution which minimizes $\alpha_f = \sup_x \alpha_f(x)$
subject to also obeying the given tail bound(s).  

Given a $(\gamma, \delta)$-tail bound, the \emph{bad interval} of $x$ (denoted by $I_{\gamma}(x)$) is defined to be the set of all points $t$ such that if we buy at time $t$ and the adversary chooses time $x$ to stop, then the competitive ratio is larger than~$\gamma$. This is essentially the same definition of bad intervals as in~\cite{Dinitz2024ControllingTR}, but it turns out that their structure is quite different: most notably, in the pure buy setting it is not hard to argue that the bad interval for $x$ is contained in $[0,x]$, but in the two-state setting this is no longer true: the bad interval for $x$ can contain points in $(x, \infty)$.
We will refer to an interval $I$ of $\mathbb{R}_{\geq 0}$ as a \emph{zero interval} for $f$ if there is no probability mass in $I$, i.e., if $\int_{x \in I} f_x\ dx = 0$. When the right boundary is $\infty$, we call it a zero \emph{suffix}. 

Note that any purchase distribution with expected competitive ratio $\beta$ that satisfies a $(\gamma, \delta)$-tail constraint must satisfy two properties at every point $x$. First, it must be within the competitive ratio constraint: $\alpha_f(x) \leq \beta$.  Second, in order to obey the tail constraint there must be at most $\delta$ probability mass in the bad interval for $x$: $\int_{t \in I_{\gamma}(x)} f_t\ dt \leq \delta$.  We refer to the first of these as the competitive ratio constraint (or CR-constraint) and the second as the mass constraint. We note that the continuous and discrete settings are essentially the same for our purposes, so we move between them depending on what is convenient. The exception is for our algorithms, which only work in the discrete setting.

\subsubsection{Structure and Greedy}

In order to understand the structure of the optimal purchase distribution in the presence of tail bounds in the pure buy setting, Dinitz et al.~\cite{Dinitz2024ControllingTR} proved a structural theorem about the optimal distribution: at every point $x$, either the CR-constraint with $\beta = OPT$ or the mass constraint must be tight (or both).  And since (as discussed) the bad interval for every point $x$ is contained in $[0,x]$, this led them to an obvious greedy algorithm: moving from smaller to larger values of $x$, increase $f_x$ until one of the two constraints becomes tight.  While this is not actually an implementable algorithm as stated (since we do not know $OPT$), it enabled them to later design an actual algorithm as well as allowing them to reason about the optimal solution by reasoning about the behavior of the greedy algorithm.  

Unfortunately, it turns out that this theorem is \emph{not} true in the general two-state setting: unlike the pure buy setting, there can indeed be optimal purchase distributions $f$ which have points $x$ where $f_x > 0$ but neither of the constraints is tight at $x$.  Moreover, as discussed, in the two-state setting the bad interval for $x$ can actually contain points \emph{after} $x$, forcing us to add a third requirement when increasing $f_x$ to ensure that we don't violate the mass constraint for some $x' < x$.  Nevertheless, despite these difficulties, we prove the following theorem.

\begin{restatable}{thm}{OurContributionTheoremONEGreedyGivesOptimal}
    \label{our contribution:thm 1, greedy gives optimal}
    Given the true value of OPT, the greedy algorithm gives an optimal solution.
\end{restatable}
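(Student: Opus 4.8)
The plan is to show that the greedy algorithm, which moves from small $x$ to large $x$ and sets $f_x$ as large as possible subject to (i) the CR-constraint $\alpha_f(x) \leq OPT$, (ii) the mass constraint $\int_{t \in I_\gamma(x)} f_t\,dt \leq \delta$ at $x$, and (iii) the mass constraints $\int_{t \in I_\gamma(x')} f_t\,dt \leq \delta$ at all earlier $x' < x$ whose bad interval extends forward to include $x$, produces a feasible distribution whose cost matches $OPT$. Feasibility is essentially built into the definition of greedy: every constraint it enforces is exactly one of the constraints characterizing a valid $(\gamma,\delta)$-solution with expected competitive ratio $OPT$, so I only need to check that greedy never ``gets stuck'' — that the running distribution it builds can always be completed to a genuine probability distribution (total mass $1$, possibly using mass at $\infty$), and that the resulting $\alpha_f = \sup_x \alpha_f(x)$ does not exceed $OPT$. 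The latter holds because greedy explicitly caps $f_x$ by the CR-constraint at every point, so $\alpha_f(x) \le OPT$ for all $x$; the subtlety is only that increasing $f_x$ could in principle \emph{raise} $\alpha_f(x')$ for some other $x'$, so I need the monotonicity fact that putting mass earlier only helps (weakly decreases) the competitive ratio at later points, which is where the two-slope structure (and the earlier structure theorems) gets used.

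The heart of the argument is an exchange/charging argument comparing greedy's output $g$ to an arbitrary optimal solution $f^\star$. I would define, for each prefix $[0,x]$, the cumulative mass $G(x) = \int_0^x g_t\,dt$ and $F^\star(x) = \int_0^x f^\star_t\,dt$, and prove by a ``greedy stays ahead'' induction that $G(x) \geq F^\star(x)$ for all $x$. The inductive step: at a point $x$ where greedy does \emph{not} place the maximum conceivable mass, some constraint among (i)–(iii) is tight for $g$; I argue that the same constraint, evaluated on $f^\star$, together with the induction hypothesis $G \geq F^\star$ on the relevant sub-interval, forces $F^\star$ to have accumulated no more mass than $G$ up to that point. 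Concretely, if the CR-constraint is tight at $x$ for $g$, then since $f^\star$ also satisfies $\alpha_{f^\star}(x) \le OPT$ and earlier mass only decreases $\alpha(x)$, $f^\star$ cannot have more mass in $[0,x]$ than $g$; if a mass constraint is tight (at $x$ or at an earlier $x'$), the $\delta$-budget on the relevant bad interval caps $f^\star$'s mass there, and $G \ge F^\star$ on the complementary part finishes it. Summing/taking limits, $G(\infty^-) \ge F^\star(\infty^-)$, so greedy places \emph{at least} as much mass at finite times as $f^\star$ does, hence \emph{at most} as much mass at $\infty$. Since mass at $\infty$ corresponds to the ``never buy'' outcome, which is the worst case for the competitive ratio against a long adversary horizon, and since $f^\star$ is feasible with cost $OPT$, this pins down that $g$ is also feasible and achieves cost $\le OPT$; combined with $OPT$'s optimality, $g$ is optimal.

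The main obstacle I anticipate is handling requirement (iii) cleanly — the fact that bad intervals in the two-state setting reach \emph{forward}, so that fixing $f_x$ retroactively threatens the mass constraint at points $x' < x$. This breaks the clean left-to-right independence that made the pure-buy greedy in \cite{Dinitz2024ControllingTR} transparent, and it means the ``greedy stays ahead'' induction must carry a stronger hypothesis: not just $G(x) \ge F^\star(x)$, but that $g$ respects every forward-reaching mass constraint generated by the prefix, \emph{and} that greedy's choice at $x$ is the unique maximal choice consistent with all of these. I would need a preliminary lemma characterizing exactly which earlier points $x'$ have $x \in I_\gamma(x')$ — presumably a monotonicity statement saying these $x'$ form an interval, or that $x \in I_\gamma(x')$ implies $x \in I_\gamma(x'')$ for $x''$ between $x'$ and $x$ — so that the binding forward constraint at step $x$ comes from a single well-defined earlier point and the exchange argument localizes. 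A secondary technical point is the continuous-vs-discrete and finite-vs-$\infty$ bookkeeping: I would run the whole argument in the discrete setting (finitely many candidate points plus $\infty$), where ``increase $f_x$ until a constraint is tight'' is unambiguous, and then appeal to the remark in the paper that the discrete and continuous settings are equivalent for this purpose.
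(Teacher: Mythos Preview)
Your ``greedy stays ahead'' route is genuinely different from the paper's.  The paper does \emph{not} compare cumulative masses $G$ and $F^\star$; instead it first manufactures a particular optimal solution $f^p$ (Lemma~\ref{lem: existence of an opt solution with the longest tight prefix}: longest tight prefix, then one possibly nonzero point, then zeros) and shows by induction that greedy agrees with $f^p$ \emph{pointwise} on $[0,L_b]$.  The key structural fact enabling this is that for $x\le L_b$ the bad interval $I_\gamma(x)$ lies entirely in $[0,x]$, so the greedy choice at $x$ is uniquely determined by the prefix.  Equality on $[0,L_b]$ plus feasibility of $f^p$ then gives $\sum_{t\le L_b} f^g_t \ge 1-\delta$, so greedy can finish with at most $\delta$ mass in $(L_b,\infty]$.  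Your approach, by contrast, tries to avoid constructing $f^p$ and work directly with an arbitrary optimal $f^\star$.

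Your plan can be made to work, but the inductive step in the CR-tight case has a real gap as written.  From ``$\alpha_g(x)=OPT$, $\alpha_{f^\star}(x)\le OPT$, and earlier mass only decreases $\alpha(x)$'' you cannot conclude $F^\star(x)\le G(x)$: the competitive ratio at $x$ depends on the \emph{distribution} of mass in $[0,x]$, not just its total, so $f^\star$ could in principle carry more total mass by placing it closer to $0$ (where the coefficient $(1-a)(t+1-x)$ is smaller, even negative for $x>1$).  The fix is to integrate by parts: one checks that $\alpha_f(x)$ is an increasing affine function of $F(x)-\int_0^x F(t)\,dt$, so CR-tightness for $g$ and the CR-constraint for $f^\star$ give $G(x)-\int_0^x G\ge F^\star(x)-\int_0^x F^\star$, and then the induction hypothesis $G\ge F^\star$ on $[0,x)$ yields $G(x)\ge F^\star(x)$.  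For your case (iii), you are right that the forward-reaching constraints are the crux, but you stop short of the needed observation: these constraints only appear for $x>L_b$, the binding one is always the single constraint $\sum_{t>L_b}f_t\le\delta$ coming from the leftmost point of $BI_5$, and when greedy makes that tight one has $G(x)=G(L_b)+\delta\ge F^\star(L_b)+\delta\ge (1-\delta)+\delta=1$, so greedy has already placed all the mass.  Without these two pieces your induction does not close.  The paper's route sidesteps both issues by proving pointwise equality rather than cumulative dominance, at the cost of first building $f^p$; it also yields the uniqueness on $[0,1]$ (Theorem~\ref{thm: unique mass placement in [0,1]}) that is used elsewhere.
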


This theorem allows us to explore the structure of at least some optimal solutions, in particular those that are closely related to the greedy solution (the purchase distribution output by the greedy algorithm). 
Moreover, this allows us to prove that the structure of optimal solutions exhibits a phase transition at a specific value of $\delta$.  For large values of $\delta$, the structure is actually quite similar to prior work:

\begin{restatable}[Structure Theorem, large $\delta$]{thm}{OurContributionTWOLargeDelta}
    \label{our contribution: thm 2, general, large delta}
    If $\frac{OPT-1}{\frac{1}{a}-1} \leq \delta \leq 1$, then there is a unique optimal solution $f$, and it has the following structure: every point in $[0,1]$ has either the CR constraint or the mass constraint tight, $f_x = 0$ for all $x > 1$, and $f_{\infty} > 0$.
\end{restatable}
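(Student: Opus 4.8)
The plan is to reduce everything to the greedy algorithm. Run greedy with the correct value $\beta^\ast := OPT$; since the greedy algorithm run with the true value of $OPT$ produces an optimal solution (Theorem~\ref{our contribution:thm 1, greedy gives optimal}), its output $f$ is optimal, so it suffices to prove (i) that $f$ has the stated structure and (ii) that no other optimal solution exists. Write $p^\ast := \frac{OPT-1}{\frac1a-1}$. The single most useful observation is the behaviour of the CR-constraint as $x\to\infty$: a finite buying time contributes competitive ratio $\to 1$ while ``never buy'' contributes $\frac1a$, so $\alpha_g(\infty) = 1 + g_\infty\left(\frac1a-1\right)$ for any purchase distribution $g$; hence feasibility forces $g_\infty \le p^\ast$, and the hypothesis $\frac{OPT-1}{\frac1a-1}\le\delta$ says exactly $p^\ast\le\delta$.

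\emph{Greedy on $[0,1]$.} I would first observe that for $x\le 1$ the bad interval $I_\gamma(x)$ is an interval of the form $[\,\cdot\,,x]$ contained in $[0,1]$ (buying after time $x$, or never buying, both yield competitive ratio $1$ against stopping time $x$, hence are ``good''). Therefore, as greedy sweeps $[0,1]$, the only mass that can ever enter the bad interval of a point $x'\le 1$ is mass placed in $[0,1]$; I would then argue that when $\delta\ge p^\ast$ this never causes greedy to halt before the CR-constraint does, i.e.\ the ``third requirement'' is never the active one on $[0,1]$. Consequently greedy makes, at every $x\in[0,1]$, either the CR-constraint ($\alpha_f(x)=\beta^\ast$) or the mass constraint tight, which is the first assertion of the theorem. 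At the points where the CR-constraint is active, differentiating the identity $\alpha_f(x)=\beta^\ast$ once shows that $f$ is the scaled exponential $f_x = \frac{\beta^\ast-1}{1-a}\,e^x$ there (the same shape as the optimal solution of \cite{Lotker2008SkiRW}). Let $m := 1 - \int_0^1 f_x\,dx$ denote the probability mass greedy has not placed after finishing $[0,1]$.

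\emph{The leftover mass sits at $\infty$.} The key quantitative claim is $0 < m \le p^\ast$. For $m \le p^\ast$ one uses $\delta \ge p^\ast$ together with $\beta^\ast \ge \frac{e}{e-1+a}$ (the optimum with a tail bound can only exceed the unconstrained optimum of \cite{Lotker2008SkiRW}): I would show that the CR-constraints for $x$ in a neighbourhood of $1$ force greedy to accumulate at least $1-p^\ast$ mass on $[0,1]$, which together with the exponential profile pins $m$ down to at most $p^\ast$, while a matching lower bound gives $m>0$. Since $m \le p^\ast \le \delta$, greedy can legally place all of $m$ at $\infty$: the CR-constraint at $\infty$ then holds, and every mass constraint holds because for large $x$ the bad interval meets the support only in a shrinking suffix together with $\{\infty\}$, whose mass tends to $m\le\delta$. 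Finally, greedy never places mass at a finite $s>1$: for any stopping time $x\in[s,s+1)$, buying at $s$ gives competitive ratio $\frac{(1-a)(s+1)+ax}{(1-a)+ax}$, which exceeds $\frac{x}{(1-a)+ax}$ (the ratio from never buying) precisely because $x<s+1$, so such a placement can only tighten some CR-constraint while relaxing nothing. Hence $f_x = 0$ for all $x>1$ and $f_\infty = m > 0$, which establishes (i), and $f$ is optimal by Theorem~\ref{our contribution:thm 1, greedy gives optimal}.

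\emph{Uniqueness, and the main obstacle.} Let $g$ be any optimal solution. I would first show $\alpha_g(x)=\beta^\ast$ for every $x\in(0,1]$: otherwise, shifting an infinitesimal amount of mass from later in the support (or from $\infty$) toward a point $x_0$ with $\alpha_g(x_0)<\beta^\ast$ keeps $g$ feasible while lowering $\sup_x\alpha_g(x)$ below $\beta^\ast$, contradicting optimality. Plugging $\alpha_g\equiv\beta^\ast$ on $[0,1]$ into the same differential identity as above forces $g_x=\frac{\beta^\ast-1}{1-a}e^x=f_x$ on $[0,1]$, so $g$ agrees with $f$ on $[0,1]$ and has the same leftover $m$; and the competitive-ratio comparison above shows that $g$ cannot place any of this mass at a finite point $>1$ without violating the (already tight) CR-constraints on $[0,1]$, so $g$ places all of it at $\infty$ and $g=f$. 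I expect the main obstacle to be the inequality $m\le p^\ast$ in the third step: this is the only place the hypothesis $\delta\ge\frac{OPT-1}{1/a-1}$ is genuinely needed, and it cannot be deduced from any single constraint — one has to control the joint effect of the whole family of CR-constraints near $x=1$ on the amount of mass greedy accumulates in $[0,1]$. The mass-shifting argument in the uniqueness step is the second delicate point, since the perturbation must simultaneously respect the CR-constraints at all intermediate times.
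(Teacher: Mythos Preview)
Your proposal has several genuine gaps, and the paper's argument is structurally different.

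\textbf{The exponential form on $[0,1]$ is incorrect.} You assert that wherever the CR-constraint is active on $[0,1]$, $f_x=\frac{\beta^\ast-1}{1-a}e^x$, and later in the uniqueness step you assume $\alpha_g\equiv\beta^\ast$ on all of $[0,1]$. Neither is right: the theorem allows the \emph{mass} constraint to be the tight one at some points of $[0,1]$ (and this does occur in the paper's simulations). This also breaks your uniqueness argument, since the differential identity does not determine $g$ on intervals where the CR-constraint is slack.

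\textbf{The inequality $m\le p^\ast$ is not enough, and your uniqueness step misidentifies which constraints are affected.} There is a clean way to get $m\le p^\ast$: CR-tightness at $x=1$ gives $\int_0^1 t f_t\,dt=\frac{OPT-1}{1-a}$, so $aE=p^\ast$, and the non-increasing-CR-over-zero-intervals lemma gives $c\le aE$. But $m\le p^\ast$ alone does not force greedy to dump everything at $\infty$: if $m<p^\ast$ (world~2), greedy \emph{by definition} continues placing mass at finite points after $1$ to make CR tight there. Your step~3 argument (``buying at $s$ is worse than never buying for $x<s+1$'') is a true pointwise comparison but is irrelevant to greedy's behavior, which is to push CR back up to $\beta^\ast$. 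And your claim that placing mass at a finite $s>1$ would ``violate the (already tight) CR-constraints on $[0,1]$'' is false: for $x\le 1$, $\alpha_g(x)$ depends on the mass after $x$ only through its \emph{total}, so moving mass between a finite $s>1$ and $\infty$ does not touch those constraints at all.

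\textbf{What the paper does instead.} The paper sets up a world~1/world~2 dichotomy (world~2 means $c<aE$, i.e.\ CR strictly decreases over a zero interval right after $1$) and proves by contradiction that world~2 is impossible when $\delta\ge p^\ast$. The key extra ingredient you are missing is Theorem~\ref{thm: bad interval suffix has exactly delta mass}: in world~2 every optimal solution must carry \emph{exactly} $\delta$ mass in the suffix $(L_b,\infty]$, hence $c\ge\delta$. Combined with $c<aE=p^\ast\le\delta$, this is a contradiction. So the solution is in world~1, where greedy (by its explicit check at $1+\tau$) sets $f_\infty=p^\ast$ and stops; uniqueness then follows from Lemma~\ref{lem: in world 1 greedy gives optimal} (which already establishes the world~1 solution is unique). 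The missing idea in your plan is precisely this suffix-mass theorem --- without it you cannot rule out $m<p^\ast$, and once you have it the rest is immediate.
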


In particular, in this range of $\delta$, the difference between the two-state with tail constraints setting and pure buy with tail constraints setting~\cite{Dinitz2024ControllingTR} is just that extra mass is added at $\infty$.  This is completely analogous to the difference between the two-state setting without tail constraints~\cite{Lotker2008SkiRW} and the pure buy setting without tail constraints~\cite{Karlin1990CompetitiveRA}, and shows that the structure of the optimal distribution is almost precisely as ``complex'' as in the pure buy setting.  

When $\delta$ is below $\frac{OPT-1}{\frac{1}{a} - 1}$, the two-state setting with tail bounds behaves fundamentally differently from the pure buy setting with tail bounds~\cite{Dinitz2024ControllingTR}.
However, we show that in this small $\delta$ regime the greedy solution exhibits particular structure, which will enable us to actually design an algorithm in later sections.  Most notably, we show that while the greedy algorithm places some probability mass at finite points beyond $1$,
there is actually a point beyond which the greedy algorithm will not place any mass (other than at $\infty$).  Specifically, we give the structure theorem (Theorem~\ref{our contribution:thm 3, general, small delta}) for general optimal purchase distributions in this small $\delta$ regime.

\begin{restatable}[Structure Theorem, small $\delta$]{thm}{OurContributionThreeSmallDelta}
    \label{our contribution:thm 3, general, small delta}
    If $0 \leq \delta < \frac{OPT-1}{\frac{1}{a}-1}$, then:
    \begin{itemize}
        \item There is not a unique optimal purchase distribution.
        \item There exists a finite point $L_b = \frac{\gamma-1}{1-a\gamma} \geq 1$, such that there must be exactly $\delta$ mass in the suffix $(L_b,\infty]$ in any optimal purchase distribution.
    \end{itemize}
\end{restatable}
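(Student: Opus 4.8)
The plan is to reduce everything to an analysis of how the bad interval $I_\gamma(x)$ depends on $x$, then to prove the two halves of the second bullet as a matching upper and lower bound on the mass in $(L_b,\infty]$, and finally to exploit the leftover slack to produce a second optimal solution.

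\emph{Step 1: the shape of the bad intervals.} If we switch at time $t$ and the stopping time is $x$, the online cost is $x$ when $t\ge x$ and $(1-a)(t+1)+ax$ when $t<x$, while $\mathrm{OPT}(x)=x$ for $x\le 1$ and $\mathrm{OPT}(x)=(1-a)+ax$ for $x\ge 1$. Solving $\mathrm{cost}/\mathrm{OPT}>\gamma$ in each case shows: for $x\le 1$, $I_\gamma(x)$ is a bounded subinterval of $[0,x)$; for $1<x\le x^*$, where $x^*:=\frac{\gamma(1-a)}{1-a\gamma}$, we get $I_\gamma(x)=(g(x),x)$ with $g(x):=(\gamma-1)\frac{1-a+ax}{1-a}$; and for $x>x^*$ every $t\ge x$ is also bad, so $I_\gamma(x)=(g(x),\infty]$ is a suffix. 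One then checks that $g$ is strictly increasing, $g(x^*)=L_b$, $g(x)\to\infty$ as $x\to\infty$, and $x^*=L_b+1$ (and $L_b\ge 1$ given $1-a\gamma>0$ and the range of $\gamma$ under consideration). This computation is routine but indispensable: the suffix $(g(x),\infty]$ appearing for $x>x^*$ is exactly the phenomenon, absent in the pure-buy setting, that creates the special point $L_b$.

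\emph{Step 2: at most $\delta$ mass in $(L_b,\infty]$.} For every $x>x^*$ the mass constraint reads $\int_{(g(x),\infty]}f\le\delta$. As $x\downarrow x^*$ the sets $(g(x),\infty]$ increase to $\bigcup_{x>x^*}(g(x),\infty]=(L_b,\infty]$ since $g(x)\downarrow L_b$, so because $f$ is a measure, $\int_{(L_b,\infty]}f=\lim_{x\downarrow x^*}\int_{(g(x),\infty]}f\le\delta$. This holds for every feasible distribution, not just optimal ones.

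\emph{Step 3: at least $\delta$ mass in $(L_b,\infty]$ — the crux.} Suppose $f$ is optimal, with value $\mathrm{OPT}$ and the tail constraint satisfied, yet $\int_{(L_b,\infty]}f=\delta'<\delta$. The key leverage is that in this regime the competitive-ratio constraint has \emph{strict} slack at infinity: a short computation (using $\frac1x\int_0^x t f_t\,dt\to0$ for any distribution) gives $\lim_{x\to\infty}\alpha_f(x)=1+\frac{(1-a)f_\infty}{a}$, and since $f_\infty\le\delta'<\delta<\frac{\mathrm{OPT}-1}{1/a-1}$ this limit is strictly below $\mathrm{OPT}$. Hence $\sup_x\alpha_f(x)=\mathrm{OPT}$ is attained at a \emph{finite} maximizer $\bar x$, with $\alpha_f(x)<\mathrm{OPT}$ for all large $x$. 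The plan is then: since $\int_{(L_b,\infty]}f<\delta$ leaves room on the right, move an infinitesimal amount of mass from a support point $t_0\le L_b$ to a point just past $L_b$ (or to $\infty$); because ``not executing the switch'' is strictly cheaper exactly for $t_0<x<t_0+1$, such a shift strictly decreases $\alpha_f$ on $(t_0,t_0+1)$ and perturbs $\alpha_f$ elsewhere by an arbitrarily small amount, contradicting optimality provided $(t_0,t_0+1)$ contains a maximizer. The main obstacle is making this rigorous: one must argue that the maximizers lie in $[0,x^*)=[0,L_b+1)$ (so that such a $t_0$ is available) and that no tight competitive-ratio constraint is violated by the shift — this is where a careful case analysis of the maximizer set, an iterated version of the shift, or an appeal to the greedy characterization of Theorem~\ref{our contribution:thm 1, greedy gives optimal} together with complementary slackness is needed to conclude that $\int_{(L_b,\infty]}f=\delta$ in \emph{every} optimal solution.

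\emph{Step 4: non-uniqueness.} Once exactly $\delta$ mass sits in $(L_b,\infty]$ and at most $\delta$ at $\infty$, the strict slack in the competitive-ratio constraint established in Step 3 (on a whole right-region of stopping times, and at $\infty$) leaves a genuine degree of freedom. To finish, exhibit one explicit perturbation that stays optimal — for instance, transfer a small amount of mass between $\infty$ and a finite point in $(L_b,x^*)$ in the direction that only \emph{lowers} cost for the affected large stopping times, or (as noted in the discussion preceding Theorem~\ref{our contribution:thm 3, general, small delta}) redistribute mass near a point where $f$ is positive but neither constraint is tight — and verify that all constraints still hold. Any such perturbation produces a second optimal distribution.
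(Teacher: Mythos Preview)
Your outline follows the paper's route almost exactly: compute the bad intervals, get the upper bound on the suffix mass from feasibility, get the lower bound by a mass-shifting argument, and deduce non-uniqueness from the residual slack. The gap is in Step~3, and it is precisely the obstacle you flag but do not resolve.

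The single shift you describe --- from a support point $t_0\le L_b$ to the suffix --- raises $\alpha_f(x)$ for every $x>t_0+1$, not just ``perturbs'' it. If any CR-tight point sits beyond $t_0+1$ (which is entirely possible: in the greedy optimum the tight set extends well past $L_b+1=x^*$), the move immediately breaks feasibility, so the contradiction does not go through. Your suggested patch ``argue that the maximizers lie in $[0,x^*)$'' is therefore not available in general, and the other patches you list are not carried out. The paper's argument is the ``iterated version of the shift'' you mention, but with a specific order that is the missing idea: always take the \emph{last} finite CR-tight point $t$, and move mass from $t$ to a far-enough point in the suffix. Because every $x>t$ already has strict CR slack (and the slack is bounded away from zero for large $x$ by your $\lim_{x\to\infty}\alpha_f(x)<\mathrm{OPT}$ computation), the increase at those points is harmless; meanwhile $\alpha_f(t)$ strictly drops, so the last tight point strictly decreases. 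Iterating pushes the last tight point below $1$, after which a final shift from the first tight point in $[0,1]$ makes every point slack --- the contradiction. The paper also needs (and proves, using the world-2 monotonicity over $\epsilon$-intervals) that a last finite CR-tight point exists; your Step~3 implicitly assumes this when speaking of a finite maximizer, but the sup could in principle be attained along an unbounded sequence of tight points.

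For Step~4, your perturbation sketch is workable but requires the same kind of care (a transfer into a finite point of $(L_b,x^*)$ can land in a tight bounded bad interval $I_\gamma(x)$ for some $x\in(L_b,x^*]$). The paper's route is cleaner: the greedy optimum has $f_\infty=0$, while moving all suffix mass to $\infty$ gives another optimum with $f_\infty=\delta>0$ (and one checks, via the zero-suffix monotonicity with $\delta<aE$, that no CR constraint at finite $x>L_b$ is violated). These two distinct solutions witness non-uniqueness directly.
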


We can now give a structure theorem specifically for the greedy solution.  Thanks to the specificity of the greedy algorithm, we end up with a thorough understanding of the resulting solution.

\begin{restatable}[Structure Theorem for the greedy solution, small $\delta$]{thm}{OurContributionFOURGreedyStructure}
    \label{our contribution:thm 4, greedy solution, small delta}
    If $0 \leq \delta < \frac{OPT-1}{\frac{1}{a}-1}$, then, there exists a finite point $p \leq L_b$, such that the greedy solution has the following structure: 
    \begin{itemize}
        \item In $[0,1]$, either the $CR$ or mass constraint is tight at every point.
        \item In $(1,p]$, the $CR$ constraint is tight at every point and the solution behaves as an exponential function in $(1,p]$. Specifically, for small time step $\tau$, $f_{1+k\tau} = (1+\tau)^{k-1}f_{1+\tau}$ where $1 \leq k \leq \lfloor \frac{p-1}{\tau} \rfloor$.
        \item In $(p,L_b]$, the mass constraint is tight at every point.
        \item After $L_b$, the solution has either $CR$ or mass constraint tight at each point until there is a total of $\delta$ mass. Moreover, it has finite support.
    \end{itemize}
\end{restatable}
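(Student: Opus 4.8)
The plan is to analyze the greedy algorithm directly, region by region, using Theorem~\ref{our contribution:thm 1, greedy gives optimal} (greedy produces an optimal, hence feasible, distribution) and Theorem~\ref{our contribution:thm 3, general, small delta} (every optimal distribution, in particular the greedy one, places exactly $\delta$ mass in $(L_b,\infty]$, hence exactly $1-\delta$ mass in $[0,L_b]$). First I would record the shape of the bad intervals from a direct computation of the competitive ratio: with $OPT(x)=x$ on $[0,1]$ and $OPT(x)=ax+1-a$ on $[1,\infty)$, one gets that $I_\gamma(x)$ is a (possibly empty) subinterval of $[0,x)$ when $x\le 1$; that $I_\gamma(x)=(\ell(x),x)$ when $1<x\le R$, where $R:=\tfrac{\gamma(1-a)}{1-\gamma a}$ and $\ell(x):=\tfrac{a(\gamma-1)}{1-a}x+\gamma-1$ is affine with slope $\tfrac{a(\gamma-1)}{1-a}<1$ (using $a\gamma<1$); and that $I_\gamma(x)=(\ell(x),\infty]$ once $x>R$, i.e.\ the bad set now contains an entire suffix. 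Two identities, checked by direct substitution, will be used repeatedly: $R=L_b+1$ and $\ell(R)=L_b$.

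The structural simplification that makes the first three bullets tractable is that on $[0,1]\cup(1,L_b]$ the ``third requirement'' of the greedy step --- not overfilling $I_\gamma(x')$ for an already-processed $x'<x$ with $x\in I_\gamma(x')$ --- is vacuous: if $x\le L_b$ and $x'<x$, then either $x'\le 1$, so $I_\gamma(x')\subseteq[0,x')$ cannot contain $x>x'$, or $1<x'<L_b<R$, so $I_\gamma(x')=(\ell(x'),x')$ again cannot contain $x>x'$. Hence on $[0,L_b]$ greedy simply raises $f_x$ until the CR constraint $\alpha_f(x)\le OPT$ or the mass constraint $\int_{I_\gamma(x)}f\le\delta$ first becomes tight; this is exactly the first bullet on $[0,1]$, and on $(1,L_b]$ it reduces the analysis to tracking which of the two constraints is binding.

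For the middle two bullets I would follow that binding constraint as $x$ increases through $(1,L_b]$. Just above $1$ little mass lies inside $I_\gamma(x)=(\ell(x),x)$, so the mass constraint is slack and the CR constraint is tight; writing $\alpha_f(x)=OPT$ at consecutive grid points $x$ and $x+\tau$ and subtracting, and using that the ratio of a buy-point $y<x$ against stopping time $x>1$ is the affine function $\tfrac{(1-a)(y+1)+ax}{ax+1-a}$ while the ratio of a buy-point $y\ge x$ is $\tfrac{x}{ax+1-a}$, the difference reduces to the stated geometric recurrence $f_{1+k\tau}=(1+\tau)^{k-1}f_{1+\tau}$, the discrete analogue of $f\propto e^{t-1}$. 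Let $p$ be the first point of $(1,L_b]$ at which the mass constraint becomes tight; the heart of the argument is the monotonicity claim that it then stays tight up to $L_b$: when the sliding window $(\ell(x),x)$ is full, greedy adds mass at the right endpoint at exactly the rate the left endpoint $\ell(x)$ sheds it, and $\ell'<1$ keeps this balance from letting the CR constraint re-tighten, so the window stays full on all of $(p,L_b]$. That $p\le L_b$ --- i.e.\ the mass constraint does become tight by $L_b$ --- follows from Theorem~\ref{our contribution:thm 3, general, small delta}: greedy must leave precisely $\delta$ mass for $(L_b,\infty]$, so it cannot remain CR-tight (with its fast exponential growth) all the way to $L_b$.

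For the last bullet I would push past $L_b$. On $(L_b,R]$ the bad interval is still the bounded window $(\ell(x),x)$ with $\ell(x)\ge\ell(L_b)$, while for $x>R$ we have $I_\gamma(x)=(\ell(x),\infty]\subseteq(L_b,\infty]$, a set carrying at most $\delta$ mass by Theorem~\ref{our contribution:thm 3, general, small delta}; hence for $x>R$ the mass constraint is satisfied automatically and only the CR constraint can bind, which gives the ``CR or mass tight at each point'' claim past $L_b$. Finite support then follows from a slack-counting argument: for a buy-point $y$ and a stopping time $x$ with $y\le x+1$ and $x\to\infty$ the ratio tends to $1<OPT$, so the CR constraints become uniformly slack for large $x$; concretely, since $\delta<\tfrac{OPT-1}{1/a-1}$ the CR value contributed by the residual mass (moved from $\infty$ to a single large finite atom) stays bounded away from $OPT$ at that point and at all later points, so greedy discharges all remaining probability as one finite atom and halts. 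The step I expect to be the main obstacle is the monotonicity claim in the previous paragraph --- that the binding constraint switches from CR to mass exactly once on $(1,L_b]$ --- since it requires controlling the evolution of the probability accumulated inside the moving window $(\ell(x),x)$; the exponential recurrence and the post-$L_b$ arguments are comparatively mechanical once that monotonicity is in hand.
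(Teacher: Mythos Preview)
Your outline tracks the paper's approach closely: reduce to analyzing which constraint is binding as $x$ sweeps through $(1,L_b]$, derive the exponential recurrence for the CR-tight phase, then show a single switch to mass-tight. You also correctly flag the monotonicity claim (once mass is tight it stays tight up to $L_b$) as the crux. Two issues, one substantive and one a misreading.

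\textbf{The monotonicity step is not proved.} Your sentence ``$\ell'<1$ keeps this balance from letting the CR constraint re-tighten'' is an intuition, not an argument: the fact that the mass-tight rule forces $f_x=\ell'(x)\,f_{\ell(x)}<f_{\ell(x)}$ does not by itself control $\alpha_f(x)$, which depends on the entire profile on $[0,x]$. The paper's proof is a comparison argument that uses more than $\ell'<1$. It first records (Lemma~\ref{lem: bad interval has length less than 1}) that every bad interval for $x\in(I_3,L_b]$ has length strictly less than $1$; this matters because in the formula $\alpha_f(x)=\frac{x}{1-a+ax}+\sum_{t\le x}\frac{(1-a)(1+t-x)}{1-a+ax}f_t$ only the terms with $x-1<t\le x$ carry positive weight, so lowering $f_t$ on a window of length $<1$ can only lower $\alpha_f(x)$. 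The paper then partitions $(p,L_b]$ into nested ``$k$-th intervals'' of length $<1$ and, on each, compares the mass-tight profile $\bar f$ to an auxiliary profile $\hat f$ chosen to keep the CR exactly flat at its value at the start of that interval; $\hat f$ must be increasing there, hence pointwise dominates $\bar f$ (whose values are repetitions and zeros), and the length-$<1$ observation gives $\alpha_{\bar f}(x)<\alpha_{\hat f}(x)=\alpha_{\bar f}(T_{k-1})\le OPT$. Your $\ell'<1$ is used implicitly (it makes $\bar f$ non-increasing in the right sense), but the comparison with a CR-flat distribution and the length-$<1$ lemma are the missing pieces.

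\textbf{Misreading of $p\le L_b$.} You write that $p\le L_b$ means ``the mass constraint does become tight by $L_b$'' and argue this from the $\delta$ budget. That is not what the statement says, and it is not true in general: the paper notes explicitly (and illustrates in Figure~\ref{fig: unique solution in world 1}) that CR can remain tight all the way to $L_b$, i.e.\ $p=L_b$ with $(p,L_b]=\emptyset$. There is nothing to prove for $p\le L_b$; $p$ is by definition the last CR-tight point in $(1,L_b]$. Your argument that the exponential growth forces mass-tightness before $L_b$ should be dropped.

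Your treatment of the last bullet (past $L_b$, finite support) is reasonable and close to the paper's, which just appeals to the greedy definition and to the termination argument established earlier (cf.\ the proof of Theorem~\ref{our contribution:thm 1, greedy gives optimal} and Lemma~\ref{lem: algorithm terminates}).
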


One can see that the full structure of the greedy solution is quite different from the pure buy setting~\cite{Dinitz2024ControllingTR} where the support is always $[0,1]$ even with tail bounds.  In fact, the structures are even more different than they might appear.  In the pure buy setting there are also regimes where the mass constraint must be tight (as in the $(p, L_b)$ interval for us), which was shown by~\cite{Dinitz2024ControllingTR} to result in a series of exponential functions with increasing base (when combined with the fact that the bad intervals for such points are decreasing in length).  In our setting, since the bad intervals for points in $(p, L_b)$ are actually \emph{increasing} in length, this results in exponential functions with \emph{decreasing} base. 
And even more strangely, due to the discrete time setting we also end up with frequent interruptions by zero intervals (corresponding to points where the left endpoint of the bad interval does not actually move thanks to the discretization).  To see an example of this, see Figure~\ref{fig:exponential_interrupted_by_zero_intervals}, where time has been discretized finely enough to result in what appear to be overlapping zeros and exponentials (although they are in reality alternating, not overlapping).

Moreover, while the greedy solution always has finite support, it is not clear whether this support can be written as a closed analytical form of the input values. So, we show that there is a closely related distribution (which is a simple modification of the greedy solution) which always has support in $[0,L_b] \cup \{\infty\}$, giving a ``fixed" finite support solution. Together with the fact that there is a unique optimal solution with support $[0,1] \cup \{\infty\}$ in the large $\delta$ regime, we have the following structural theorem for the fixed finite support solution for all $\delta$.

\begin{restatable}[Structure Theorem for the fixed finite support solution, all $\delta$]{thm}{OurContributionFiveAllDelta}
    \label{our contribution:thm 5, closely related greedy, all delta}
    For every $\delta$, there is an optimal purchase distribution $f$ which has support in $[0,L_b] \cup \{\infty\}$. In other words, $f_x = 0$ for all $x \in (L_b,\infty)$.
\end{restatable}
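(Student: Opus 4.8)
The plan is to treat the two $\delta$-regimes of Theorems~\ref{our contribution: thm 2, general, large delta} and~\ref{our contribution:thm 3, general, small delta} separately. When $\frac{OPT-1}{\frac{1}{a}-1}\le\delta\le1$ there is nothing to do: Theorem~\ref{our contribution: thm 2, general, large delta} already produces an optimal $f$ supported on $[0,1]\cup\{\infty\}$, and $[0,1]\subseteq[0,L_b]$ since $L_b=\frac{\gamma-1}{1-a\gamma}\ge1$. So assume $0\le\delta<\frac{OPT-1}{\frac{1}{a}-1}$, and let $g$ be the greedy solution: it is optimal by Theorem~\ref{our contribution:thm 1, greedy gives optimal}, has finite support and the structure of Theorem~\ref{our contribution:thm 4, greedy solution, small delta}, and by Theorem~\ref{our contribution:thm 3, general, small delta} carries exactly $\delta$ mass on $(L_b,\infty]$, hence $1-\delta$ mass on $[0,L_b]$. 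I would define $f$ by keeping $g$ on $[0,L_b]$, setting $f_x=0$ for $x\in(L_b,\infty)$, and moving all the deleted mass onto the atom at infinity, so $f_\infty=\delta$ and $f$ is a probability distribution with support in $[0,L_b]\cup\{\infty\}$. It then remains to show $f$ is feasible with $\alpha_f\le OPT$ --- i.e., that this relocation preserves both the mass constraints and the CR constraints.

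Two facts about bad intervals, obtained by direct computation from the definition of $I_\gamma$, do all the bookkeeping. First, $\infty\in I_\gamma(x)$ exactly when $x>x^\ast:=L_b+1=\frac{\gamma(1-a)}{1-a\gamma}$, since never buying against stopping time $x\ge1$ yields ratio $\frac{x}{1-a+ax}$, which exceeds $\gamma$ precisely past $x^\ast$. Second, for $x\ge x^\ast$ the bad interval is a suffix $I_\gamma(x)=(\ell(x),\infty]$ with $\ell$ increasing and $\ell(x^\ast)=L_b$. Given these, the mass constraint for $f$ follows. For $x<x^\ast$ we have $\infty\notin I_\gamma(x)$ and $f_t\le g_t$ at every finite $t$ (equality on $[0,L_b]$, and $f=0$ elsewhere), so the $f$-mass in $I_\gamma(x)$ is at most the $g$-mass there, which is $\le\delta$. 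For $x\ge x^\ast$ the set $I_\gamma(x)=(\ell(x),\infty]$ has $\ell(x)\ge L_b$, so it meets $\mathrm{supp}(f)$ at most at $\infty$ and therefore carries at most $f_\infty=\delta$ mass.

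For the CR constraint I would split at $x^\ast$ again. For $x\le x^\ast$, relocating a unit of mass from a finite point $t\in(L_b,\infty)$ to $\infty$ cannot increase $\alpha(\cdot)$ at $x$: if $t\ge x$ the two buying times are indistinguishable for stopping time $x$, while if $L_b<t<x\le x^\ast=L_b+1$ then $t>x-1$, so buying at $t$ costs $(1-a)(t-x+1)>0$ more than never buying; hence $\alpha_f(x)\le\alpha_g(x)\le OPT$ for all $x\le x^\ast$, in particular at $x^\ast$. For $x>x^\ast$, every finite point of $\mathrm{supp}(f)$ is $\le L_b<x$, so $\alpha_f(x)=(1-\delta)+\frac{(1-a)\mu_1+\delta x}{1-a+ax}$ with $\mu_1=\int_0^{L_b}t\,f_t\,dt$; the derivative in $x$ has the sign of $\delta-a\mu_1$ regardless of $x$, so $\alpha_f$ is monotone on $[x^\ast,\infty)$ and $\sup_{x\ge x^\ast}\alpha_f(x)=\max\big(\alpha_f(x^\ast),\,\lim_{x\to\infty}\alpha_f(x)\big)=\max\big(\alpha_f(x^\ast),\,1+\delta(\tfrac{1}{a}-1)\big)$. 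Both terms are $\le OPT$ --- the first by the previous sentence, the second because $\delta<\frac{OPT-1}{\frac{1}{a}-1}$ --- so $\alpha_f\le OPT$; since $f$ is feasible this forces $\alpha_f=OPT$, and $f$ is the desired optimal solution. I expect the crux to be the two bad-interval facts, especially identifying that the relevant threshold is $x^\ast=L_b+1$ and that $\ell(x^\ast)=L_b$: this is exactly what makes both the mass argument for $x\ge x^\ast$ and the case split for the CR argument go through. Once those are in hand, noticing that a one-line monotonicity computation controls $\alpha_f$ for all large $x$ --- with no need to describe where $g$ places its post-$L_b$ mass --- makes the rest routine.
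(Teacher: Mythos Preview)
Your proposal is correct and follows exactly the paper's approach: in the large-$\delta$ regime invoke Theorem~\ref{our contribution: thm 2, general, large delta}, and in the small-$\delta$ regime take an optimal solution (you use the greedy one), relocate its $\delta$ mass on $(L_b,\infty]$ to the atom at $\infty$, and check feasibility and optimality. Your verification is considerably more careful than the paper's own two-sentence proof (which simply asserts the relocation ``does not violate the $CR$ constraint at $\infty$''); the one minor imprecision is that at the boundary $x=x^\ast$ the bad interval is $(L_b,x^\ast]$ rather than $(L_b,\infty]$, but your conclusion there still holds since this interval misses $\mathrm{supp}(f)$ entirely.
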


\subsubsection{Algorithms} 
After exploring the structure of optimal solutions, we give two different algorithms to solve the problem: a binary search greedy algorithm and an LP-based algorithm. The binary search greedy algorithm takes in an accuracy parameter $\epsilon>0$ and a guess $T$ of $OPT$, and applies the aforementioned greedy algorithm using $T$ as $OPT$ to construct the $\{f_t\}$ values.  However, since $T$ is not necessarily equal to $OPT$, these values might not actually form a valid distribution.  We prove that $\sum_t f_t > 1$ if and only if $T > OPT$.  This allows us to do binary search for $OPT$, enabling us to return a feasible solution (a purchase distribution that obeys the $(\gamma, \delta)$-tail constraint) with expected competitive ratio at most $OPT+\epsilon$.

\begin{restatable}[Binary Search Greedy]{thm}{OurContributionSixBinarySearch}
    \label{our contribution:thm 6, binary search}
    There is a binary search greedy algorithm that gives feasible solution with expected competitive ratio at most $OPT+\epsilon$ in $O(\log(1/\epsilon))$ iterations.
\end{restatable}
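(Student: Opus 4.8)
The plan is to combine the greedy algorithm from Theorem~\ref{our contribution:thm 1, greedy gives optimal} with a monotonicity argument that lets us binary search for $OPT$. The greedy algorithm, when fed a guess $T$ as the value of $OPT$, processes time steps from small to large and at each step raises $f_x$ as much as possible without violating the CR-constraint with $\beta = T$, the mass constraint for $x$, or the mass constraint for any earlier point (recall bad intervals can extend to the right). This produces a set of values $\{f_t\}$ that is nonnegative and obeys all tail/CR constraints by construction, but whose total mass $\sum_t f_t$ need not equal $1$.

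The key structural claim I would establish is: \emph{the total mass $S(T) := \sum_t f_t$ produced by greedy on guess $T$ is a nondecreasing (indeed strictly increasing wherever it matters) function of $T$, with $S(OPT) = 1$, $S(T) > 1$ iff $T > OPT$, and $S(T) < 1$ iff $T < OPT$.} The $S(OPT)=1$ direction is exactly Theorem~\ref{our contribution:thm 1, greedy gives optimal} (greedy with the true $OPT$ yields a genuine optimal distribution). For monotonicity, I would argue that a larger $T$ relaxes every CR-constraint (the CR-constraint $\alpha_f(x) \le T$ becomes weaker), so at each step greedy can place at least as much mass as it could with a smaller $T$; one must check that this ``pointwise more mass'' property is preserved through the recursion despite the mass constraints coupling different points — the cleanest way is an inductive coupling argument showing that running greedy with $T' > T$ keeps a running invariant that the partial sums (and the positions of zero intervals) dominate those of the $T$-run. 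The strictness ($S(T)>1$ for $T > OPT$, not just $\ge 1$) follows because if $S(T) = 1$ for some $T \ne OPT$ then greedy would have produced a feasible distribution with expected competitive ratio $\le T < OPT$ when $T < OPT$, contradicting optimality of $OPT$; and for $T > OPT$ one uses that the extra slack in the CR-constraint near some active point forces strictly more mass.

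Given this monotonicity, the algorithm is a standard bisection: initialize an interval $[1, \gamma]$ (or any interval known to contain $OPT$ — note $1 \le OPT \le \gamma$ since placing all mass at $\infty$ is always feasible and gives ratio $\frac{1}{a}$... more carefully, one uses the trivial feasible solution to bound $OPT$ above), repeatedly bisect, run greedy on the midpoint $T$, and recurse left if $S(T) > 1$ and right if $S(T) < 1$. After $O(\log(1/\epsilon))$ iterations the interval has width at most $\epsilon$, so the current guess $T$ satisfies $T \le OPT + \epsilon$. To \emph{return} an actual feasible distribution with expected competitive ratio $\le OPT + \epsilon$, I would take the final $T \ge OPT$ (the last guess for which $S(T) \ge 1$, i.e., the left endpoint of the final interval): greedy on this $T$ gives values with total mass $\ge 1$ and all constraints satisfied with $\beta = T \le OPT + \epsilon$; we then truncate/renormalize (drop excess mass from the tail, which only helps the constraints since removing mass never violates a mass constraint and never increases any $\alpha_f(x)$ above $T$... one must verify removing tail mass keeps $\alpha_f(x) \le T$, which holds because $\alpha_f(x)$ is monotone in the mass placed at buying times relevant to $x$). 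Each greedy run is polynomial in the (discrete) input size, so the total running time is $O(\log(1/\epsilon))$ greedy invocations as claimed.

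The main obstacle I anticipate is the monotonicity claim $S(T') \ge S(T)$ for $T' > T$: because the mass constraint for an early point $x$ restricts how much can be placed at a \emph{later} time $t \in I_\gamma(x)$, a larger $T$ could conceivably cause greedy to ``front-load'' mass and thereby exhaust an early mass budget sooner, reducing mass available later. Ruling this out requires carefully showing that the front-loading is always accompanied by an at-least-as-large increase elsewhere, which is most naturally handled by the inductive coupling/domination invariant sketched above, and by using the specific closed-form structure of the greedy solution from Theorem~\ref{our contribution:thm 4, greedy solution, small delta} (the exponential segments and the tightness pattern) to control exactly where mass lands. The secondary obstacle is the clean handling of the $\infty$ point and the truncation step, which requires a short lemma that $\alpha_f(x)$ is monotone nondecreasing in the purchase mass on the relevant bad set — intuitively clear but worth stating explicitly.
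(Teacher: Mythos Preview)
Your overall architecture---run greedy with a guess $T$, test whether the resulting total mass $S(T)$ is at least $1$, and bisect---is exactly what the paper does, and the forward implication ($S(T)\geq 1 \Rightarrow T\geq OPT$) via truncation is also handled the same way.

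The real divergence is in the backward direction ($T\geq OPT \Rightarrow S(T)\geq 1$). You propose to establish this by proving that $S(T)$ is monotone in $T$ via an inductive pointwise-domination argument, and you correctly flag the obstacle: a larger $T$ front-loads mass, which can cause a mass constraint to bind earlier, potentially \emph{reducing} the greedy allocation at some later points. You do not actually resolve this; you only gesture at ``inductive coupling'' and the closed-form structure of Theorem~\ref{our contribution:thm 4, greedy solution, small delta}. The pointwise domination in fact need not hold once the mass constraint becomes tight in $(I_3,L_b]$ (with a larger $T$ the exponential segment is higher, so the bad-interval mass fills up sooner, and the subsequent mass-tight values are forced \emph{lower}), so the argument would have to compare aggregate totals rather than pointwise values---which is a nontrivial accounting that you have not supplied.

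The paper sidesteps this entirely. It introduces an auxiliary mathematical program $MP(T)$ (maximize total mass subject to the $T$-CR constraints and the mass constraints, without the sum-to-one requirement), proves a tightness lemma for optimal $MP(T)$ solutions (analogous to Theorem~\ref{thm: in any opt solution something is tight in [0,1]}), and then shows by induction that greedy solves $MP(T)$ optimally on $[0,\min(x^*,L_b)]$. The backward direction then falls out immediately: when $T\geq OPT$, the true optimal purchase distribution is feasible for $MP(T)$, so the $MP(T)$ optimum has mass at least $1$; since greedy matches that optimum on $[0,L_b]$ it already carries at least $1-\delta$ mass there, and one checks directly that the remaining $\delta$ can be placed in $(L_b,\infty]$. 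No monotonicity of $S(\cdot)$ is needed or proved.

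So: your plan is viable in outline, but the monotonicity route you chose has a real gap at exactly the place you identified, and the paper's $MP(T)$ device is what fills it.
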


While this algorithm is fast and effective, it does not actually return an optimal solution (just a solution that is arbitrarily close to optimal).  To remedy this, we take advantage of the fact that there is a related optimal solution which has \emph{fixed} finite support, and in particular has support in $[0,L_b] \cup \{\infty\}$ (Theorem~\ref{our contribution:thm 5, closely related greedy, all delta}).
Thanks to this theorem, we can write a linear program in which the variables are the purchase distribution and the constraints are the CR and mass constraint (even without knowing $OPT$).  By solving this LP we get a polynomial-time algorithm to find an exactly optimal purchase distribution, at the cost of larger running time to solve the LP.

\begin{restatable}[LP Algorithm]{thm}{OurContributionSevenLP}
    \label{our contribution:thm 7, LP}
    There is a linear program with $\Theta(L_b/\tau)$ variables and $\Theta(L_b/\tau)$ constraints where any optimal solution corresponds to an optimal purchase distribution.  
\end{restatable}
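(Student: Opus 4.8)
The plan is to exploit Theorem~\ref{our contribution:thm 5, closely related greedy, all delta}, which guarantees an optimal purchase distribution supported on $[0,L_b]\cup\{\infty\}$, so that the a priori unbounded support is no longer an obstacle. Working in the discrete setting with time step $\tau$, I would use one variable $f_{k\tau}$ for each $k\in\{0,1,\dots,\lceil L_b/\tau\rceil\}$, one variable $f_\infty$, and an auxiliary objective variable $\beta$ (so $\Theta(L_b/\tau)$ variables in all), and minimize $\beta$. The constraints are: non-negativity of every $f$-variable; the normalization $\sum_k f_{k\tau}+f_\infty=1$; the CR-constraint $\alpha_f(x)\le\beta$ for stopping times $x$ in a finite set to be determined; and the mass constraint $\sum_{t\in I_\gamma(x)}f_t\le\delta$ for stopping times $x$ in a finite set to be determined. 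The first thing to record is that, writing $\rho(t,x)$ for the (deterministic) competitive ratio of buying at $t$ and stopping at $x$, both $\alpha_f(x)=\sum_t f_t\,\rho(t,x)$ and $\sum_{t\in I_\gamma(x)}f_t$ are \emph{linear} in $f$, with coefficients that are explicit elementary functions of $a$, $\gamma$ and $x$; hence the program is genuinely linear. It then remains to establish (i) that finitely many stopping-time constraints (of size $\Theta(L_b/\tau)$) suffice, and (ii) that the resulting LP is equivalent to the original problem.

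For (i) the key is that, for a distribution supported on $[0,L_b]\cup\{\infty\}$, the maps $x\mapsto\alpha_f(x)$ and the bad intervals $I_\gamma(x)$ become very clean once $x>L_b$: then $x>\max(1,t)$ for every finite $t$ in the support, so $\rho(t,x)=1+\frac{(1-a)t}{(1-a)+ax}$ and $\rho(\infty,x)=\frac{x}{(1-a)+ax}$, and differentiating gives
\[
\alpha_f'(x)=\frac{1-a}{\bigl((1-a)+ax\bigr)^2}\Bigl(f_\infty-a{\textstyle\sum_{t\le L_b}t\,f_t}\Bigr)\qquad\text{for }x>L_b,
\]
whose sign is independent of $x$, so $\alpha_f$ is \emph{monotone} on $(L_b,\infty)$. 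Hence $\sup_{x>L_b}\alpha_f(x)$ is either $\alpha_f$ at the first discrete point exceeding $L_b$ (which is at most $L_b+\tau$) or the limit $\lim_{x\to\infty}\alpha_f(x)=1+(\tfrac1a-1)f_\infty$; moreover, for \emph{every} purchase distribution one has $\alpha_f\ge 1+(\tfrac1a-1)f_\infty$ (by Fatou, using $\rho(\infty,x)\to\tfrac1a$ and $\rho(t,x)\to1$). So it suffices to add the single constraint $\beta\ge 1+(\tfrac1a-1)f_\infty$ and to impose the CR-constraint only at discrete $x\le L_b+\tau$. For the mass constraints, a short computation shows $\infty\in I_\gamma(x)$ precisely when $x>T^*:=\frac{\gamma(1-a)}{1-a\gamma}=L_b+1$, and that for $x>T^*$ no finite $t\le L_b$ lies in $I_\gamma(x)$; thus for $x>T^*$ the mass constraint is just $f_\infty\le\delta$, so it suffices to impose the mass constraint at discrete $x\le T^*$ together with $f_\infty\le\delta$. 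Since $L_b\ge 1$ gives $T^*\le 2L_b$, all constraint families have size $\Theta(L_b/\tau)$.

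For (ii), both directions are now routine. A feasible point of the finite LP yields, via non-negativity and normalization, a purchase distribution supported on $[0,L_b]\cup\{\infty\}$; it satisfies $\alpha_f(x)\le\beta$ at every discrete stopping time (those with $x\le L_b+\tau$ directly, those with $x>L_b$ via monotonicity and the constraint $\beta\ge1+(\tfrac1a-1)f_\infty$) and $\sum_{t\in I_\gamma(x)}f_t\le\delta$ at every stopping time (those with $x\le T^*$ directly, those with $x>T^*$ via $f_\infty\le\delta$); so it is a feasible purchase distribution with expected competitive ratio exactly $\beta$. Conversely, Theorem~\ref{our contribution:thm 5, closely related greedy, all delta} supplies an optimal purchase distribution supported on $[0,L_b]\cup\{\infty\}$, which restricts to a feasible LP point of objective $OPT$. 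Hence the LP optimum equals $OPT$, and under the correspondence $f\mapsto(f,\alpha_f)$ every optimal LP solution is a purchase distribution obeying the $(\gamma,\delta)$-tail constraint with expected competitive ratio $OPT$, i.e.\ an optimal purchase distribution.

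The main obstacle is (i): finding the correct finite truncation and proving the omitted constraints are implied. This is delicate precisely because, as the introduction stresses, the bad interval for $x$ can contain points \emph{after} $x$, so one must argue that only beyond the threshold $T^*$ does the sole ``bad'' support point become $\infty$; and the CR-truncation rests on the monotonicity of $\alpha_f$ on $(L_b,\infty)$, which in turn depends on the support lying in $[0,L_b]$ so that a single closed form for $\rho(t,x)$ holds for all $x>L_b$. A minor further subtlety is the boundary case $x=T^*$, where the relevant ratios equal $\gamma$ rather than exceed it, together with the exact placement of discretization points near $L_b$; these require care but no new ideas.
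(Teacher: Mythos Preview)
Your approach is exactly the paper's: restrict to support $[0,L_b]\cup\{\infty\}$ via Theorem~\ref{our contribution:thm 5, closely related greedy, all delta} and encode the remaining constraints as an LP. The paper's proof is a one-liner citing that theorem, while you supply the substantive details the paper elides---the zero-suffix monotonicity of $\alpha_f$ past $L_b$ (cf.\ Lemma~\ref{lem: monotonicity of zero suffix}) and the bad-interval reduction past $T^*=L_5=L_b+1$---and your LP is in fact slightly more careful than the paper's stated formulation, explicitly including the constraints $\beta\ge 1+(\tfrac1a-1)f_\infty$ and $f_\infty\le\delta$ that the paper omits.
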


\paragraph{Experiments.} Since we have efficient algorithms to compute optimal and near-optimal purchase distributions, we have the ability to experimentally explore some of the parameter space to illustrate the surprising structures of the optimal purchase distributions (as discussed in our structure theorems). 
We exhibit the optimal purchase distributions and the figures that illustrates CR and mass constraints in both large and small $\delta$ regimes. Specifically, in the small $\delta$ regime, we observe that the optimal purchase distributions generated by the above two algorithms not only have distinct structures after $L_b$ as stated in the structure theorems, but also reveal differing structures before~$L_b$: the distribution given by the LP algorithm can be non-greedy in $[1,L_b]$, thus the optimal purchase distribution is not unique even before $L_b$.  Also, as already discussed, we present a figure showing different behaviors in $(p,L_b]$, indicating that to make mass tight, the optimal purchase distribution consists of alternating exponentials and zero intervals.
These results and some discussion can be found in Section~\ref{sec:Simulations}.

\subsection{Related Work}

The classical ski rental problem, also known as the rent or buy problem, is the most basic problem considered in online algorithms. The literature on it and its variants is deep and vast, so we do not attempt to survey all such work, but rather only that which is most directly related to our setting.  Karlin et al.~\cite{Karlin1986CompetitiveSC} first analyzed the 2-competitive deterministic strategy and then gave the $e/(e-1)$-competitive randomized algorithm in \cite{Karlin1990CompetitiveRA} in the classical pure buy setting. Separately, Karlin et al. \cite{Karlin2003DynamicTA} applied the randomized algorithm to TCP acknowledgement and some other problems. We note that there are multiple variants of the classical ski rental problem. Lotker, Patt-Shamir, and Rawitz \cite{Lotker2008SkiRW} indicated the optimal $e/(e-1+a)$-competitive randomized algorithm in the two states setting. They also studied the multislope ski rental \cite{Lotker2008RentLO} and gave an optimal randomized strategy with an $e/(e-1)$ competitive ratio in the additive model and with an $e$ competitive ratio in the non-additive model. Later, Levi and Patt-Shamir \cite{Levi2013NonadditiveTS} improved the competitive ratio for the non-additive model with two slopes. Fujiwara et al. \cite{Fujiwara2011OnTB} explored the upper and lower bounds for the competitive ratio in the $k$ slope setting for small values $k$. The analysis for different variants of the ski rental problem can be found in \cite{Khanafer2013TheCS}\cite{Khanafer2015ToRO}\cite{Madry2011TheSS}.

As discussed, Dinitz et al.~\cite{Dinitz2024ControllingTR} is the first to consider controlling the tail risk for the classical ski rental problem. They gave the optimal randomized strategy and analyzed the structure of the optimal solution, which is significantly different from the no tail constraint case.
Subsequently, Christianson et al.~\cite{christiansonCOLT} introduced a related type of tail bound called the \emph{conditional} value-at-risk (as opposed to the \emph{absolute} tail bounds of~\cite{Dinitz2024ControllingTR}). 
This type of tail bound turns out to be more amenable to analysis, and they were able to derive a number of interesting bounds and algorithms.

\subsection{Paper outline}
We begin in Section~\ref{sec:Preliminaries} by formally defining the problem and important concepts. We expand upon this in Appendix~\ref{appendix:zero suffix and zero interval}, where we prove important properties of zero intervals and zero suffixes that will be useful for our main proofs.  We then explore the structure of the optimal purchase distributions in Section~\ref{sec:structure of optimal solutions}, proving that the greedy algorithm gives an optimal solution (Theorem~\ref{our contribution:thm 1, greedy gives optimal}) in Section~\ref{subsec: optimal solution by greedy algorithm} and then argue about the structure of optimal solutions in the two different regimes of $\delta$ (Theorems~\ref{our contribution: thm 2, general, large delta} to~\ref{our contribution:thm 5, closely related greedy, all delta}) in Section~\ref{subsec: general structure of optimal solutions}. In Appendix~\ref{sec:algorithms} we give our two algorithms for computing optimal purchase distributions (Theorem~\ref{our contribution:thm 6, binary search} and~\ref{our contribution:thm 7, LP}) and discuss the trade-off between the algorithms. Finally, in Appendix~\ref{sec:Simulations} we show different structures by exploring different parameters experimentally.

\section{Preliminaries}
\label{sec:Preliminaries}

We restate our model here. We consider the ski rental problem with tail constraints in the two states setting (defined in Section~\ref{sec:introduction}). State 1, where we call it the rental state, is to pay one unit of rental cost per time unit. To leave state 1 and reach state 2, one needs to pay a one-time cost $1-a$ and then the rental rate drops to $a$, for some $0 \leq a < 1$. If we see the geometric interpretation of the model, it consists of two intersecting lines with the intersection point $(1,1)$. If the adversary chooses the stopping time $t \leq 1$, then the optimal strategy is to stay in state 1 from the beginning, with an optimal cost $t$. If the adversary chooses $t > 1$ to stop, then the optimal strategy is to switch to state 2 at the beginning, with an optimal cost $1-a+at$. After one moment of thinking, the best deterministic algorithm is $(2-a)$-competitive. Our task is to determine when to switch to state 2 by giving the purchase distribution, at the same time not violate the tail constraints $(\gamma,\delta)$ where generally we have $\gamma \geq 2-a$ and $0 \leq \delta \leq 1$.

\subsection{Bad Interval}
Let $\{f_t\}_{t \geq 0}$ be the purchase distribution function, where $f_{\infty}$ is the probability that we never buy. Note that $x,t \in \mathbb{R^{+}} \cup \{\infty\}$. Let $\alpha(t,x)$ be the competitive ratio of the algorithm when adversary chooses for $x$ to be the last time of skiing and we buy at the time $t$, and $a$ is the slope of the second option. Then we have
\[\alpha(t,x) = \begin{cases} 
      \frac{t+1-a+a(x-t)}{x},  & t \leq x \ \text{and} \ 0 \leq x \leq 1 \\
      1,  & t > x \ \text{and} \ 0 \leq x \leq 1 \\
      \frac{t+1-a+a(x-t)}{1-a+ax},  & t \leq x \ \text{and} \ x>1 \\
      \frac{x}{1-a+ax},  & t > x \ \text{and} \ x>1.
   \end{cases}\]
Together with the tail constraint $(\gamma,\delta)$ where $\gamma \geq 2-a$ and $0 \leq \delta \leq 1$, we define the bad interval of a point $x$.

\begin{defn}
    \label{def: def of bad interval}
    The bad interval $I_{\gamma}(x)$ consists of all time $t$ such that if we buy at time $t$ and adversary chooses time $x$, then the competitive ratio is larger than $\gamma$. In other words,
    \[I_{\gamma}(x)=\{t:\alpha(t,x) > \gamma\}=\begin{cases} 
      \{t \leq x: \frac{t+1-a+a(x-t)}{x}> \gamma\},  & \text{for} \ 0 \leq x \leq 1 \\
      \{t \leq x: \frac{t+1-a+a(x-t)}{1-a+ax} > \gamma\} \cup \{t>x:\frac{x}{1-a+ax}>\gamma\},  & \text{for} \ x>1
   \end{cases}\]
\end{defn}

For every point $x$, we want to simplify and explicitly write down the formula of its bad interval. We note that bad intervals may have different forms. For sake of completeness, we denote two bound points $I_3 \coloneqq \frac{(\gamma-1)(1-a)}{1-a\gamma}$ and $L_b=\frac{\gamma-1}{1-a\gamma}$, where we show in Appendix~\ref{appendix:bad intervals in appendix} that $I_3$ is the last point whose bad interval is empty and $L_b$ is the largest point not contained in any bad interval that stretches out to $\infty$. We defer the calculation and more discussions (bounds of different bad intervals) to Appendix~\ref{appendix:bad intervals in appendix}.

\subsection{Feasible solution and Optimal solution}

\begin{defn}
    \label{def: def of feasible distribution}
    A distribution $f$ is feasible if it satisfies the constraint with respect to $(\gamma, \delta)$ where $\gamma \geq 2-a$ and $0 \leq \delta \leq 1$. We call $f$ a feasible solution if the distribution is feasible.
\end{defn}

The following lemma gives a way to check the feasibility of a distribution.

\begin{lemma}
    \label{lem: f feasible formula}
    $f$ is feasible if and only if $\int_{t \in I_{\gamma(x)}} f_t \, dt \leq \delta$ for all $x \in \mathbb{R^{+}} \cup \{\infty\}$.
\end{lemma}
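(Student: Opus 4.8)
\textbf{Proof proposal for Lemma~\ref{lem: f feasible formula}.}

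The plan is to show that the condition ``$\int_{t \in I_{\gamma}(x)} f_t\, dt \leq \delta$ for all $x$'' is literally a restatement of the $(\gamma,\delta)$-tail constraint, by carefully unpacking what the tail constraint means. Recall that the $(\gamma,\delta)$-tail constraint requires that the (random) competitive ratio exceeds $\gamma$ with probability at most $\delta$, where the randomness is over the buying time $t$ drawn from $f$, and this must hold \emph{for every} choice of stopping time $x$ made by the adversary (since the adversary is worst-case). So first I would fix an arbitrary stopping time $x \in \mathbb{R}^{+} \cup \{\infty\}$ and compute $\Pr_{t \sim f}[\alpha(t,x) > \gamma]$.

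Next, I would observe that the event $\{\alpha(t,x) > \gamma\}$, viewed as a subset of the sample space of buying times, is by Definition~\ref{def: def of bad interval} exactly the bad interval $I_{\gamma}(x)$. Hence $\Pr_{t \sim f}[\alpha(t,x) > \gamma] = \int_{t \in I_{\gamma}(x)} f_t\, dt$ (including any atom of $f$ at $\infty$ if $\infty \in I_{\gamma}(x)$, and with the convention that the integral is really the $f$-measure of the set $I_\gamma(x)$). Therefore the tail constraint at stopping time $x$ holds if and only if $\int_{t \in I_{\gamma}(x)} f_t\, dt \leq \delta$. Since the adversary may choose any $x$, the distribution $f$ is feasible (Definition~\ref{def: def of feasible distribution}) if and only if this inequality holds for all $x \in \mathbb{R}^{+} \cup \{\infty\}$, which is precisely the claimed characterization. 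The forward and backward directions then both follow immediately from this equivalence applied uniformly over $x$.

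The argument is essentially a bookkeeping exercise in translating between the probabilistic phrasing of the tail constraint and the set-theoretic phrasing via bad intervals, so there is no substantial obstacle. The one point that requires a little care is the treatment of the atom at $\infty$ and, more generally, making sure that ``$\int_{t \in I_{\gamma}(x)} f_t\, dt$'' is interpreted as the $f$-measure of $I_\gamma(x)$ rather than as a Riemann integral of a density (since $f$ may have atoms, e.g.\ at $\infty$ or at discontinuity points); once that convention is fixed the equivalence is exact. It is also worth noting explicitly that the quantifier ``for all $x$'' on the right-hand side corresponds exactly to the worst-case (adversarial) nature of the stopping time, which is what lets us move the quantifier outside without loss.
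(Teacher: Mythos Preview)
Your proposal is correct and is exactly the paper's approach: the paper's proof is the single line ``Follows directly from the definition,'' and you have simply spelled out that unpacking in detail. There is nothing to add.
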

\begin{proof}
    Follows directly from the definition.
\end{proof}

Now, we can discuss if a feasible solution $f$ is an optimal solution. Given a feasible solution~$f$, let $\alpha_f(x)$ be the expectation of $\alpha(t,x)$ where $t$ is drawn from $f$.

If $0 \leq x \leq 1$,
\[\alpha_f(x) = \int_{t=0}^{\infty} \alpha(t,x)f_t \, dt = \int_{t=0}^{x} \frac{t+1-a+a(x-t)}{x}f_t \,dt + \left(\int_{t=x}^{\infty} f_t \,dt + f_\infty\right).\]

If $1 \leq x$,
\[\alpha_f(x) = \int_{t=0}^{x} \frac{t+1-a+a(x-t)}{1-a+ax}f_t \,dt + \left(\int_{t=x}^{\infty} \frac{x}{1-a+ax}f_t \,dt + \frac{x}{1-a+ax}f_\infty\right).\]

It is not hard to see that $\alpha_f(x)$ is continuous.

We say that a feasible solution is optimal if it gives the best worst-case competitive ratio. Let $f^{*}$ be an optimal solution. By Lemma~\ref{lem: f feasible formula}, $f^{*}$ is the distribution $f$ that
\begin{mini*}|s|
{}{max_{x} \ \alpha_f(x)}
{}{}
\addConstraint{\int_{t \in I_{\gamma(x)}}f_t \, dt \leq \delta.}
{}
\end{mini*}

For some point $x$, if the above inequality constraint is an equality constraint, then it is called tight. One can forward to the next subsection for the formal definition of tightness.

\subsection{Tightness of the mass constraint and the Competitive Ratio constraint}

Let $f$ be a feasible solution. Denote $OPT \coloneqq \text{max}_x \alpha_{f}(x)$. Note that we have a closed analytical form of $OPT$ if we have $OPT = \alpha_{f}(\infty)$. In this case, $OPT$ can be computed as
\[OPT = \alpha_{f}(\infty)=1 \cdot (1-f_{\infty}) + \frac{1}{a} \cdot f_{\infty} = 1 + \left(\frac{1}{a}-1\right)f_{\infty}.\]

A useful observation is that both the competitive ratio and mass at time $\infty$ are functions only of the mass at $\infty$. Now we define both the tightness of the mass constraint and the competitive ratio ($CR$) constraint.
\begin{defn}
    \label{def: Tight mass constraint}
    We say that the mass constraint is tight at $x$ if $\int_{t \in I_{\gamma(x)}}f_t \, dt = \delta$. In short, we say that it is mass tight at time $x$ (or, $x$ is mass tight).
\end{defn}

\begin{defn}
    \label{def: Tight CR constraint}
    We say that the competitive ratio constraint is tight at $x$ if $\alpha_{f}(x) = OPT$. In short, we say that it is $CR$ tight at time $x$ (or, $x$ is $CR$ tight).
\end{defn}

\subsection{Zero Suffix and Zero Interval}
\label{sec:Zero Suffix and Zero Interval}

There are two special structures in a feasible solution $f$ that have nice properties, which we call the zero suffix and the zero interval. We explore such properties in Appendix~\ref{appendix:zero suffix and zero interval}.

\section{The Structure of Optimal Solutions}
\label{sec:structure of optimal solutions}

In this section, we will prove the structure of optimal solutions. We assume that the true OPT is given and answer the structural question we mentioned in Section~\ref{sec:introduction} for the optimal solutions. In Section~\ref{subsec: optimal solution by greedy algorithm} we will argue that there is an optimal solution by greedily placing mass without violating any mass or $CR$ constraints, as argued in Theorem~\ref{our contribution:thm 1, greedy gives optimal}. Moreover, we prove that every optimal solution must have the same prefix. In section~\ref{subsec: general structure of optimal solutions} we will explore structures of optimal solutions in large and small $\delta$ regimes, and prove the corresponding structure theorems, as stated in Theorem~\ref{our contribution: thm 2, general, large delta} to~\ref{our contribution:thm 5, closely related greedy, all delta}.

\subsection{Optimal Solution by the Greedy algorithm}
\label{subsec: optimal solution by greedy algorithm}

We prove that at time $1$ the mass constraint cannot be tight (since we show that its bad interval is empty) in Appendix~\ref{appendix:bad intervals in appendix} and that the $CR$ constraint at time $1$ must be tight (Theorem~\ref{thm: CR tight at time 1}) in any optimal solution.
A consequence of this result is that either the~$CR$ constraint or the mass constraint must be tight in $[0,1]$ (Theorem~\ref{thm: in any opt solution something is tight in [0,1]}). It turns out that the greedy algorithm, which places one unit of mass to make either~$CR$ or mass constraint tight at each point until it is not allowed to do so, gives an optimal solution to the problem (Theorem~\ref{our contribution:thm 1, greedy gives optimal}). 
Due to space constraints, full proofs of the lemmas and theorems stated in this section are deferred to Appendix~\ref{appendix:proofs of theorems in section 3.1 in appendix}.

\begin{restatable}{lemma}{FirstTightPointHasMass} \label{lem: first CR tight point has nonzero mass}
    Let $T \geq 0$ be some finite point that is not $CR$ tight.
    Then for any feasible solution $f$, if time~$t < \infty$ is the first $CR$ tight point after $T$, then there is nonzero mass at time $t$.
\end{restatable}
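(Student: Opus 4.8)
The plan is to argue by contradiction, supposing that $f_t = 0$ even though $t < \infty$ is the first $CR$-tight point after $T$. The key structural fact to exploit is that $\alpha_f(x)$ is continuous (as noted in the preliminaries) and that, on the interval $(T, t)$, no point is $CR$-tight, so $\alpha_f(x) < OPT$ strictly for every $x \in [T, t)$ while $\alpha_f(t) = OPT$. I want to contradict the continuity (or at least the ``first tight point'' assumption) by showing that if there is no mass exactly at $t$, then $\alpha_f$ cannot actually reach $OPT$ at $t$ without having already reached it slightly before $t$.

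First I would write down how $\alpha_f$ behaves near $t$. There are two cases depending on whether $t \le 1$ or $t > 1$, but in both cases the formula for $\alpha_f(x)$ splits into an integral over $[0,x]$ (the contribution of mass placed at or before the stopping time, which incurs the switching cost $1-a$) and a term for the mass in $(x,\infty]$ (which never switches by time $x$). If $f_t = 0$, then for $x$ slightly less than $t$ the set of mass that has ``already switched'' differs from the set at $x = t$ only by the mass in $(x, t)$, and a small neighborhood $(t-\epsilon, t)$ carries mass tending to $0$ as $\epsilon \to 0$ precisely because $f_t = 0$ (one has to be a bit careful: ``$f_t = 0$'' should be read in the discrete setting as no point mass at $t$, or in the continuous setting via a zero interval argument — Appendix~\ref{appendix:zero suffix and zero interval} gives the relevant properties). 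I would then compare $\alpha_f(t)$ with $\alpha_f(t - \epsilon)$ term by term: the denominators $x$ or $1-a+ax$ are continuous and bounded away from $0$ near $t$, the switching-cost integrals differ by $o(1)$, so $\alpha_f(t-\epsilon) \to \alpha_f(t) = OPT$ as $\epsilon \to 0$.

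The point is that this limiting behavior is not by itself a contradiction, so the real work is to show the convergence is \emph{monotone from below in a way that forces an earlier tight point}, or alternatively to show $\alpha_f(t-\epsilon) \ge OPT - (\text{something that is } o(\epsilon)\text{ but with the right sign})$. Concretely, I expect that one can show $\alpha_f(x)$ restricted to a small left-neighborhood of $t$ with no mass at $t$ is either nondecreasing up to $t$ or has bounded one-sided derivative, so that $\alpha_f(t) = OPT$ together with $\alpha_f(x) < OPT$ on $[T,t)$ is impossible unless $t = T$ or unless mass accumulates at $t$. The cleanest route is probably: if $f$ has no mass at $t$, then on a neighborhood $(t-\epsilon, t+\epsilon)$ containing no mass, $\alpha_f(x)$ equals an explicit smooth function whose form I can read off from the piecewise definition (a ratio of an affine function of $x$ over $x$ or over $1-a+ax$, plus a constant times the tail mass) — and that explicit function, when it attains a local maximum equal to $OPT$ at an interior point $t$, must have $\alpha_f'(t) = 0$, which propagates the value $OPT$ to a left-neighborhood, contradicting that $t$ is the \emph{first} $CR$-tight point after $T$.

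The main obstacle I anticipate is handling the interaction with the zero-interval structure and the case split at $x = 1$ cleanly: the ``explicit smooth function'' describing $\alpha_f$ changes form at $x=1$, and a zero interval ending exactly at $t$ needs the properties developed in Appendix~\ref{appendix:zero suffix and zero interval} to guarantee there is genuinely a mass-free neighborhood on the left of $t$ (not just $f_t=0$ at the single point). I would lean on those appendix lemmas to reduce to the case of a genuine zero interval $(t', t)$ with $T \le t'$, and then the monotonicity/derivative argument above applies on that interval. I expect the calculation that $\alpha_f$ is monotone (or has the right-signed derivative) on a mass-free interval to be short given the explicit piecewise formulas, so the bulk of the care is in the reduction, not the analysis.
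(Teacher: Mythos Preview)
Your overall approach---use the explicit form of $\alpha_f$ on a mass-free stretch, which has derivative of fixed sign, and derive a contradiction from $t$ being both the first tight point after $T$ and a global maximum of $\alpha_f$---is exactly the paper's. The paper works discretely and phrases the punchline as ``the monotone function $g$ must keep increasing past $t$, so $\alpha_f(t+\tau)>OPT$, violating feasibility.''

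The one place your plan goes astray is the reduction step. You correctly flag as an obstacle that $f_t=0$ at a single point does not obviously give a mass-free \emph{neighborhood}, but your proposed fix---leaning on the appendix lemmas to manufacture a genuine zero interval $(t',t)$---does not work: those lemmas describe how $\alpha_f$ behaves \emph{on} a zero interval, they do not produce one. Fortunately no such reduction is needed. In the discrete setting, $f_t=0$ alone already makes the prefix sums $\sum_{s\le t} f_s$, $\sum_{s\le t} s f_s$ and the suffix sum $\sum_{s>t} f_s$ coincide with their values at $t-\tau$, so $\alpha_f$ agrees with the smooth monotone function $g$ at \emph{both} $t-\tau$ and $t$, regardless of whether $f_{t-\tau}=0$. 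From $g(t-\tau)=\alpha_f(t-\tau)<OPT=\alpha_f(t)=g(t)$ you get $g'>0$, hence $g(t+\tau)>OPT$; and since placing mass at $t+\tau$ can only raise $\alpha_f(t+\tau)$ above $g(t+\tau)$, feasibility is violated at $t+\tau$. The paper's somewhat informal ``without loss of generality $f_{t-\tau}=0$'' is expressing precisely this observation. So your derivative/local-max argument is sound, but you should look \emph{right} to $t+\tau$ for the contradiction rather than trying to propagate leftward through a neighborhood you cannot guarantee exists.
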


Based on Lemma~\ref{lem: first CR tight point has nonzero mass}, we can explore the tightness at time 1 in any optimal solution.

\begin{restatable}{thm}{CRtightAtOne}
    \label{thm: CR tight at time 1}
    Let $f$ be any optimal solution. Then the competitive ratio constraint is tight at time~1.
\end{restatable}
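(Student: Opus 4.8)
The plan is to argue by contradiction. Suppose $f$ is an optimal solution but $\alpha_f(1) < OPT$; I will produce a feasible $f'$ with $\sup_x \alpha_{f'}(x) < OPT$, contradicting optimality. The fact that makes this possible is that \emph{time $1$ lies in no bad interval}: plugging $t=1$ into the formula for $\alpha(t,x)$ gives $\alpha(1,x)=1$ for $x\le 1$ and $\alpha(1,x)=1+\frac{1-a}{1-a+ax}\le 2-a\le\gamma$ for $x>1$, so $1\notin I_\gamma(x)$ for every $x$. Consequently, moving any amount of mass onto time $1$ leaves $\int_{t\in I_\gamma(x)}f_t\,dt$ unchanged for all $x$, and deleting mass anywhere only relaxes mass constraints; so every ``shift mass onto time $1$'' move preserves feasibility. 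Since $\alpha_f$ is continuous, $\sup_x\alpha_f(x)=OPT$ is attained on $[0,\infty]$; let $S=\{x:\alpha_f(x)=OPT\}$, a nonempty closed set with $0,1\notin S$.

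The construction then splits on the location of $S$, and in each case uses a small mass shift onto time $1$ (and sometimes a second small shift onto a time near $0$). If $S=\{\infty\}$, move $\varepsilon$ from $\infty$ to time $1$: then $\alpha_{f'}(x)=\alpha_f(x)+\varepsilon\big(\alpha(1,x)-\alpha(\infty,x)\big)$, where $\alpha(1,x)-\alpha(\infty,x)$ is $0$ for $x\le1$ and $\frac{(1-a)(2-x)}{1-a+ax}$ for $x>1$ --- positive only on $(1,2)$ --- while $\alpha_{f'}(\infty)=OPT-(\tfrac1a-1)\varepsilon<OPT$; since $\alpha_f<OPT$ on the compact set $[1,2]$, small $\varepsilon$ finishes it. Otherwise $S$ contains a finite point; let $t^*$ be the first finite $CR$-tight point, so $t^*\ne1$, and since $0$ is not $CR$ tight, Lemma~\ref{lem: first CR tight point has nonzero mass} guarantees $f_{t^*}>0$, i.e. there is mass at $t^*$ to move. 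Moving $\varepsilon$ from $t^*$ onto time $1$: the decisive computation is the sign of $\alpha(1,x)-\alpha(t^*,x)$, which for $t^*>1$ reduces to the sign of $(1-a)(1-t^*)$ on $\{x\ge t^*\}$, so the shift weakly lowers $\alpha_f$ on $[t^*,\infty)$ (strictly at $t^*$) and can only raise it on $(1,\min\{t^*,2\})\subseteq[0,t^*)$, where $\alpha_f<OPT$ with slack; for $t^*<1$ the shift strictly lowers $\alpha_f(t^*)$, leaves $[0,t^*]$ and $\infty$ unchanged, and raises $\alpha_f$ only on $(t^*,\infty)$ by an amount that decays to $0$ as $x\to\infty$. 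Finally, since buying at time $0$ is optimal against every $x>1$ (i.e. $\alpha(0,x)=1$), a second small shift onto a time near $0$ lowers $\alpha_f$ on all of $[1,\infty)$; running the appropriate combination of these shifts, tuned in magnitude, makes every point of $S$ strictly improve while the bounded ``damage'' intervals stay strictly below $OPT$, the desired contradiction.

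The main obstacle is this last combining step, specifically when $t^*<1$ coexists with a second tight point in $(1,\infty]$ (and, relatedly, when $S$ contains an entire interval of $CR$-tight points, on which $f$ is forced to behave exponentially): no single mass shift helps both kinds of tight point at once, so two or more perturbations must be performed simultaneously, and the shift onto a time near $0$ is \emph{not} automatically feasible, since times near $0$ lie in the bad intervals of small stopping times. Resolving this requires sourcing the perturbing mass from the positive mass guaranteed at $t^*$ by Lemma~\ref{lem: first CR tight point has nonzero mass}, bounding the bad-interval overlap near $0$, choosing the relative sizes of the perturbations with care, and dispatching the few degenerate configurations separately; this bookkeeping is where essentially all of the work lies, the geometric computations above being routine.
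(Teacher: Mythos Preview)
Your core observation—that $t=1$ lies in no bad interval, so shifting mass onto $1$ never violates a mass constraint—is exactly what drives the paper's proof, and your perturbation strategy begins the same way. The gap is in the ``second shift onto a time near $0$'' that you invoke for the hard case ($t^*<1$ coexisting with tight points in $(1,\infty]$). This is not just bookkeeping: for every $\epsilon>0$ and every $x$ with $\epsilon<x\le\frac{1-a}{\gamma-a}$ we have $\epsilon\in I_\gamma(x)=(0,x]$, and nothing rules out the mass constraint being already tight at some such $x$ in the optimal $f$ under consideration. If it is, and $t^*>x$, then shifting any mass from $t^*$ into $(0,x]$ strictly increases $\int_{I_\gamma(x)}f_t\,dt$ beyond $\delta$, and no choice of $\epsilon$ escapes this. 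A second, independent issue: every shift you list outside the $S=\{\infty\}$ case is between two \emph{finite} times and hence leaves $\alpha_f(\infty)=1+(\tfrac1a-1)f_\infty$ unchanged, so when $\infty\in S$ together with finite tight points, your combination never lowers the supremum.

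The paper avoids both obstacles by \emph{ordering} the shifts rather than superposing them. It works first on $(1,\infty]$: take the first tight point $t_0>1$ (with $f_{t_0}>0$ by Lemma~\ref{lem: first CR tight point has nonzero mass}), shift a little from $t_0$ to $1$, and iterate; if $\infty$ is tight, shift from $\infty$ to $1$ as well. Since $\alpha(1,x)=\alpha(s,x)=1$ for every $x\le1$ and every $s>x$, none of these moves disturbs $\alpha_f$ on $[0,1]$, and after finitely many of them (in the discrete setting) every point of $(1,\infty]$ sits with uniform slack below $OPT$. Only then does one shift a tiny amount from the first remaining tight point $t^0\in[0,1)$ to $1$: this lowers $\alpha_f$ on $[t^0,1]$, raises it on $(1,\infty)$ by $O(\varepsilon)$, and leaves $\alpha_f(\infty)$ at its already-slack value. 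No mass ever lands near $0$, so the feasibility obstruction never arises. Reorganized this way your argument goes through; the detour through times near $0$ is the step that does not.
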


Now, we can use Theorem~\ref{thm: CR tight at time 1} to show that in any optimal solution, either the mass constraint or the $CR$ constraint is tight in $[0,1]$.

\begin{restatable}{thm}{SomethingTightInZeroToOne}
    \label{thm: in any opt solution something is tight in [0,1]}
    Let $f$ be any optimal solution. Then for any point in $[0,1]$, either the competitive ratio constraint or the mass constraint is tight.
\end{restatable}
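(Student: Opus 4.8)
The plan is a proof by contradiction via a local mass-shift. Suppose $f$ is optimal but some point of $[0,1]$ has neither constraint tight. Since the $CR$ constraint is tight at time $1$ in every optimal solution (Theorem~\ref{thm: CR tight at time 1}), such a point lies in $[0,1)$; and since $\alpha_f$ is continuous (hence so is the map $x\mapsto\int_{I_\gamma(x)}f_t\,dt$), the set of non-tight points is open in $[0,1]$, so if it meets $[0,1)$ it contains a point $x_0\in(0,1)$, which we fix. Let $t^*\in(x_0,1]$ be the first $CR$-tight point strictly after $x_0$ — well-defined because the $CR$-tight set is closed and contains $1$. The goal is to build a feasible perturbation $f'$ of $f$, obtained by shifting a tiny amount of mass leftward inside a suitable zero interval, such that $\alpha_{f'}(y)\le OPT$ for all $y$ but $\alpha_{f'}(1)<OPT$; by Theorem~\ref{thm: CR tight at time 1} this is contradictory, since then $f'$ is either suboptimal (if $\sup_y\alpha_{f'}(y)<OPT$) or optimal yet not $CR$-tight at $1$.

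\textbf{Locating the mass to shift.} First I would show $f$ places positive mass in $(x_0,t^*)$. If not, then $(x_0,t^*)\subseteq(x_0,1)$ is a zero interval, and for $y$ in it the only mass with buy-time $\le y$ is the mass in $[0,x_0]$, so $\alpha_f(y)=\frac{(1-a)(\mu+M)}{y}+aM+(1-M)$ where $M=\int_{[0,x_0]}f_t\,dt$ and $\mu=\int_{[0,x_0]}t\,f_t\,dt$. This is non-increasing in $y$, so continuity of $\alpha_f$ gives $\alpha_f(y)\ge\lim_{y'\uparrow t^*}\alpha_f(y')=\alpha_f(t^*)=OPT$ for all $y\in(x_0,t^*)$; together with $\alpha_f\le OPT$ everywhere this makes every point of $(x_0,t^*)$ $CR$-tight, contradicting the choice of $t^*$. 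So let $x_1\in(x_0,t^*)$ be the infimum of the support of $f$ above $x_0$, so that $(x_0,x_1)$ carries no mass, and form $f'$ by moving a small mass $\epsilon$ from just above $x_1$ onto some point $x_0'\in(x_0,x_1)$.

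\textbf{Checking $f'$.} Compare, stopping time by stopping time, the buy-time-$x_0'$ option with the buy-time-$x_1$ option using the explicit formula for $\alpha(t,x)$. Since $\alpha(t,x)$ is increasing in $t$ when $t\le x$ and independent of $t$ when $t>x$, the option $x_0'$ is no worse than $x_1$ at every $y\ge x_1$, at every $y\le x_0'$, and at $y=\infty$, so $\alpha_{f'}(y)\le\alpha_f(y)\le OPT$ there; at $y=1$ it is strictly better, the gap being $(1-a)(x_0'-x_1)<0$, so $\alpha_{f'}(1)<OPT$. The only stopping times where the shift hurts are $y\in[x_0',x_1)$, but this set has compact closure inside $(x_0,t^*)$, on which $\alpha_f<OPT$, so for small enough $\epsilon$ we still have $\alpha_{f'}\le OPT$ there. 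For feasibility, shifting mass from $x_1$ to $x_0'$ increases $\int_{I_\gamma(y)}f$ only for $y$ with $x_0'\in I_\gamma(y)$ and $x_1\notin I_\gamma(y)$, which the explicit form of the bad intervals forces to lie in $[x_0',x_1)\subseteq(x_0,x_1)$; and no point $y\in(x_0,x_1)$ can be mass-tight, because $(x_0,y]$ lies in the zero interval while the remaining part $(\ell(y),x_0]$ of $I_\gamma(y)$ — with $\ell(y)$ the left endpoint — is contained in $I_\gamma(x_0)$ by monotonicity of the endpoint, so mass-tightness at $y$ would make $x_0$ mass-tight, a contradiction. Continuity plus compactness then gives a uniform slack below $\delta$ on the relevant interval, so small $\epsilon$ preserves feasibility.

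\textbf{Conclusion and the hard part.} Putting this together, $f'$ is a feasible distribution with $\sup_y\alpha_{f'}(y)\le OPT$ and $\alpha_{f'}(1)<OPT$; if this supremum is $<OPT$ then $f$ was not optimal, and if it equals $OPT$ then $f'$ is optimal and Theorem~\ref{thm: CR tight at time 1} forces $\alpha_{f'}(1)=OPT$ — a contradiction either way. I expect the feasibility check to be the main obstacle: it needs a careful case analysis of the piecewise-defined bad-interval endpoints together with the structure of the zero interval $(x_0,x_1)$. It is precisely the boundary behavior of $\alpha_f$ as $y\uparrow t^*$ (it climbs up to $OPT$ there) that prevents shifting mass directly off of $t^*$ and forces the detour through $x_1$ and the zero interval.
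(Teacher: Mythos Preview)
Your approach is essentially the same as the paper's: both argue by contradiction, moving a small amount of mass leftward from some point in $(x_0,1]$ back toward the non-tight point so that $\alpha_f(1)$ strictly drops, contradicting Theorem~\ref{thm: CR tight at time 1}. The only substantive difference is in how the source of the shift is located. The paper takes $t_2$ to be the first point after $t_1$ at which \emph{either} constraint is tight, shows $f_{t_2}\neq 0$ (via Lemma~\ref{lem: first CR tight point has nonzero mass} if $t_2$ is $CR$-tight, or via the bad-interval structure if $t_2$ is mass-tight), and moves mass from $t_2$ to $t_1$; the feasibility check is then immediate because every $y\in[t_1,t_2)$ is non-tight by construction. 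Your detour through $t^*$ (first $CR$-tight point) and $x_1$ (first support point) is a bit more elaborate but has the virtue that it sidesteps the worry you raise in your last paragraph about $\alpha_f(y)\uparrow OPT$ as $y\uparrow t^*$ --- a worry the paper's discretized setting simply doesn't see.

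One small gap worth flagging: you assert $x_1\in(x_0,t^*)$, but if the support of $f$ accumulates at $x_0$ from the right then $x_1=x_0$ and the interval $(x_0,x_1)$ in which you want to place $x_0'$ is empty. The fix is routine --- take $x_0'=x_0$ and move mass from some $s\in(x_0,x_0+\eta)$; feasibility for $y\in[x_0,s)$ then follows from continuity of the mass map and non-tightness at $x_0$ --- and in the paper's discrete setting the issue cannot arise. Relatedly, your feasibility paragraph analyzes ``shifting mass from $x_1$'' but earlier says ``just above $x_1$''; if the source $s$ is strictly above $x_1$ you also owe an argument for $y\in[x_1,s)$, which is handled by the same continuity-and-compactness reasoning.
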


Given Theorem~\ref{thm: in any opt solution something is tight in [0,1]}, we can use the same technique (by simple induction) as \cite{Dinitz2024ControllingTR} to prove the uniqueness in $[0,1]$ for optimal distributions. This is because for every optimal solution, $[0,1]$ is always either $CR$ or mass tight by Theorem~\ref{thm: in any opt solution something is tight in [0,1]} and we use the fact that the bad interval of $x \in [0,1]$ is a (sub)interval of its prefix.

\begin{restatable}{thm}{UniqueDistributionInZeroToOne}
    \label{thm: unique mass placement in [0,1]}
    If $f$ and $f'$ are optimal solutions, then $f_t = f'_t$ for all $0 \leq t \leq 1$.
\end{restatable}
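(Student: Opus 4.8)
The plan is to mimic the uniqueness argument of Dinitz et al.~\cite{Dinitz2024ControllingTR}, but being careful about which facts transfer. Suppose $f$ and $f'$ are both optimal, and for contradiction let $x_0 \in [0,1]$ be the infimum of the set of points where they disagree; by continuity of the cumulative mass functions we may take $x_0$ itself to be a point where, say, $\int_0^{x_0} f_t\, dt \neq \int_0^{x_0} f'_t\, dt$ (or else push slightly past $x_0$). Without loss of generality assume $\int_0^{x_0} f_t\, dt > \int_0^{x_0} f'_t\, dt$, i.e. $f$ has placed strictly more mass in the prefix $[0,x_0]$. The two solutions agree on $[0,x_0)$ in a cumulative sense, so the discrepancy is ``just past'' $x_0$.

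Next I would invoke Theorem~\ref{thm: in any opt solution something is tight in [0,1]}: at the point $x_0$, in each of $f$ and $f'$ either the CR constraint or the mass constraint is tight. The key structural fact to exploit — as noted in the excerpt — is that for $x \in [0,1]$ the bad interval $I_\gamma(x)$ is a sub-interval of the prefix $[0,x]$, and moreover $\alpha_f(x)$ for $x \le 1$ depends only on the mass placement in $[0,x]$ (the tail $\int_x^\infty f_t\, dt + f_\infty$ enters only through its total, which is $1 - \int_0^x f_t\, dt$). So both the CR value $\alpha_f(x_0)$ and the bad-interval mass $\int_{I_\gamma(x_0)} f_t\, dt$ are determined by $f$ restricted to $[0,x_0]$. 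The cases to run through: (i) if the mass constraint is tight at $x_0$ for the solution with \emph{less} prefix mass, then the solution with \emph{more} prefix mass has strictly more mass in $I_\gamma(x_0) \subseteq [0,x_0]$ — wait, one must check monotonicity, that extra prefix mass relative to an identical prefix cannot be hidden outside $I_\gamma(x_0)$ in a way that matters; here I would use that $I_\gamma(x_0)$ is a \emph{suffix} of $[0,x_0]$ (its left endpoint is some $I_1 \le x_0$, its right endpoint $x_0$), so any extra mass near $x_0$ lands inside it, giving a violation of feasibility — contradiction. (ii) If instead the CR constraint is tight for one of them at $x_0$, I would compare $\alpha_f(x_0)$ with $\alpha_{f'}(x_0)$: moving mass from late points (or $\infty$) to earlier points in $[0,x_0]$ strictly \emph{increases} $\alpha_f(x_0)$ for $x_0 \le 1$ (since early buys contribute ratio close to $1$... actually early buys at $t \le x_0 \le 1$ contribute $\frac{t+1-a+a(x_0-t)}{x_0} = \frac{(1-a)t/x_0 \cdot \text{something}}{}$ — I need to recheck the sign, but in the pure-buy case this is exactly the monotonicity that makes the argument work, and the two-slope correction only rescales), so the one with more prefix mass exceeds $OPT$, contradiction. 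Both cases pin down that the prefixes must agree, and induction (formally, a supremum argument over the disagreement set) extends this to all of $[0,1]$.

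The main obstacle I anticipate is verifying the monotonicity claims with the two-slope cost function — specifically, confirming that for $x \le 1$, shifting probability mass \emph{earlier} within $[0,x]$ (or from the tail into $[0,x]$) monotonically increases both $\alpha_f(x)$ and the mass $\int_{I_\gamma(x)} f_t\, dt$. In the pure-buy setting of~\cite{Dinitz2024ControllingTR} these monotonicities are clean; here the integrand $\frac{t + 1 - a + a(x-t)}{x} = \frac{(1-a)t + 1 - a + ax}{x}$ is still increasing in $t$ (coefficient $(1-a)/x > 0$ since $a < 1$), so earlier mass contributes a \emph{smaller} per-unit ratio than the ``never/late'' contribution of $1$ whenever $t$ is small — hence moving mass earlier could \emph{decrease} $\alpha_f(x)$, which is the direction we'd want for the contradiction in case (ii), but I need to set up the comparison so the sign works out. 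I would handle this by phrasing the induction as: the two solutions have identical mass CDFs on $[0,x_0]$; at the first disagreement, whichever constraint Theorem~\ref{thm: in any opt solution something is tight in [0,1]} makes tight is a functional of the CDF on $[0,x_0]$ alone, so tightness forces the CDFs to agree at $x_0$ too — eliminating the need for a delicate perturbation argument entirely. This ``constraints are prefix-functionals, tightness propagates the CDF'' framing is, I expect, the cleanest route and is exactly the simple induction the authors allude to.
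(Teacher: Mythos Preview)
Your final reformulation --- constraints at $x \in [0,1]$ are prefix-functionals, so tightness (via Theorem~\ref{thm: in any opt solution something is tight in [0,1]}) propagates agreement point by point --- is exactly the paper's approach: discretize with step $\tau$, and induct on $j = \tau, 2\tau, \dots, 1$, showing any optimal $f^*$ must match the greedy value at each step. The earlier continuous infimum-and-case-split attempt is unnecessary and, as you noticed, has sign and monotonicity snags; you should drop it and lead with the clean induction.

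One small gap in your final framing: saying ``tightness forces the CDFs to agree at $x_0$'' is not quite enough, because $f$ and $f'$ could a priori be tight on \emph{different} constraints at $x_0$. The paper closes this by pairing tightness with feasibility: given the prefix is fixed, feasibility gives $f^*_j \leq \min(\text{mass bound}, \text{CR bound})$, while tightness of at least one constraint gives $f^*_j \geq \min(\text{mass bound}, \text{CR bound})$; hence $f^*_j$ equals the min, which is the greedy value and is the same for both $f$ and $f'$. Make that two-sided squeeze explicit and your argument is complete.
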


After we show the structure in $[0,1]$ of an optimal solution, we examine its structure after time~1. The next lemma shows that $CR$ is non-increasing over zero intervals in any optimal solution.

\begin{restatable}{lemma}{CRcanNotIncreaseOverZeroIntervals}
    \label{lemma: CR cannot increase over zero intervals after time 1 in any opt solution}
    In any optimal solution, in any zero interval, the competitive ratio is non-increasing.
\end{restatable}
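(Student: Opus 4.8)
The plan is to work directly with the closed-form expression for $\alpha_f(x)$ when $x > 1$ and show that, over a zero interval (where no probability mass is added), the function $\alpha_f$ can only decrease. Fix an optimal solution $f$ and a zero interval $[u,v]$ with $1 \le u < v$ (the case $u < 1$ can be handled by first noting that $\alpha_f$ is continuous and splitting at $1$; the interesting regime is $x>1$). Recall that for $x \ge 1$,
\[
\alpha_f(x) = \frac{1}{1-a+ax}\left( \int_{t=0}^{x} \bigl(t + 1 - a + a(x-t)\bigr) f_t\, dt + \int_{t=x}^{\infty} x\, f_t\, dt + x f_\infty \right).
\]
The idea is to differentiate this with respect to $x$ on the interval $(u,v)$. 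Since $[u,v]$ is a zero interval, $f_x = 0$ for $x \in (u,v)$, so the "boundary term" from the variable limits of integration vanishes: moving $x$ to the right simply shifts one $t=x$ term from the "$t>x$" bucket to the "$t\le x$" bucket, but both contribute the same value $x$ at $t=x$ (continuity of $\alpha(t,x)$ in $t$ at $t=x$ when $x>1$), and in any case it is multiplied by $f_x=0$.

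Concretely, I would write $\alpha_f(x) = N(x)/D(x)$ with $D(x) = 1-a+ax$ and compute $N'(x)$ on $(u,v)$. Differentiating inside the integrals (the integrands that survive are $a f_t$ for $t \le x$ and $f_t$ for $t > x$, plus $f_\infty$), one gets
\[
N'(x) = a\int_{t=0}^{x} f_t\, dt + \int_{t=x}^{\infty} f_t\, dt + f_\infty = a\, F(x) + \bigl(1 - F(x) - f_\infty\bigr) + f_\infty = 1 - (1-a)F(x),
\]
where $F(x) = \int_0^x f_t\,dt$ is the CDF. Then $\alpha_f'(x) = \frac{N'(x)D(x) - N(x)D'(x)}{D(x)^2} = \frac{N'(x)D(x) - a N(x)}{D(x)^2}$, and since $N(x) = \alpha_f(x) D(x)$, the sign of $\alpha_f'(x)$ is the sign of $N'(x) - a\,\alpha_f(x)$, i.e. of $1 - (1-a)F(x) - a\,\alpha_f(x)$. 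So the lemma reduces to showing
\[
\alpha_f(x) \;\ge\; \frac{1 - (1-a)F(x)}{a} \qquad \text{for } x \in (u,v).
\]

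The main obstacle is establishing this last inequality, and here is where optimality (not just feasibility) should enter. One clean route: compare $\alpha_f(x)$ against $\alpha_f(\infty) = 1 + (\tfrac1a - 1)f_\infty$, which is a lower bound on $OPT$ but also a value $\alpha_f$ "aims for"; alternatively, observe that the claimed bound is exactly the statement that the contribution to $\alpha_f(x)$ from the never-buy-by-time-$x$ mass (which pays ratio $\tfrac{x}{1-a+ax} \ge \tfrac1a$ roughly) already forces $\alpha_f(x)$ up to at least $\tfrac{1-(1-a)F(x)}{a}$. Indeed each $t \le x$ contributes ratio at least $1$ and the weight $1 - F(x)$ at $t > x$ or $t=\infty$ contributes ratio at least... — I would check whether the crude bound $\alpha(t,x) \ge a$-weighted average already gives it, and if not, invoke that in an optimal solution $\alpha_f(1)=OPT$ (Theorem~\ref{thm: CR tight at time 1}) together with the behavior of $\alpha_f$ just after $1$ to pin down $F$. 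A second, possibly cleaner approach avoids calculus entirely: take $x_1 < x_2$ in the zero interval and write $\alpha_f(x_2) - \alpha_f(x_1)$ as an explicit expression in $F(x_1)$ (note $F$ is constant $= F(x_1)$ throughout, since it is a zero interval!) — this should collapse to something manifestly nonpositive after using $F(x_1) \le 1$ and perhaps the fact that $CR$ is tight somewhere before the interval. I expect the zero-interval hypothesis ("$F$ is constant on $[u,v]$") to do most of the work, with optimality needed only to rule out a degenerate case where $\alpha_f$ is flat or where too little mass has been placed before $u$; I would first try to prove it from feasibility plus $F$ constant alone, and only bring in optimality (via Theorems~\ref{thm: CR tight at time 1} and~\ref{thm: in any opt solution something is tight in [0,1]}, which control $F$ on $[0,1]$) if a counterexample to the pure-feasibility version appears.
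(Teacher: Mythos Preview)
Your derivative computation is correct and your reduction to the inequality $\alpha_f(x) \ge \frac{1-(1-a)F(x)}{a}$ is equivalent to the paper's condition $c - aE \le 0$ (with $c$ the mass after the interval and $E=\int_0^u t f_t\,dt$). The gap is in closing that inequality. Your ``crude bound'' route cannot work: for finite $x$ one has $\frac{x}{1-a+ax} < \frac{1}{a}$, not $\ge$, so bounding the after-$x$ contribution this way gives exactly the wrong direction. Your hope that feasibility plus ``$F$ constant'' would suffice is also wrong in a non-degenerate way: any feasible distribution that parks a large fraction of its mass well past time~$1$ (say half the mass at $t=10$ and half before $1$) has $c_1 > aE_1$, so $\alpha_f$ genuinely increases over a zero interval just after~$1$. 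Optimality is doing essential work here, not just ruling out an edge case.

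The missing idea is the one you gesture at but do not execute. From $\alpha_f(1)=OPT$ one reads off $E_1 = \frac{OPT-1}{1-a}$. Now consider the value $\alpha_f$ would take at $1+\tau$ if there were zero mass there; since placing mass at $x$ only raises $\alpha_f(x)$, this ``zero-mass'' value is at most the actual $\alpha_f(1+\tau)\le OPT=\alpha_f(1)$. A one-line expansion shows this zero-mass difference equals $\frac{(1-a)\tau}{1+a\tau}(c_1 - aE_1)$, forcing $c_1 \le aE_1$. Then for \emph{any} later zero interval $[u,v]$ one has $c_u \le c_1$ and $E_u \ge E_1$, hence $c_u - aE_u \le c_1 - aE_1 \le 0$, and monotone non-increase follows. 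This two-step argument (anchor at $x=1$, then monotonicity of $c$ and $E$) is what your proposal is missing; without it the key inequality is simply unproven.
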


Now, let us introduce the definition of a tight prefix.

\begin{defn}
    \label{def: def of tight prefix}
    We say that $[0,t]$ is a tight prefix of a solution $f$ if every point in $[0,t]$ is either $CR$ tight or mass tight.
\end{defn}

Thus, $[0,1]$ is a tight prefix for any optimal solution $f$. 
One natural way to obtain a tight prefix is to use the \emph{greedy} algorithm---informally, we iterate through the times starting at time $0$ and greedily place mass to make either the $CR$ constraint or mass constraint tight at each time step. Technically, we do a special check at time right after point~1. For a formal definition of the greedy algorithm, see Definition~\ref{def: def of the greedy algorithm} in Appendix~\ref{appendix:structure proofs in appendix}.

Then, a natural question is to ask if the greedy algorithm gives an optimal distribution. First, one might worry that by placing probability mass greedily in $[0,1]$, there is actually some time $T > 1$ where the $CR$ already exceeds $OPT$.  Fortunately, we show that this cannot happen.

\begin{restatable}{lemma}{GreedyInZeroToOneIsSafe}
    \label{lem: greedy in [0,1] is safe for points passed 1}
    By placing mass greedily in $[0,1]$, there is no point $T>1$ where the competitive ratio already exceeds $OPT$.
\end{restatable}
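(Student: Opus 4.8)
The plan is to compare, restricted to where the greedy distribution places mass after its first phase, the competitive ratio against a stopping time $T>1$ with the competitive ratio against stopping time $1$. Once greedy has processed all of $[0,1]$, the mass it has placed lies entirely in $[0,1]$, so the contribution to the competitive ratio at a point $T>1$ from the mass placed so far is $\int_{0}^{1}\alpha(t,T)\,f_t\,dt$; since $\alpha(\cdot,T)\ge 0$, any mass placed later can only add to this, so it suffices to show $\int_{0}^{1}\alpha(t,T)\,f_t\,dt\le OPT$. I would first establish the pointwise bound $\alpha(t,T)\le\alpha(t,1)$ for all $0\le t\le 1$ and $T\ge 1$: since $t\le 1<T$ we have $\alpha(t,T)=\frac{t+1-a+a(T-t)}{1-a+aT}=1+\frac{(1-a)t}{1-a+aT}$ while $\alpha(t,1)=1+(1-a)t$, and $1-a+aT\ge 1$ together with $(1-a)t\ge 0$ gives the inequality. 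Integrating against $f$ yields $\int_{0}^{1}\alpha(t,T)\,f_t\,dt\le\int_{0}^{1}\alpha(t,1)\,f_t\,dt$ for every $T>1$.

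It then remains to bound the right-hand side by $OPT$. For $x\le 1$ the quantity $\alpha_f(x)=\int_{0}^{x}\frac{t+1-a+a(x-t)}{x}f_t\,dt+\bigl(1-\int_{0}^{x}f_t\,dt\bigr)$ depends only on the mass placed in $[0,x]$, so greedy's left-to-right sweep never retroactively raises $\alpha_f(x)$ at an already-processed point $x$; together with the fact that at each $x$ greedy raises $f_x$ only until the CR-constraint or the mass-constraint becomes tight, this gives $\alpha_f(x)\le OPT$ at every $x\in[0,1]$ throughout the first phase, hence $\alpha_f(1)\le OPT$ (in fact $\alpha_f(1)=OPT$ by Theorem~\ref{thm: CR tight at time 1} and the special check greedy performs just after $1$). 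Since $\alpha_f(1)=\int_{0}^{1}\alpha(t,1)\,f_t\,dt+\bigl(1-\int_{0}^{1}f_t\,dt\bigr)\ge\int_{0}^{1}\alpha(t,1)\,f_t\,dt$, chaining the displays gives $\int_{0}^{1}\alpha(t,T)\,f_t\,dt\le\alpha_f(1)\le OPT$ for all $T>1$, which is the claim.

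The step I expect to need the most care is making precise what ``the competitive ratio already exceeds $OPT$'' should mean while $f$ is still only a sub-probability measure: the right notion counts only the mass placed so far, and it is exactly this statement that is non-vacuous, since $\alpha(\cdot,T)\ge 0$ means future placements can only push the ratio up. One must resist accounting for the not-yet-placed mass by imagining it sitting at $\infty$: the pointwise comparison genuinely fails at $\infty$, where $\alpha(\infty,T)=\frac{T}{1-a+aT}>1=\alpha(\infty,1)$ for $T>1$, and indeed in the small-$\delta$ regime the leftover mass cannot be left at $\infty$. Confining the comparison to $[0,1]$---which contains all the mass placed in the first phase but none of the leftover mass---is what makes the argument work.
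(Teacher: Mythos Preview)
Your pointwise comparison $\alpha(t,T)\le\alpha(t,1)$ on $[0,1]$ is clean and correct, but you have proved the wrong statement. You interpret ``the competitive ratio already exceeds $OPT$'' as the partial sum $\int_0^1\alpha(t,T)f_t\,dt$ exceeding $OPT$, and you explicitly argue against accounting for the not-yet-placed mass. But the lemma exists solely to feed Corollary~\ref{cor: zero mass at 1+tau will not increase the CR}, whose proof is literally ``take $T=1+\tau$ in the lemma.'' That corollary asks whether placing zero mass at $1+\tau$ keeps $\alpha_f(1+\tau)\le OPT$, and $\alpha_f(1+\tau)$ is computed for a full probability distribution---via the rewritten formula $\alpha_f(x)=\frac{x}{1-a+ax}+\sum_{t\le x}\frac{(1-a)(1+t-x)}{1-a+ax}f_t$, which is only valid because the leftover mass $1-P$ sits somewhere after $x$. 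So the quantity the lemma must control is
\[
\int_0^1\alpha(t,T)f_t\,dt+\frac{T}{1-a+aT}(1-P),
\]
not the partial sum alone. Your bound gives $\int_0^1\alpha(t,T)f_t\,dt\le\int_0^1\alpha(t,1)f_t\,dt=OPT-(1-P)$, so you would still need $\frac{T}{1-a+aT}(1-P)\le 1-P$, i.e.\ $T\le 1$, which is exactly the wrong direction. The failure of the pointwise inequality at $t=\infty$ that you flagged is not a technicality to be sidestepped---it is the whole difficulty.

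The paper's argument is genuinely different and does not attempt a pointwise comparison. It first invokes Theorem~\ref{thm: unique mass placement in [0,1]} to identify greedy's mass on $[0,1]$ with that of any optimal solution $f^*$, then uses the feasibility of $f^*$: since $\alpha_{f^*}(T)\le OPT$ and moving $f^*$'s mass from a point $t\in(1,T]$ to beyond $T$ changes the ratio at $T$ by $\frac{T}{1-a+aT}-\alpha(t,T)=\frac{(1-a)(T-1-t)}{1-a+aT}$, which is nonpositive for $T$ close to $1$ (in particular for $T=1+\tau$), the ``greedy partial'' ratio at $1+\tau$ is at most $\alpha_{f^*}(1+\tau)\le OPT$. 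Once the inequality holds at $T=1+\tau$, the zero-interval monotonicity calculation (the sign of $(1-P)-aE$) propagates it to every $T>1$. Your argument never touches the optimal solution or the uniqueness theorem, and without them there is no way to control the leftover-mass term.
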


Let $\tau>0$ be a sufficiently small time step to discretize the domain of the problem. As a corollary, we have the following.

\begin{restatable}{corollary}{ZeroMassAfterTimeOneWillNotIncreaseTheCR}
    \label{cor: zero mass at 1+tau will not increase the CR}
    After mass is assigned greedily in $[0,1]$, placing a zero mass at time $1+\tau$ will not increase the competitive ratio.
\end{restatable}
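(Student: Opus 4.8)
The plan is to reduce this corollary directly to Lemma~\ref{lem: greedy in [0,1] is safe for points passed 1} by analyzing how the competitive ratio at a point $x > 1$ depends on where probability mass is placed. First I would recall the formula for $\alpha_f(x)$ when $x \geq 1$: it splits into the contribution from mass placed at times $t \leq x$ (which contribute $\frac{t+1-a+a(x-t)}{1-a+ax} f_t$) and mass placed at times $t > x$ together with $f_\infty$ (which contribute $\frac{x}{1-a+ax}$ each). The key observation is that after greedily assigning all of the mass in $[0,1]$, the quantity $\alpha_f(x)$ for any fixed $x > 1$ is already completely determined by the mass in $[0,1]$ plus the ``not yet placed'' mass, which at this stage all sits beyond $1+\tau$; in particular the mass in $[0,1]$ contributes its fixed amount and every remaining unit of mass contributes as if it were at $\infty$ (since any $t > x$ gives the same coefficient $\frac{x}{1-a+ax}$ as $t = \infty$). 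So assigning a zero mass at time $1+\tau$ simply leaves all the still-unassigned mass at times strictly greater than $1+\tau$, which does not change $\alpha_f(x)$ for $x \leq 1+\tau$, and for $x > 1+\tau$ only redistributes mass that already contributed at the ``$\le x$'' rate or the ``$>x$''/``$\infty$'' rate.

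Concretely, the steps are: (1) fix an arbitrary $x > 1$ and write $\alpha_f(x)$ after the greedy assignment in $[0,1]$, treating the remaining $1 - \int_0^1 f_t\,dt$ mass as living on $(1,\infty]$; (2) observe that placing zero mass at $1+\tau$ means we do not move any of that remaining mass down to $1+\tau$, so the remaining mass still lies in $(1+\tau,\infty]$, and hence for every $x \le 1+\tau$ the value $\alpha_f(x)$ is unchanged, while for $x > 1+\tau$ the remaining mass still contributes at a coefficient no larger than $\frac{x}{1-a+ax}$ regardless of exactly where in $(1+\tau, x]$ or $(x,\infty]$ it ends up; (3) combine with Lemma~\ref{lem: greedy in [0,1] is safe for points passed 1}, which already guarantees that with the greedy $[0,1]$ assignment and all remaining mass as far right as possible there is no point $T > 1$ with $\alpha_f(T) > OPT$. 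Since putting zero mass at $1+\tau$ is exactly the ``push everything right'' move, the competitive ratio at every point is bounded by $OPT$, i.e.\ it does not increase (it stays at the safe value guaranteed by the lemma).

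The only mild subtlety — and the step I expect to require the most care — is making precise the sense in which ``placing a zero mass at $1+\tau$'' is a well-defined intermediate state of the greedy construction: at this moment only the mass in $[0,1]$ has been committed, and the corollary is really the statement that the greedy algorithm is \emph{permitted} to skip the point $1+\tau$ (assign it zero) without any point's CR constraint being violated. Once one adopts the convention (as in the definition of the greedy algorithm in Appendix~\ref{appendix:structure proofs in appendix}) that unassigned mass is accounted for as a lump that will eventually be placed at larger times or at $\infty$, the argument is just the monotonicity of $\alpha_f(x)$ in the rightward displacement of mass, which is immediate from the explicit coefficients above, together with a direct invocation of Lemma~\ref{lem: greedy in [0,1] is safe for points passed 1}. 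No new calculation beyond comparing the coefficients $\frac{t+1-a+a(x-t)}{1-a+ax}$ (increasing in $t$ for $t \le x$, since $1 - a \ge 0$) and the constant $\frac{x}{1-a+ax}$ for $t > x$ is needed.
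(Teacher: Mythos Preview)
Your proposal is correct and follows essentially the same approach as the paper: invoke Lemma~\ref{lem: greedy in [0,1] is safe for points passed 1} together with the observation that $\alpha_f(x)$ is determined by the mass placed in $[0,x]$. The paper's proof is a single line---apply the lemma at $T=1+\tau$---so your analysis of arbitrary $x>1$ and of the coefficient monotonicity is superfluous, since the corollary only concerns the competitive ratio at the single point $1+\tau$ relative to $OPT=\alpha_f(1)$.
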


Corollary~\ref{cor: zero mass at 1+tau will not increase the CR} states that after greedily placing mass in $[0,1]$, if we place a zero mass at $1+\tau$, the $CR$ either stays flat or decreases. Since we know that the $CR$ constraint is tight at time $1$ (Theorem~\ref{thm: CR tight at time 1}), this means that we can either place nonzero mass at time $1+\tau$ (if the $CR$ would decrease if we place mass $0$) or we must place $0$ mass at time $1+\tau$ (if the $CR$ stays flat when placing $0$ mass).

\begin{defn}
    \label{def: def of two worlds}
    After placing mass greedily in $[0,1]$, we say that we are in \emph{world 1} if one has to place a zero mass at time $1+\tau$ and we are in \emph{world 2} if one can place some nonzero mass at time $1+\tau$.
\end{defn}

Note that world 1 and world 2 partition the solution space. In the following lemma, we show that in different world, $CR$ behaves differently in zero intervals after time 1.

\begin{restatable}{lemma}{BehaviorOfZeroIntervalsInTwoWorlds}
    \label{lem: CR behavior over zero intervals in two worlds}
    Let $f$ be any distribution and let $I$ be any zero interval after time 1. In world 1, the competitive ratio is flat in $I$. In world 2, the competitive ratio decreases in $I$.
\end{restatable}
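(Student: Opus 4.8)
The plan is to write $\alpha_f$ in closed form on a zero interval lying after time $1$, differentiate, and observe that the sign of the derivative is controlled by a single scalar that is pinned down by which world we are in. (Here I take $f$ to agree on $[0,1]$ with the greedy distribution --- equivalently, by Theorem~\ref{thm: unique mass placement in [0,1]}, with every optimal solution --- which is what ties the structure of $f$ after time $1$ to the world.) Fix a zero interval $I=(x_1,x_2)$ with $x_1\ge 1$. For $x$ interior to $I$, the prefix integral in the $x>1$ expression for $\alpha_f(x)$ runs only over $[0,x_1]$ (there is no mass in $(x_1,x)$), and the total mass at or beyond $x_2$ including $\infty$ equals $1-P$, where $P:=\int_{[0,x_1]}f_t\,dt$; using $t+1-a+a(x-t)=(1-a)(t+1)+ax$ this collapses to
\[
  \alpha_f(x)\;=\;\frac{A+x\bigl(1-(1-a)P\bigr)}{1-a+ax},\qquad A:=(1-a)\!\int_{[0,x_1]}(t+1)\,f_t\,dt,
\]
with $A$ and $P$ constant over $I$. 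Differentiating and simplifying (the only algebraic fact needed is $(1-a)+a(t+1)=1+at$) gives $\alpha_f'(x)=\dfrac{(1-a)\bigl(1-g(x_1)\bigr)}{(1-a+ax)^2}$, where $g(y):=\int_{[0,y]}(1+at)\,f_t\,dt$. So on $I$ the competitive ratio is monotone: flat iff $g(x_1)=1$ and strictly decreasing iff $g(x_1)>1$. The same computation with $x_2=\infty$ covers a zero suffix.

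The next step is to identify $g(x_1)$ with the world. Because $(1+at)f_t\ge 0$, the function $g$ is non-decreasing, so it suffices to control $g(1)=\int_{[0,1]}(1+at)f_t\,dt$, which depends only on the common prefix. Applying the derivative formula to the distribution that is greedy on $[0,1]$ and places all remaining mass at $\infty$, Corollary~\ref{cor: zero mass at 1+tau will not increase the CR} says precisely that $\alpha_f'(1^+)\le 0$, i.e.\ $g(1)\ge 1$; combined with monotonicity of $g$ this already re-proves Lemma~\ref{lemma: CR cannot increase over zero intervals after time 1 in any opt solution}. To split the two cases, recall that $\alpha_f(1)=OPT$ (Theorem~\ref{thm: CR tight at time 1}; note $\alpha_f(1)$ depends only on $f|_{[0,1]}$), and that moving any positive mass onto $1+\tau$ strictly increases $\alpha_f$ at times just past $1+\tau$. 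Hence: if $g(1)=1$, the greedy-then-$\infty$ distribution has competitive ratio flat at $OPT$ throughout $(1,\infty)$, so placing positive mass at $1+\tau$ overshoots $OPT$ and greedy is forced to place $0$ --- world $1$; if $g(1)>1$, a small mass at $1+\tau$ is CR-safe, and by the greedy-algorithm analysis in Appendix~\ref{appendix:structure proofs in appendix} also mass-feasible, so greedy can place positive mass --- world $2$. Thus world $1\Leftrightarrow g(1)=1$ and world $2\Leftrightarrow g(1)>1$.

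Concluding is then immediate. In world $2$, every zero interval after time $1$ has left endpoint $x_1\ge 1$, so $g(x_1)\ge g(1)>1$ and $\alpha_f'<0$ on that interval: the competitive ratio strictly decreases. In world $1$, $g(1)=1$, and iterating the forced-zero argument along $1+\tau,1+2\tau,\dots$ shows that greedy places no finite mass beyond time $1$ at all; hence the only zero interval after time $1$ in the greedy solution (and, by Theorem~\ref{our contribution: thm 2, general, large delta}, in any optimal solution) is $(1,\infty)$, on which $g\equiv 1$ and $\alpha_f'\equiv 0$, i.e.\ the competitive ratio is flat.

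The main obstacle is the middle step: faithfully turning the operational definition of the worlds (greedy is or is not forced to put zero mass at $1+\tau$) into the clean dichotomy $g(1)=1$ versus $g(1)>1$, and, in world $1$, establishing that a ``zero interval after time $1$'' can only be $(1,\infty)$. The first part needs $\alpha_f(1)=OPT$ together with the perturbation fact so that flatness genuinely forbids adding mass, plus the feasibility (mass-constraint) half of the world-$2$ direction, which I would cite from the greedy-algorithm analysis rather than reprove. The second part needs the induction that in world $1$ no finite mass is ever placed past time $1$; without it, the bare formula would allow $g$ to climb above $1$ on some later zero interval and the ``flat'' conclusion would fail for distributions that carry mass past $1$ --- which is precisely why the world-$1$ statement is really about the greedy (equivalently, optimal) solution.
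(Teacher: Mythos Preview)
Your proof is correct and follows essentially the same route as the paper. You re-derive the zero-interval derivative formula with the parametrization $g(x_1)=\int_0^{x_1}(1+at)f_t\,dt$, which is exactly the paper's threshold $c-aE$ from Lemma~\ref{lem: monotonicity of zero interval } (indeed $1-g(x_1)=c-aE$); the identification of world~1/world~2 with $g(1)=1$ versus $g(1)>1$, the monotonicity-of-$g$ step to propagate to later zero intervals, and the world-1 argument that greedy places no finite mass past~$1$ all mirror the paper's proof.
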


Since solutions in world 1 and world 2 have different structures, the remaining part of Section~\ref{subsec: optimal solution by greedy algorithm} shows that the greedy algorithm gives an optimal solution separately. The easier case is when we are in world 1, since we are left with no further choices.  So we have the following lemma.

\begin{restatable}{lemma}{InWorldOneGreedyGivesOptimal}
    \label{lem: in world 1 greedy gives optimal}
    In world 1, the greedy algorithm gives an optimal solution.
\end{restatable}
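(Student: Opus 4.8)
The plan is to argue that in world 1, the greedy algorithm has essentially no freedom after time $1$, so that the distribution it produces is forced, and then to show that this forced distribution is feasible and achieves competitive ratio exactly $OPT$. First I would recall the setup: we are given the true value of $OPT$, and by Theorem~\ref{thm: CR tight at time 1} and Theorem~\ref{thm: in any opt solution something is tight in [0,1]}, every optimal solution agrees on $[0,1]$ with the greedy prefix (Theorem~\ref{thm: unique mass placement in [0,1]}), and at time $1$ the $CR$ constraint is tight. The defining feature of world 1 (Definition~\ref{def: def of two worlds}) is that after placing mass greedily in $[0,1]$, one \emph{must} place zero mass at $1+\tau$ --- placing any positive mass there would push $\alpha_f(x) > OPT$ for some $x$. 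Combined with Corollary~\ref{cor: zero mass at 1+tau will not increase the CR}, placing zero mass keeps the $CR$ flat (not strictly decreasing), so by Lemma~\ref{lem: CR behavior over zero intervals in two worlds} the competitive ratio stays flat across the entire zero interval that greedy is forced to open after time $1$.

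Next I would show this ``flatness'' propagates: as long as the greedy algorithm is in world 1, the $CR$ remains pinned at $OPT$ on a zero interval extending past $1$, and greedy is never permitted to place positive mass at any finite point $> 1$, because doing so would increase $\alpha_f(x)$ above $OPT$ at some point $x$ in the bad-interval relationship with that time (recall bad intervals for points after $1$ can reach forward, which is exactly the extra constraint greedy respects). Hence in world 1 the greedy solution places all remaining probability mass at $\infty$: it is the distribution that matches the unique greedy prefix on $[0,1]$, is zero on $(1,\infty)$, and puts the leftover mass $f_\infty$ at infinity. Since $OPT$ is given and the prefix is determined, $f_\infty$ is determined as $1$ minus the prefix mass; I would check this is a nonnegative number (so the distribution is well-defined) using the fact that the greedy prefix, which respects the $CR$ constraint with value $OPT$, cannot accumulate more than total mass $1$ --- if it did, $OPT$ could be lowered, contradicting that $OPT$ is the true optimum.

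It then remains to verify two things: (i) this distribution is feasible, i.e. it satisfies every mass constraint and every $CR$ constraint; and (ii) its worst-case competitive ratio equals $OPT$. For (i), the mass constraints on $[0,1]$ hold by construction of the greedy prefix; for $x > 1$ the relevant mass lies in the bad interval, which can include the suffix toward $\infty$, but since greedy explicitly refuses to place mass that would violate a mass constraint (including at earlier points), and the only mass after $1$ is at $\infty$, the bad-interval mass never exceeds $\delta$ --- here I would lean on the fact that placing that leftover mass at $\infty$ rather than at a finite point can only help the forward-reaching bad intervals of finite points, while $\alpha_f(\infty) = 1 + (1/a - 1)f_\infty = OPT$ by the closed form, so the $(\gamma,\delta)$ constraint at $\infty$ itself is exactly the statement that $f_\infty$ isn't too large, which is guaranteed since $OPT$ is feasible. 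For (ii), $\alpha_f(x) \le OPT$ on $[0,1]$ by Theorem~\ref{thm: CR tight at time 1}'s proof and the greedy construction; on $(1,\infty)$ we use Lemma~\ref{lem: greedy in [0,1] is safe for points passed 1} and the flatness established above to get $\alpha_f(x) \le OPT$, with equality at $1$ and at $\infty$.

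The main obstacle I anticipate is part (ii)/(i) for points $x > 1$: one has to be careful that the forced zero suffix after time $1$ combined with mass at $\infty$ genuinely keeps \emph{all} $CR$ constraints at or below $OPT$ and all mass constraints at or below $\delta$ simultaneously. The cleanest route is probably to invoke the zero-suffix/zero-interval machinery from Appendix~\ref{appendix:zero suffix and zero interval} (and Lemma~\ref{lemma: CR cannot increase over zero intervals after time 1 in any opt solution}) to conclude that once $CR$ is flat at $OPT$ on a zero interval in world 1, it cannot later exceed $OPT$ on the remaining zero suffix, so that the value $\alpha_f(\infty) = OPT$ is actually the supremum --- giving feasibility and optimality together, and matching the lower bound since $OPT$ is by definition the optimal worst-case ratio.
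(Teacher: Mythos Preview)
Your overall plan tracks the paper's argument closely: the greedy prefix on $[0,1]$ is forced (Theorem~\ref{thm: unique mass placement in [0,1]}), in world~1 no positive mass can go at any finite point past $1$ because the $CR$ is already pinned flat at $OPT$ there (Lemma~\ref{lem: CR behavior over zero intervals in two worlds}), and the zero-suffix asymptotic then forces $\alpha_f(\infty)=OPT$, i.e., $f_\infty=\frac{OPT-1}{1/a-1}$. Two remarks.

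First, a minor misreading: by Definition~\ref{def: def of the greedy algorithm}, in world~1 greedy does \emph{not} place ``leftover'' mass at $\infty$; it sets $f_\infty=\frac{OPT-1}{1/a-1}$ and returns. Your zero-suffix argument does show these coincide, but be aware that this equality is something to prove, not assume.

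Second, a genuine gap: your feasibility check for the mass constraints at $x>1$ does not go through as written. You claim that putting the remaining mass at $\infty$ ``can only help the forward-reaching bad intervals of finite points,'' but for every $x\in BI_5$ the bad interval is $(\cdot,\infty]$ and therefore \emph{contains} $\infty$; all of $f_\infty$ counts against the mass constraint there (and at $x=\infty$). So you must show $f_\infty\le\delta$, and the justification ``guaranteed since $OPT$ is feasible'' is circular --- you are trying to establish greedy's feasibility, not assume it. The paper sidesteps this by reversing the logic: it first argues that \emph{every} optimal solution in world~1 has the zero suffix and $f_\infty=\frac{OPT-1}{1/a-1}$, hence the optimal solution is unique; since an optimal solution exists and greedy outputs exactly this distribution, feasibility (including $f_\infty\le\delta$) is inherited automatically. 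If you prefer your direct-verification route, close the gap the same way: any optimal solution shares the prefix and the zero suffix, hence has the same $f_\infty$, and its feasibility gives $f_\infty\le\delta$.
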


When we are in world 2, it is not trivial to say that the greedy algorithm is optimal because we need to rule out the case that when the greedy algorithm has already placed a total of $\delta$ mass in the bad interval suffix $(L_b,\infty]$ (we show that in 
Appendix~\ref{appendix:bad intervals in appendix} there is at most $\delta$ mass in this suffix $(L_b,\infty]$ to get a feasible solution), the total mass is still less than 1. 
Therefore, to prove the final Theorem~\ref{our contribution:thm 1, greedy gives optimal}, we first prove the following lemma in world 2.

\begin{restatable}{lemma}{ExistenceOfAnOptSolutionWithLongestTightPrefix}
    \label{lem: existence of an opt solution with the longest tight prefix}
    In world 2, there exists an optimal solution of the following form:

    --- Either competitive ratio or mass constraint is tight for some prefix ending at time~$T$.

    --- Possibly nonzero mass is at time $T+\tau$.

    --- zero mass after time $T+\tau$. 
\end{restatable}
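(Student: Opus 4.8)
The plan is to start from a well-chosen optimal solution and repeatedly slide an infinitesimal amount of mass leftward onto the first point past its tight prefix; since each slide strictly decreases the total finite mass lying beyond that point while never shrinking the tight prefix, the process is forced to terminate in a solution of the required form. Two facts drive everything: the identity $\alpha(t,x)=1+\tfrac{t(1-a)}{1-a+ax}$ for $0<t\le x$, $x>1$ (so $\alpha(t,x)$ is increasing in $t$ and decreasing in $x$), and Lemma~\ref{lem: CR behavior over zero intervals in two worlds}, which says that in world~2 (Definition~\ref{def: def of two worlds}) the competitive ratio strictly \emph{decreases} across every zero interval after time~$1$.

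\emph{Choosing the solution.} Among all optimal solutions pick one, $f$, with the longest tight prefix $[0,T]$, and among those one minimizing the total finite mass $\int_{(T+\tau,\infty)}f_t\,dt$ strictly beyond $T+\tau$. This is well defined: by Theorem~\ref{thm: in any opt solution something is tight in [0,1]} every optimal solution already has $[0,1]$ as a tight prefix; in world~2 a tight prefix has finite length, since maintaining $\alpha_f\equiv OPT$ over a stretch forces mass of order $\tau/x$ at each point $x$ (no free $CR$-tight zero intervals are available in world~2), whose sum is bounded by the unit of total mass (the precise estimate is deferred to the appendix); and since time is discretized in steps of $\tau$, the longest prefix is attained, after which the minimizing $f$ exists by compactness. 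By maximality of $T$ the point $u:=T+\tau$ is neither $CR$ tight nor mass tight in $f$: $\alpha_f(u)<OPT$ and $\int_{I_\gamma(u)}f_t\,dt<\delta$. Moreover $u$ lies in a zero interval after time~$1$, so by Lemma~\ref{lem: CR behavior over zero intervals in two worlds} together with $\alpha_f(u)<OPT$ we get $\alpha_f(x)<OPT$ for every $x>u$ in that zero interval.

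\emph{The slide.} Assume for contradiction that $f$ has positive finite mass strictly beyond $u$, and let $s$ be the smallest such point. Let $f'$ be $f$ with a quantity $\varepsilon\in(0,f_s]$ of mass moved from $s$ to $u$. I claim that for small $\varepsilon$, $f'$ is feasible, optimal, and still has tight prefix exactly $[0,T]$ --- contradicting the choice of $f$, since the finite mass beyond $u$ has dropped by $\varepsilon$. The verification splits by the adversary's stopping point $x$. For $x<u$: $\alpha(u,x)=\alpha(s,x)$, so $\alpha_{f'}(x)=\alpha_f(x)$, and for $x\le T$ the points $u,s$ are simultaneously in or out of $I_\gamma(x)$ (neither is in the prefix part, and the tail part is $\emptyset$ or all of $(x,\infty]$), so the tight prefix and all such mass constraints are untouched. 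For $x\ge s$: $\alpha(u,x)-\alpha(s,x)=\tfrac{(u-s)(1-a)}{1-a+ax}<0$ and $\alpha(s,x)\ge\alpha(u,x)>\gamma$ whenever $u\in I_\gamma(x)$, so the move can only lower $\alpha_f(x)$ and $\int_{I_\gamma(x)}f_t\,dt$. For $x\in[u,s)$: $\alpha_{f'}(x)=\alpha_f(x)+O(\varepsilon)$ with $\alpha_f(x)<OPT$ and, since there are only finitely many such discrete $x$, a uniform positive gap, so small $\varepsilon$ preserves the $CR$ constraint; and $\int_{I_\gamma(x)}f_t\,dt$ can increase only when $I_\gamma(x)$ has empty tail and $\alpha(u,x)>\gamma$, i.e.\ $I_\gamma(x)=(t_0(x),x]$ with $\alpha(t_0(x),x)=\gamma$, and then, because $\alpha(t,\cdot)$ is decreasing, $I_\gamma(u)=(t_0(u),u]$ with $t_0(u)<t_0(x)$, so $I_\gamma(u)\supseteq I_\gamma(x)\cap[0,u]$; as $(u,x]$ carries no mass this gives $\int_{I_\gamma(u)}f_t\,dt\ge\int_{I_\gamma(x)}f_t\,dt$, and $u$ not being mass tight forces $\int_{I_\gamma(x)}f_t\,dt<\delta$, leaving room for small $\varepsilon$. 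Finally $\alpha_{f'}(\infty)=\alpha_f(\infty)\le OPT$, and any feasible solution has worst-case ratio $\ge OPT$, so $f'$ is optimal; also $u$ is still not tight for small $\varepsilon$, so the tight prefix of $f'$ is exactly $[0,T]$.

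\emph{The main obstacle.} The delicate point is exactly the mass-constraint accounting for $x\in[u,s)$: unlike the pure-buy setting, bad intervals here can reach $\infty$ and, for $x$ past $L_b$, can contain points to the right of $x$, so sliding mass leftward onto $u$ could a priori inflate $\int_{I_\gamma(x)}f_t\,dt$ for some $x$ strictly between $u$ and $s$. What rescues the argument is the monotonicity of $\alpha$: for such an $x$ the bad interval is a single interval $(t_0(x),x]$ whose left endpoint is increasing in $x$, so its part below $u$ lies inside $I_\gamma(u)$; hence a tight mass constraint at $x$ would already imply one at $u$, contradicting the maximality of the tight prefix. With this in hand every slide is safe, minimality of $\int_{(T+\tau,\infty)}f_t\,dt$ forces it to equal $0$, and then $f$ itself has the claimed structure: a tight prefix $[0,T]$, possibly nonzero mass at $u=T+\tau$, and zero finite mass after $u$.
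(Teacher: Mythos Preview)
Your proof is correct and follows essentially the same approach as the paper: pick an optimal solution with the longest tight prefix $[0,T]$ and slide mass from the first nonzero point past $T+\tau$ back onto $u=T+\tau$, checking the $CR$ constraint for $x\in[u,s)$ via the world-2 zero-interval lemma and the mass constraint via monotonicity of the bad-interval left endpoint. The paper iterates this slide (either $T+\tau$ becomes tight, contradicting maximality of the prefix, or one repeats with the next nonzero point), whereas you recast it as a one-shot contradiction against a second extremal choice (minimal finite tail mass past $T+\tau$); both framings rest on the same case analysis and are comparably informal about the termination/existence step you flag as ``deferred to the appendix.''
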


This lemma indicates that some optimal solutions are in an equivalence class, namely that they have the same length of tight prefix (which is the longest tight prefix among all optimal solutions), but after the tight prefix there might be a semi-flexible way to place mass by just not violating any constraint at any point outside the tight prefix. 

Now, we prove Theorem~\ref{our contribution:thm 1, greedy gives optimal}. Note that this is not trivial by Lemma~\ref{lem: existence of an opt solution with the longest tight prefix}. For example, after we make a tight prefix $[0,x]$ where $x$ is the first point whose bad interval is of the form $I_\gamma(x)=(\cdot,\infty]$, if there is still remaining mass to place, it could be the case that the longest tight prefix solution puts mass to $\infty$. Thus, every point in the zero suffix could be mass tight. However, greedy algorithm will place mass greedily, resulting in another optimal solution. We will show that the greedy algorithm constructs the solution which coincides with the solution in Lemma~\ref{lem: existence of an opt solution with the longest tight prefix} from 0 to $L_b$, and hence gives an optimal solution to the problem (Theorem~\ref{our contribution:thm 1, greedy gives optimal}).

\OurContributionTheoremONEGreedyGivesOptimal*

\begin{defn}
    \label{def: def of the greedy solution}
    We call the optimal solution constructed by the greedy algorithm in Theorem~\ref{our contribution:thm 1, greedy gives optimal} the \emph{greedy solution}.
\end{defn}

\subsection{General Structure of Optimal Solutions}
\label{subsec: general structure of optimal solutions}

In this section, we will use properties in Section~\ref{subsec: optimal solution by greedy algorithm} to examine the structures of optimal solutions. We show that there is a threshold such that the optimal solution is unique and has a zero suffix when $\delta$ is larger than that threshold (Theorem~\ref{our contribution: thm 2, general, large delta}), and there is not a unique optimal solution when $\delta$ is smaller than the threshold (Theorem~\ref{our contribution:thm 3, general, small delta}). In a small $\delta$ regime, we further show the structure of the greedy solution (Theorem~\ref{our contribution:thm 4, greedy solution, small delta}). Finally, we prove that there is an optimal solution with a finite support $[0,L_b] \cup \{\infty\}$ (Theorem~\ref{our contribution:thm 5, closely related greedy, all delta}). Full proofs of the lemmas and theorems stated in this section are deferred to Appendix~\ref{appendix:proofs of theorems in section 3.2 in appendix}.

To explore the general structure of optimal solutions, we first show that $CR$ must be tight in $[1,I_3]$ in any optimal solution.

\begin{restatable}{thm}{CRtightFromOneToIthree}
    \label{thm: CR must be tight in [1,I3]}
    For any optimal solution $f$, the competitive ratio constraint must be tight in $[1,I_3]$.
\end{restatable}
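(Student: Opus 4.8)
The plan is a proof by contradiction that reduces the statement to a single sign inequality comparing an arbitrary optimal solution with the greedy solution on $(1,I_3]$.

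First I would collect the ingredients. From the bad-interval analysis in Appendix~\ref{appendix:bad intervals in appendix} we have $I_3\ge 1$ (this is exactly where the hypothesis $\gamma\ge 2-a$ is used) and $I_{\gamma}(x)=\emptyset$ for every $x\in[1,I_3]$; hence on $[1,I_3]$ the mass constraint imposes no restriction and (for $\delta>0$) cannot be tight, so ``tight'' there can only mean CR-tight. We also use: $\alpha_f$ is continuous with $\alpha_f(1)=OPT$ (Theorem~\ref{thm: CR tight at time 1}); the greedy solution $g$ is optimal (Theorem~\ref{our contribution:thm 1, greedy gives optimal}) and has $\alpha_g(x)=OPT$ for all $x\in[1,I_3]$ (the greedy algorithm makes one constraint tight at each step, only the CR one is available on $[1,I_3]$, and greedy does not run short of mass before $I_3$); and every optimal solution agrees with $g$ on $[0,1]$ (Theorem~\ref{thm: unique mass placement in [0,1]}), so in particular $\phi_f(1)=\phi_g(1)$ for every optimal $f$, where $\phi_h(x):=1-F_h(x)-a\int_0^x t\,h_t\,dt$.

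Now suppose, for contradiction, that an optimal $f$ has $\alpha_f(x_0)<OPT$ for some $x_0\in(1,I_3]$. Since $f$ and $g$ have the same $[0,1]$-prefix, a direct computation from the formula for $\alpha_f$ at points $x>1$ gives, for all $x\in[1,I_3]$,
\[
\alpha_f(x)-OPT \;=\; \alpha_f(x)-\alpha_g(x) \;=\; \frac{1-a}{1-a+ax}\int_1^{x}\bigl(t-(x-1)\bigr)\bigl(f_t-g_t\bigr)\,dt .
\]
As $\tfrac{1-a}{1-a+ax}>0$, the CR constraint is tight throughout $[1,I_3]$ iff $J(x):=\int_1^x(t-(x-1))(f_t-g_t)\,dt$ vanishes on $[1,I_3]$; feasibility of $f$ already forces $J\le 0$ there, so it remains to prove $J\ge 0$. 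I would derive this from two structural facts. (i) On $(1,I_3]$, $g$ is pinned down explicitly: the derivative identity $\alpha_h'(x)=\frac{1-a}{(1-a+ax)^2}\bigl((1-a+ax)h_x+\phi_h(x)\bigr)$, together with $\phi_h$ being non-increasing ($\phi_h'(x)=-(1+ax)h_x\le 0$), forces $g_x=|\phi_g(1)|\,e^{x-1}$ on $(1,I_3]$ when $\phi_g(1)<0$ (the exponential profile of Theorem~\ref{our contribution:thm 4, greedy solution, small delta}) and $g_x=0$ there, with all remaining mass at $\infty$, when $\phi_g(1)=0$; in both cases one computes $\int_1^x(t-(x-1))g_t\,dt=(x-1)\,|\phi_g(1)|$. (ii) $f$ is a probability distribution with the same $[0,1]$-prefix as $g$, while the mass constraint is vacuous on $(1,I_3]$; hence $g$ already places mass as early as the CR bound permits on that interval, and the only way $f$ can differ there is to delay or omit mass relative to $g$ — anything ``pulled earlier'' than $g$ must be paid for by a later shortfall, where the total-mass-$1$ budget (and, at the far end, the at-most-$\delta$-mass-in-$(L_b,\infty]$ feasibility constraint) bites; Lemma~\ref{lemma: CR cannot increase over zero intervals after time 1 in any opt solution} (the CR cannot rise inside a zero interval of $f$) pins down where that shortfall must sit. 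Such deviations can only increase $J$, so $J\ge 0$ on $[1,I_3]$, contradicting $\alpha_f(x_0)<OPT$.

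The main obstacle is precisely the nonnegativity of $J(x)$ up to $x=I_3$: since the weight $t-(x-1)$ changes sign inside $[1,x]$ once $x>2$, no pointwise comparison of $f_t$ with $g_t$ works, and one genuinely needs both the explicit form of $g$ on $(1,I_3]$ and the global mass accounting for $f$ (total mass $1$ plus the feasibility budget beyond $L_b$) to bound how far $f$ can stray from $g$. The remaining pieces — continuity of $\alpha_f$, the derivative identity, and the zero-interval monotonicity lemma — are routine bookkeeping.
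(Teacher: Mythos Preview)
Your reduction to the sign of $J(x)=\int_1^x(t-(x-1))(f_t-g_t)\,dt$ is correct, but observe what it actually accomplishes: feasibility of $f$ already gives $J\le 0$ on $[1,I_3]$, so ``$J\ge 0$ on $[1,I_3]$'' is \emph{equivalent} to the statement you are trying to prove. Step (ii) is therefore not a reduction but a restatement, and the justification you offer for $J\ge 0$ --- that $g$ ``places mass as early as the CR bound permits'' so $f$ can only ``delay or omit'' mass, with global mass accounting then forcing $J\ge 0$ --- is not a proof. Concretely: take a feasible $f$ equal to $g$ on $[0,1]$ with $f_t=0$ on $(1,I_3]$ and the saved mass pushed past $I_3$; then $J(x)=-\int_1^x(t-(x-1))g_t\,dt=-(x-1)\,|\phi_g(1)|<0$ on $(1,I_3]$. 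So $J<0$ is perfectly consistent with feasibility, the total-mass-$1$ budget, and the $\delta$-cap beyond $L_b$; what rules it out is \emph{optimality}, and your sketch never explains how Lemma~\ref{lemma: CR cannot increase over zero intervals after time 1 in any opt solution} or the mass budget converts optimality into $J\ge 0$. (Indeed, the paper's Corollary~\ref{cor: structure of every opt solution before I3} --- that every optimal solution agrees with $g$ on $[0,I_3]$ --- is exactly $J\equiv 0$, and it is deduced \emph{from} Theorem~\ref{thm: CR must be tight in [1,I3]}, not the other way around.) You also implicitly lean on the exponential profile of $g$ from Theorem~\ref{our contribution:thm 4, greedy solution, small delta}; you do rederive it, but be careful not to import later structural results whose proofs sit downstream of this one.

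The paper's proof is completely different and avoids the comparison with $g$ altogether. It is a direct mass-moving argument: if some $t_1\in[1,I_3]$ is not CR-tight, then --- since $t_1$ lies in no bad interval, which is the only property of $[1,I_3]$ used --- one can shift small mass from the first later CR-tight point back to $t_1$ without violating any mass constraint; iterating creates CR-slack at every point past~$1$, and a final shift from the first tight point in $[0,1]$ to $t_1$ strictly lowers $\max_x\alpha_f(x)$, contradicting optimality. No greedy solution, no $J$, no accounting beyond $L_b$.
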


Next, we show that the structure in $(1,I_3]$ of any optimal solution is particularly simple, even when the optimal solution is in world 2.
Let $\tau>0$ be a sufficiently small time step (again we discretize the problem). The following lemma calculates the amount of mass at each point to keep $CR$ tight before $I_3$ in any optimal solution in world 2.

\begin{restatable}{lemma}{ExponentialToMakeCRtight}
    \label{lem: exponential function to make CR tight}
    To keep $CR$ tight in $(1,I_3]$, for any $k \geq 1, k \in \mathbb{N}$, we have $f_{1+k\tau} = (1+\tau)^{k-1}f_{1+\tau}$, where $f_{1+\tau}$ is calculated in Lemma~\ref{lem: mass at 1+tau to make CR tight} in Appendix~\ref{appendix:proofs of theorems in section 3.2 in appendix}.
\end{restatable}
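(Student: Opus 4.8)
The plan is to turn the $CR$-tightness condition on $(1,I_3]$ into a linear recurrence on the grid values $f_{1+k\tau}$, from which the closed form is immediate by a one-line induction. No structural facts about bad intervals or about which ``world'' we are in are needed: the statement is purely a consequence of $\alpha_f(x)=OPT$ holding at every grid point of $(1,I_3]$ together with $CR$-tightness at $x=1$ (Theorem~\ref{thm: CR tight at time 1}, which handles the left endpoint).

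First I would rewrite $\alpha_f(x)$ for $x>1$ in a cleaner form. Using $\alpha(t,x)(1-a+ax) = (1-a)(t+1)+ax$ for $t\le x$ and $\sum_{t>x}f_t+f_\infty = 1-S_x$ with $S_x \coloneqq \sum_{t\le x}f_t$, the formula for $\alpha_f(x)$ collapses to
\[
\alpha_f(x)\,(1-a+ax) \;=\; (1-a)\sum_{t\le x}(t+1-x)\,f_t \;+\; x .
\]
Imposing $\alpha_f(x)=OPT$ at two consecutive grid points $x'=1+(k-1)\tau$ and $x=x'+\tau$ and subtracting, the left side changes by $\tau(OPT\cdot a-1)$, while the right side changes by $(1-a)\bigl(f_x-\tau S_{x'}\bigr)$: the $-\tau S_{x'}$ comes from the uniform shift of the weights $t+1-x'$ by $-\tau$ for $t\le x'$, and the $+f_x$ from the one newly-included grid term $t=x$, whose weight $x+1-x$ equals $1$. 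Hence for every grid point $x=1+k\tau$ with $k\ge 1$,
\[
f_x \;=\; \tau\,S_{x-\tau} \;+\; \frac{\tau\,(OPT\cdot a-1)}{1-a}.
\]

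Second I would difference this identity with itself. Writing the same relation at $x$ and at $x+\tau$ and subtracting, the constant term cancels and $f_{x+\tau}-f_x = \tau\,(S_x-S_{x-\tau}) = \tau f_x$, i.e. $f_{x+\tau} = (1+\tau)f_x$ for all grid points $x=1+k\tau$, $k\ge 1$ (with $x+\tau\le I_3$). The base case $k=1$ is legitimate because $x'=1$ is $CR$-tight by Theorem~\ref{thm: CR tight at time 1}, so the first differencing step above is valid there as well and yields the value $f_{1+\tau}$ computed in Lemma~\ref{lem: mass at 1+tau to make CR tight}. Unwinding the recurrence gives $f_{1+k\tau} = (1+\tau)^{k-1} f_{1+\tau}$, as claimed.

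The only thing requiring care — and the main, rather minor, obstacle — is the discrete bookkeeping in the differencing step: correctly separating out the newly-included endpoint term when passing from $S_{x-\tau}$ to $S_x$, and checking that the argument still goes through at the left boundary $x=1$, where we rely on $CR$-tightness at $1$ rather than on $1\in(1,I_3]$. There is no analytic difficulty: choosing $\tau$ sufficiently small guarantees $(1,I_3]$ contains grid points and that no probability mass lies strictly between grid points, so the sums behave as written; and the same computation is valid irrespective of whether we are in world 1 or world 2.
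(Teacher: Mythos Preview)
Your proof is correct and follows the same core idea as the paper --- differencing the $CR$-tightness equation at consecutive grid points --- but your execution is cleaner. By first multiplying through by $(1-a+ax)$ to obtain the compact identity $\alpha_f(x)(1-a+ax)=(1-a)\sum_{t\le x}(t+1-x)f_t+x$, you make the two differencing steps trivial and arrive directly at the homogeneous one-step recurrence $f_{x+\tau}=(1+\tau)f_x$. The paper instead differences $\alpha_f$ itself (without clearing denominators), obtains an expression for $f_{1+k\tau}$ in terms of both $\sum_{t\le 1+(k-1)\tau}f_t$ and $\sum_{t\le 1+(k-1)\tau}tf_t$, and then substitutes the closed forms for these sums using the inductive hypothesis; this works but is noticeably messier. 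Your boundary check at $x'=1$ is also handled correctly: the rewritten formula for $\alpha_f(x)(1-a+ax)$ agrees at $x=1$ with the $x\le 1$ expression, so invoking Theorem~\ref{thm: CR tight at time 1} there is legitimate, and the resulting $f_{1+\tau}$ coincides with the value in Lemma~\ref{lem: mass at 1+tau to make CR tight}.
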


Then, Theorem~\ref{thm: CR must be tight in [1,I3]} and Lemma~\ref{lem: exponential function to make CR tight} lead to the following Corollary.

\begin{corollary}
    \label{cor: structure of every opt solution before I3}
    Every optimal solution in the same world (world 1 or world 2) is the same in~$[0,I_3]$. Moreover, in world 2, the distribution in $[1,I_3]$ of every optimal solution is the same exponential function.
\end{corollary}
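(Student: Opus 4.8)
The plan is to deduce the corollary from three ingredients already in hand: Theorem~\ref{thm: unique mass placement in [0,1]} (all optimal solutions coincide on $[0,1]$), Theorem~\ref{thm: CR must be tight in [1,I3]} (every optimal solution is $CR$ tight throughout $[1,I_3]$), and the fact that $I_3$ is by definition the last point whose bad interval is empty, so that in $(1,I_3)$ the mass constraint imposes nothing on $f$. The only thing left to prove is therefore that two optimal solutions that agree on $[0,1]$ must also agree on $(1,I_3]$, and that in world $2$ this common restriction is the stated exponential. Note first that, since the $[0,1]$-prefix of any optimal solution is the unique greedy prefix, the world an instance lies in --- world $1$ or world $2$ in the sense of Definition~\ref{def: def of two worlds} --- is an attribute of the instance, so every optimal solution lies in the same world and the case split below is exhaustive.

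For world $2$ I would fix an optimal solution $f$. By Theorem~\ref{thm: CR must be tight in [1,I3]}, $\alpha_f$ equals $OPT$ at every point of $[1,I_3]$, and because all bad intervals there are empty the $CR$ equality is the only constraint that is active. Expressing $\alpha_f(1+\tau)$ through the fixed $[0,1]$-prefix and the (fixed) total mass remaining after time $1$ shows that it is affine and strictly increasing in $f_{1+\tau}$; hence $\alpha_f(1+\tau)=OPT$ has a unique solution, namely the value computed in Lemma~\ref{lem: mass at 1+tau to make CR tight}. Imposing $\alpha_f(1+k\tau)=OPT$ for successive points then produces the recurrence $f_{1+k\tau}=(1+\tau)^{k-1}f_{1+\tau}$ of Lemma~\ref{lem: exponential function to make CR tight}. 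Thus $f$ restricted to $(1,I_3]$ is completely pinned down and is the claimed exponential; combined with Theorem~\ref{thm: unique mass placement in [0,1]} this gives agreement on $[0,I_3]$.

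For world $1$ I would first show $f_{1+\tau}=0$ for every optimal $f$: by definition of world $1$ (together with Corollary~\ref{cor: zero mass at 1+tau will not increase the CR}), $\alpha_f(1+\tau)$ already equals $OPT$ when no mass is placed at $1+\tau$, so by the same strict monotonicity of $\alpha_f(1+\tau)$ in $f_{1+\tau}$, any positive $f_{1+\tau}$ would push $\alpha_f(1+\tau)>OPT$, contradicting optimality. The recurrence of Lemma~\ref{lem: exponential function to make CR tight} is purely algebraic --- it uses only $CR$ tightness at consecutive points and emptiness of the bad intervals in $(1,I_3)$ --- so it applies verbatim with initial value $0$, giving $f\equiv 0$ on $(1,I_3]$; with Theorem~\ref{thm: unique mass placement in [0,1]} this again yields agreement on $[0,I_3]$. (Alternatively, world $1$ is the large-$\delta$ regime and one could quote Theorem~\ref{our contribution: thm 2, general, large delta}, but the direct argument avoids any concern about circular dependence.)

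The step I expect to need the most care is not a computation but the justification that on $(1,I_3]$ the $CR$ equality is the \emph{only} binding constraint, so that the linear system ``$\alpha_f\equiv OPT$ on $[1,I_3]$ with fixed $[0,1]$-prefix'' has a unique solution and the mass constraint never silently re-enters to create extra freedom. This is exactly where the characterization of $I_3$ as the last point with empty bad interval is indispensable. A secondary point to nail down is that world $1$ and world $2$ genuinely partition all optimal solutions, which, as noted, follows from every optimal solution sharing the greedy $[0,1]$-prefix.
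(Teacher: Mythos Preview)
Your proposal is correct and follows essentially the same approach as the paper, which simply states that Theorem~\ref{thm: CR must be tight in [1,I3]} and Lemma~\ref{lem: exponential function to make CR tight} lead to the corollary without spelling out the details. Your write-up fills those details in the natural way: uniqueness on $[0,1]$, emptiness of bad intervals on $(1,I_3]$ so only the $CR$ equality is active, and then the recurrence of Lemma~\ref{lem: exponential function to make CR tight} forcing the exponential (or identically zero in world~1); your added remarks that the world is determined by the instance and that the recurrence applies verbatim with initial value $0$ are correct and useful clarifications the paper leaves implicit.
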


Now, we determine whether the optimal solution lies in world 1 or world~2 as $\delta$ varies. We first show that, in world 2, any optimal solution must have a $\delta$ mass in the bad interval suffix $(L_b,\infty]$.

\begin{restatable}{thm}{BadIntervalSuffixHasDeltaMass}
    \label{thm: bad interval suffix has exactly delta mass}
    For any optimal solution, if the competitive ratio is strictly decreasing over zero intervals after time 1 (i.e., world 2), then there is exactly $\delta$ mass in the bad interval suffix~$(L_b,\infty]$.
\end{restatable}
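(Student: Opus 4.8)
The plan is to prove Theorem~\ref{thm: bad interval suffix has exactly delta mass} by a two-sided argument: first show that in world 2 any optimal solution has \emph{at least} $\delta$ mass in the suffix $(L_b, \infty]$, and then combine this with the fact (established in Appendix~\ref{appendix:bad intervals in appendix}) that feasibility forces \emph{at most} $\delta$ mass there, yielding exactly $\delta$. The upper bound is immediate: $L_b$ is by definition the largest point not contained in any bad interval stretching out to $\infty$, so for $x$ slightly above $L_b$ the bad interval $I_\gamma(x)$ is of the form $(\cdot, \infty]$ and in fact contains $(L_b, \infty]$; the mass constraint at such $x$ then says $\int_{(L_b,\infty]} f_t\, dt \le \delta$. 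So the whole content is the lower bound.

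For the lower bound, I would argue by contradiction: suppose $f$ is an optimal solution in world 2 with strictly less than $\delta$ mass in $(L_b, \infty]$. The key is to use the world-2 characterization from Lemma~\ref{lem: CR behavior over zero intervals in two worlds}: in world 2 the competitive ratio \emph{strictly decreases} over every zero interval after time~$1$. Now consider the behavior of $\alpha_f$ near and beyond $L_b$. I would examine two cases. Case (a): $(L_b, \infty]$ is itself a zero suffix of $f$ (no mass at all after $L_b$, including possibly no mass at $\infty$, or some mass only at $\infty$). If there is strictly less than $\delta$ mass in $(L_b,\infty]$, then in particular $f_\infty < \delta$ (when combined with $L_b$ being the cutoff), and I can show that the CR constraint cannot be tight at $\infty$ in the required way, or more directly that we could shift a small amount of mass from within $[0,L_b]$ out to $\infty$ (or to a point just beyond $L_b$) without violating any mass constraint — because the bad interval of every point $x \le L_b$ does not reach past certain thresholds — thereby strictly decreasing the worst-case competitive ratio and contradicting optimality. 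Case (b): there is some mass strictly inside $(L_b, \infty)$; then using the structure of bad intervals for points in that range (their bad intervals are of the form $(\cdot,\infty]$, so they ``see'' all the suffix mass) together with $CR$-or-mass tightness considerations, I would again derive that mass can be profitably relocated.

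The cleanest route is probably to lean on Lemma~\ref{lem: existence of an opt solution with the longest tight prefix}: in world 2 there is an optimal solution consisting of a tight prefix ending at some $T$, possibly nonzero mass at $T+\tau$, and zero mass afterward. I would argue that for this canonical optimal solution, the tight prefix must extend at least to $L_b$ — if it stopped earlier at some $T < L_b$ with the total mass already exhausted (summing to $1$), then since $CR$ is strictly decreasing over the zero suffix $(T+\tau, \infty]$ (world 2), the supremum $\alpha_f = \sup_x \alpha_f(x)$ is attained strictly before $\infty$, and in particular $\alpha_f(\infty) < OPT$; but then one could remove a sliver of mass from the interior of the support and place it at $\infty$, which (i) does not increase any $\alpha_f(x)$ for finite $x$ beyond $OPT$ by a continuity/smallness argument, (ii) respects all mass constraints since the bad interval suffix currently holds less than $\delta$, and (iii) strictly lowers the value at the point(s) currently achieving $OPT$ — contradiction. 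Hence the tight prefix reaches $L_b$, and since every point in $(L_b, \infty]$ lies in the bad interval of points just above $L_b$, mass tightness at those points forces exactly $\delta$ mass in the suffix. Finally, by Corollary~\ref{cor: structure of every opt solution before I3} and Theorem~\ref{thm: unique mass placement in [0,1]}-style uniqueness arguments, this conclusion transfers to \emph{every} optimal solution in world 2, not just the canonical one.

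The main obstacle I anticipate is the perturbation/exchange argument: carefully verifying that moving a small amount of mass to $\infty$ (or just past $L_b$) does not violate the mass constraint at \emph{any} point $x$. This requires knowing precisely which bad intervals contain $\infty$ and which contain the newly-occupied point just past $L_b$, and checking that the points whose bad intervals gain mass were not already mass-tight — which is exactly where the assumption ``strictly less than $\delta$ mass in $(L_b,\infty]$'' and the world-2 strict-decrease property must be combined. I expect this to hinge on the detailed bad-interval bounds $I_3$ and $L_b$ from Appendix~\ref{appendix:bad intervals in appendix}, and on the observation that $\alpha_f(\infty)$ depends only on $f_\infty$, so controlling the redistribution near $\infty$ is tractable.
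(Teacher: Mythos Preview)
Your proposal takes a different route from the paper's proof, and both of its two main steps have genuine gaps.

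\textbf{The perturbation does not lower $OPT$.} You propose to ``remove a sliver of mass from the interior of the support and place it at $\infty$,'' claiming this strictly lowers the value at the point(s) achieving $OPT$. But by Theorem~\ref{thm: CR tight at time 1}, $x=1$ is always CR tight in any optimal solution, so $OPT = \alpha_f(1)$. If the interior point $t_0$ you move from satisfies $t_0 > 1$, then for every $x \le 1$ the mass at $t_0$ and the mass at $\infty$ contribute identically (both are ``not yet bought''), so $\alpha_f(1)$ is unchanged and $OPT$ does not drop. If instead $t_0 \le 1$, then for $x > 1+t_0$ the move \emph{increases} $\alpha_f(x)$ by $(1-a)(x-1-t_0)/(1-a+ax)$ per unit mass, and by Theorem~\ref{thm: CR must be tight in [1,I3]} there are CR-tight points in $(1,I_3]$; since $I_3$ can easily exceed $1+t_0$, the move pushes those points above $OPT$. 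So claim~(iii) fails in one direction and claim~(i) fails in the other. The paper circumvents this by a two-phase scheme: first iteratively move mass from the \emph{last} CR-tight point $t$ in $(1,\infty)$ out to the suffix (this leaves $\alpha_f(x)$ unchanged for $x<t$ and only raises it where nothing was tight), repeating until no CR-tight point remains in $(1,\infty)$; only then does it move mass from the \emph{first} CR-tight point in the whole domain (necessarily in $[0,1]$) to the suffix, which now safely lowers $OPT$.

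\textbf{The transfer to ``every'' optimal solution fails.} You propose to prove the claim for the canonical longest-tight-prefix solution of Lemma~\ref{lem: existence of an opt solution with the longest tight prefix}, then transfer via uniqueness. But the uniqueness results available (Theorem~\ref{thm: unique mass placement in [0,1]}, Corollary~\ref{cor: structure of every opt solution before I3}) only reach $I_3$, not $L_b$; indeed the paper explicitly notes (and exhibits experimentally in Section~\ref{sec:Simulations}) that optimal solutions can differ already in $[1,L_b]$. So agreement on $[0,L_b]$, and hence on the mass of $(L_b,\infty]$, cannot be deduced this way. The paper avoids the issue entirely by arguing directly on an arbitrary optimal solution $f$, never passing through the canonical one.
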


Then, we show that when $\delta < \frac{OPT-1}{\frac{1}{a}-1}$, every optimal solution must be in world 2. It is worth mentioning that this case could happen when $\delta$ is small enough because $OPT \in [\frac{e}{e-1+a},2-a]$ and then the implicit expression $\frac{OPT-1}{\frac{1}{a}-1}$ is bounded below by some value which is independent of $\delta$.

\begin{restatable}{thm}{SmallDeltaIsInWorldTwo}
    \label{thm:small delta is in world 2}
    If $\delta < \frac{OPT-1}{\frac{1}{a}-1}$, then every optimal solution must be in world 2.
\end{restatable}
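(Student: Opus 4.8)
The plan is to argue the contrapositive: I will show that if the instance is in \emph{world 1} (Definition~\ref{def: def of two worlds}), then necessarily $\delta \geq \frac{OPT-1}{\frac{1}{a}-1}$. Since which world we are in is determined by the forced greedy placement on $[0,1]$ (which is unique by Theorem~\ref{thm: unique mass placement in [0,1]}, so all optimal solutions lie in the same world), and since world 1 and world 2 partition the solution space, this is exactly the statement that every optimal solution is in world 2 once $\delta < \frac{OPT-1}{\frac{1}{a}-1}$.

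So assume we are in world 1 and let $f$ be the greedy solution, which is optimal by Theorem~\ref{our contribution:thm 1, greedy gives optimal}. The main step is to show $f_x = 0$ for every finite $x > 1$. Two elementary facts drive this. First, for any $x$ the value $\alpha_f(x)$ depends only on $\{f_t : t \leq x\}$: writing $\frac{t+1-a+a(x-t)}{1-a+ax} = 1 + \frac{(1-a)t}{1-a+ax}$, for $x > 1$ the mass strictly above $x$ and the mass at $\infty$ each contribute the identical factor $\frac{x}{1-a+ax}$, and their combined weight is just $1 - \int_0^x f_t\,dt$. Second, for $x > 1$, raising $f_x$ by $\epsilon>0$ raises $\alpha_f(x)$ by exactly $\epsilon\cdot\frac{1-a}{1-a+ax} > 0$, so the greedy algorithm can never add mass at a point where the $CR$ constraint is already tight. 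Now process greedily from the left. We have $\alpha_f(1) = OPT$ by Theorem~\ref{thm: CR tight at time 1}, and by the definition of world 1 the greedy algorithm places $0$ mass at $1+\tau$, which keeps the competitive ratio flat, so $\alpha_f(1+\tau)=OPT$. Inductively, suppose greedy has placed $0$ mass at each of $1+\tau,\dots,1+k\tau$; then if it tentatively places $0$ at $1+(k+1)\tau$ the set $[1,1+(k+1)\tau]$ is a zero interval, on which the competitive ratio is flat in world 1 (Lemma~\ref{lem: CR behavior over zero intervals in two worlds}), so $\alpha_f(1+(k+1)\tau) = \alpha_f(1) = OPT$; by the second fact any positive mass there would push $\alpha_f(1+(k+1)\tau)$ strictly above $OPT$, violating the $CR$ constraint, so greedy places $0$ at $1+(k+1)\tau$ as well. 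Hence $f_x = 0$ for all finite $x>1$, and all of the remaining mass sits at $\infty$.

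Consequently $\alpha_f(x) = OPT$ for every $x \in [1,\infty)$, and since $\alpha_f$ is continuous at $\infty$ (the finite part of the support lying in $[0,1]$, so $\int_0^\infty t f_t\,dt<\infty$), we get $\alpha_f(\infty) = \lim_{x\to\infty}\alpha_f(x) = OPT$. But $\alpha_f(\infty) = 1 + \left(\frac{1}{a}-1\right)f_\infty$, so $f_\infty = \frac{OPT-1}{\frac{1}{a}-1}$. On the other hand, in world 1 there is no mass in $(1,\infty)$ and $L_b\geq 1$ (which follows from $\gamma\geq 2-a$), so the mass contained in the bad-interval suffix $(L_b,\infty]$ is exactly $f_\infty$; since any feasible distribution carries at most $\delta$ mass in $(L_b,\infty]$, we conclude $f_\infty \leq \delta$. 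Putting the pieces together, $\delta \geq f_\infty = \frac{OPT-1}{\frac{1}{a}-1}$, which is the desired contrapositive.

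The crux is the second paragraph: showing that in world 1 the (optimal) greedy solution places no mass strictly between $1$ and $\infty$. This combines the two locality/monotonicity computations above with the flatness of the competitive ratio over zero intervals in world 1 (Lemma~\ref{lem: CR behavior over zero intervals in two worlds}); once that structural fact is in hand, the conclusion is a short computation reading $f_\infty$ off of $\alpha_f(\infty) = OPT$ and invoking feasibility of the suffix $(L_b,\infty]$. The one technical point to handle with care is the continuity of $\alpha_f$ at $x=\infty$, which needs $\int_0^\infty t f_t\,dt<\infty$; this holds automatically here because in world 1 the support of the finite part is contained in $[0,1]$.
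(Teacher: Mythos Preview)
Your proof is correct and follows essentially the same approach as the paper: argue the contrapositive by showing that world~1 forces $f_\infty = \frac{OPT-1}{\frac{1}{a}-1}$, then invoke feasibility to get $\delta \geq f_\infty$. The only difference is that the paper's version is terser, relying on the claim ``world~1 if and only if $CR$ is tight at $\infty$'' (which is implicit in the proof of Lemma~\ref{lem: in world 1 greedy gives optimal}) together with the asymptotic result in Appendix~\ref{appendix:zero suffix and zero interval}, whereas you inline and re-derive the zero-suffix structure of the greedy solution in world~1 explicitly via the inductive flatness argument.
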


Now we know that when $\delta$ lies in $[0,\frac{OPT-1}{\frac{1}{a}-1})$ (the small $\delta$ regime) we are in world 2, and when $\delta=1$ (no tail constraint) the classical solution is in world 1. We will prove that when $\frac{OPT-1}{\frac{1}{a}-1} \leq \delta \leq 1$ (the large $\delta$ regime), we are in world 1. That is, we prove Theorem~\ref{our contribution: thm 2, general, large delta}.

\OurContributionTWOLargeDelta*

Thus, in the large $\delta$ regime, the difference between the two-state with tail constraints setting and pure buying with tail constraints setting~\cite{Dinitz2024ControllingTR} is just that extra mass is added at $\infty$. However, when $\delta$ is decreasing and passing the threshold $\frac{OPT-1}{\frac{1}{a}-1}$, the structure of the optimal solution is fundamentally changed. 
By the fact that there are multiple optimal solutions in the small $\delta$ regime, the remaining part of this section will basically explore the structure of the \emph{greedy solution} in the small $\delta$ regime. 

By Theorem~\ref{thm: CR must be tight in [1,I3]} and Lemma~\ref{lem: exponential function to make CR tight}, we know that in the greedy solution, after time 1 the solution first has an exponential function to keep $CR$ tight. Then there are two possibilities: adding mass exponentially until $L_b$ does not violate any mass constraint (thus there is no mass tight point inside $[1,L_b]$), or at some point we must add less amount of mass to make the corresponding point mass tight. In the simulation Figure~\ref{fig: world 2 greedy with finite support} and Figure~\ref{fig: unique solution in world 1} one can see that both cases could happen. We show in Theorem~\ref{our contribution:thm 4, greedy solution, small delta} that in the greedy solution if the latter case happens, once some point has to become mass tight, it will keep mass tight until $L_b$. 

We formalize the structure theorem for the greedy solution in the small $\delta$ regime.

\OurContributionFOURGreedyStructure*

Note that although Theorem~\ref{our contribution:thm 4, greedy solution, small delta} states that the greedy solution always has finite support, the support can vary depending on different inputs $a,\delta$ and $\gamma$. However, based on the theorem, one can easily find another optimal solution with a finite support $[0,L_b] \cup \{\infty\}$ which is closely related to the greedy solution, by placing all the mass in $(L_b,\infty]$ to $\infty$. Together with the fact that in the large $\delta$ regime, there is a unique optimal solution with support $[0,1] \cup \{\infty\}$, we have a unified structure result for all $\delta$.

\OurContributionFiveAllDelta*

We will leverage the structure theorems to design two algorithms for computing optimal purchase distributions (Appendix~\ref{sec:algorithms}), and then present visualizations of the solution structures under various parameter settings (Appendix~\ref{sec:Simulations}).

\appendix

\section{Bad intervals}
\label{appendix:bad intervals in appendix}

Recall that the bad interval $I_\gamma(x)$ is defined to be
\[I_{\gamma}(x)=\{t:\alpha(t,x) > \gamma\}=\begin{cases} 
      \{t \leq x: \frac{t+1-a+a(x-t)}{x}> \gamma\},  & \text{for} \ 0 \leq x \leq 1 \\
      \{t \leq x: \frac{t+1-a+a(x-t)}{1-a+ax} > \gamma\} \cup \{t>x:\frac{x}{1-a+ax}>\gamma\},  & \text{for} \ x>1.
\end{cases}\]

If we try to simplify the above formula with a bit calculations, it is not hard to find that we have two different structures of the bad interval when $\gamma \geq \frac{1}{a}$ and when $2-a \leq \gamma < \frac{1}{a}$. When $\gamma \geq \frac{1}{a}$,
\[I_{\gamma}(x) = \begin{cases} 
      \{t : 0<t \leq x\},  & \text{for} \ 0 \leq x \leq \frac{1-a}{\gamma-a} \\
      \{t : \frac{\gamma-a}{1-a} x -1 <t \leq x\},  & \text{for} \ \frac{1-a}{\gamma-a} < x < \frac{1-a}{\gamma-1} \\
      \emptyset,  & \text{for} \ \frac{1-a}{\gamma-1} \leq x.
   \end{cases}\]
Note that in this case the bad interval is nonempty only in $[0,1]$. However, when $2-a \leq \gamma < \frac{1}{a}$, 
\[I_{\gamma}(x) = \begin{cases} 
      \{t : 0<t \leq x\},  & \text{for} \ 0 \leq x \leq \frac{1-a}{\gamma-a} \\
      \{t : \frac{\gamma-a}{1-a} x -1 <t \leq x\},  & \text{for} \ \frac{1-a}{\gamma-a} < x < \frac{1-a}{\gamma-1} \\
      \emptyset,  & \text{for} \ \frac{1-a}{\gamma-1} \leq x \leq \frac{(\gamma-1)(1-a)}{1-a\gamma} \\
      \{t : \frac{(\gamma-1)a}{1-a}x+\gamma-1<t \leq x\},  & \text{for} \ \frac{(\gamma-1)(1-a)}{1-a\gamma} < x \leq \frac{\gamma(1-a)}{1-a\gamma} \\
      \{t : \frac{(\gamma-1)a}{1-a}x+\gamma-1<t\},  & \text{for} \ \frac{\gamma(1-a)}{1-a\gamma} < x \\
      \{\infty\},  & \text{for} \ x=\infty.
   \end{cases}\]

In this case, it turns out that bad intervals are different when $x$ lies in five disjoint subintervals of $\mathbb{R}^{+}$. For simplicity, we call these subintervals of $x$ to be $BI_1$ to $BI_5$. When $x \in BI_1$, the corresponding bad interval is a prefix of $x$. When $x \in BI_2$, the corresponding bad interval is a subinterval of a prefix of $x$. When $x \in BI_3$, the bad interval of $x$ is empty. Note that $1 \in BI_3$, it implies that at time $1$ the mass constraint cannot be tight. When $x \in BI_4$, it is analogous to the structure of the bad interval of $x \in BI_2$. When $x \in BI_5$, the bad interval is part of its prefix plus a suffix stretches out to infinity. 
In particular, when $x = \frac{\gamma(1-a)}{1-a\gamma}$, we denote $L_b \coloneqq \frac{(\gamma-1)a}{1-a}x+\gamma-1 = \frac{\gamma-1}{1-a\gamma}$. Note that $L_b$ is the last point that is outside the bad interval of any point $x \in BI_5$, and we call $(L_b,\infty]$ the bad interval suffix. We also denote $I_3 \coloneqq \frac{(\gamma-1)(1-a)}{1-a\gamma}$ to be right boundary of $BI_3$. Note that $I_3 \geq 1$. We will use $I_3$ and $L_b$ in our analysis when we explore the structure of optimal solutions.

Our paper will focus on the more interesting case $2-a \leq \gamma < \frac{1}{a}$. In this setting, points after time 1 have nontrivial bad intervals, thus one can expect a nontrivial structure of optimal solutions in the two states setting.

\section{Definitions and results for zero suffix and zero interval}
\label{appendix:zero suffix and zero interval}

\subsection{Zero Suffix}

\begin{defn}
\label{def: def of a zero suffix}
    We say that a distribution $f$ has a zero suffix in $[1,\infty)$ if $\exists T \geq 1$ such that $f_t=0$ for every $t \in [T,\infty)$, where $[T,\infty)$ is called a zero suffix in $f$. 
\end{defn}

The following lemma shows the monotonicity of the competitive ratio function on a zero suffix in a distribution $f$.

\begin{lemma}
\label{lem: monotonicity of zero suffix}
    For any distribution $f$ with a zero suffix in $[1,\infty)$, the competitive ratio is a monotone function of $x$ on the zero suffix, and its limit when $x \rightarrow \infty$ equals the competitive ratio at $\infty$.
\end{lemma}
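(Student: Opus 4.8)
The plan is to evaluate $\alpha_f(x)$ explicitly on the zero suffix and observe that it reduces to a simple rational function of $x$, from which both monotonicity and the limit follow mechanically. Fix the zero suffix $[T,\infty)$ with $T \geq 1$, and let $x \geq T$. Since $x \geq 1$ we use the second displayed formula for $\alpha_f(x)$, and since there is no finite probability mass in $[T,\infty) \supseteq [x,\infty)$, the term $\int_{t=x}^{\infty} \frac{x}{1-a+ax}\,f_t\,dt$ vanishes — so the only contribution of ``mass above $x$'' is the atom at $\infty$. Using the identity $\frac{t+1-a+a(x-t)}{1-a+ax} = 1 + \frac{(1-a)t}{1-a+ax}$, the integral over $[0,T)$ rearranges into
\[
\alpha_f(x) \;=\; M \;+\; \frac{(1-a)\,S}{1-a+ax} \;+\; \frac{x}{1-a+ax}\,f_\infty,
\]
where $M \coloneqq \int_0^T f_t\,dt = 1 - f_\infty$ and $S \coloneqq \int_0^T t\,f_t\,dt$ are constants independent of $x$.

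From this closed form, monotonicity is immediate. Differentiating in $x$, using $\frac{d}{dx}\frac{1}{1-a+ax} = \frac{-a}{(1-a+ax)^2}$ and $\frac{d}{dx}\frac{x}{1-a+ax} = \frac{1-a}{(1-a+ax)^2}$, gives
\[
\alpha_f'(x) \;=\; \frac{(1-a)\bigl(f_\infty - a\,S\bigr)}{(1-a+ax)^2},
\]
whose sign is constant on $[T,\infty)$ because $0 \le a < 1$. Hence $\alpha_f$ is monotone on the zero suffix (non-increasing if $f_\infty \le aS$, non-decreasing otherwise). For the limit, as $x \to \infty$ we have $\frac{1}{1-a+ax} \to 0$ and $\frac{x}{1-a+ax} \to \frac{1}{a}$, so $\alpha_f(x) \to M + \frac{1}{a} f_\infty = (1-f_\infty) + \frac{1}{a} f_\infty = 1 + \bigl(\frac{1}{a}-1\bigr) f_\infty$, which is exactly the closed form recorded for $\alpha_f(\infty)$ in the preliminaries.

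There is no substantial obstacle here; the only steps needing care are (i) checking that the zero-suffix hypothesis is precisely what collapses the ``$t > x$'' part of $\alpha_f(x)$ to the single atom at $\infty$, and (ii) bookkeeping the constants $M$ and $S$ so that the entire $x$-dependence is isolated in the two rational terms. If one wishes to avoid differentiation, the same conclusion follows by writing $\alpha_f(x_2) - \alpha_f(x_1)$ for $T \le x_1 < x_2$ over a common denominator and checking that the numerator has a sign independent of $x_1$ and $x_2$; but the derivative computation is the cleanest route.
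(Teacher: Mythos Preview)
Your proof is correct and takes essentially the same approach as the paper: both evaluate $\alpha_f(x)$ on the zero suffix, collapse it to the form $(1-f_\infty) + \frac{(1-a)S}{1-a+ax} + \frac{x}{1-a+ax}f_\infty$ with $S=\int_0^T t f_t\,dt$, differentiate to get $\alpha_f'(x)=\frac{(1-a)(f_\infty - aS)}{(1-a+ax)^2}$, and then read off the limit directly. The notation differs (the paper writes $A,B$ for your $S,f_\infty$), but the argument is identical.
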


\begin{proof}
Let $[T,\infty)$ be a zero suffix of $f$. By Definition~\ref{def: def of a zero suffix}, we have $T \geq 1$. Thus for any~$x \in [T,\infty)$,
\begin{align*}
\alpha_f(x) =& \int_{t=0}^{x} { \frac{t+1-a+a(x-t)}{1-a+ax}f_t \, dt} + \frac{x}{1-a+ax} f_{\infty}\\
=& \int_{t=0}^{T} { \frac{t+1-a+a(x-t)}{1-a+ax}f_t \, dt} + \frac{x}{1-a+ax} f_{\infty}\\
=& \int_{t=0}^{T}{f_t \, dt} + \int_{t=0}^{T} { \frac{(1-a)t}{1-a+ax}f_t \, dt} + \frac{x}{1-a+ax} f_{\infty}\\
=& 1-f_{\infty} + \frac{1-a}{1-a+ax} \int_{t=0}^{T} { t f_t \, dt} + \frac{x}{1-a+ax} f_{\infty}.
\end{align*}
Note that for any given distribution $f$, $\int_{t=0}^{T} { t f_t \, dt}$ is a fixed value and we denote this value as constant $A$. Denote $B \coloneqq f_\infty$ which is also a constant. Then the above competitive ratio becomes
\[g(x) \coloneqq \alpha_f(x) = 1-B+\frac{1-a}{1-a+ax}A+\frac{x}{1-a+ax}B.\]
Taking the derivative of $g(x)$ implies that $g^{\prime}(x)=\frac{(1-a)(B-aA)}{(1-a+ax)^2}$, which is positive (negative) when the fixed threshold $B-aA$ is positive (negative), and is $0$ when $B-aA=0$. Thus,~$\alpha_f(x)$ is monotone on $[T,\infty)$.

To see that competitive ratio at $\infty$ equals the limit of the competitive ratio function when $x \rightarrow \infty$, note that $\frac{1-a}{1-a+ax} \rightarrow 0$ and $\int_{t=0}^{T} { t f_t \, dt}$ is a constant (thus bounded above), and $\frac{x}{1-a+ax} = \frac{1}{\frac{1-a}{x}+a} \rightarrow \frac{1}{a}$, hence $\alpha_f(x) \rightarrow 1-f_{\infty}+0+\frac{1}{a}f_\infty = 1+(\frac{1}{a}-1)f_{\infty} = \alpha_f(\infty)$.

This finishes Lemma~\ref{lem: monotonicity of zero suffix}.
\end{proof}

Two corollaries follows directly from Lemma~\ref{lem: monotonicity of zero suffix}.

\begin{corollary}
\label{cor: corollary one-to-all suffix}
For any distribution $f$ with a zero suffix $[T,\infty)$, if $CR$ is tight at some point $x \in (T,\infty)$, then $CR$ is tight at any point in $[T,\infty)$.
\end{corollary}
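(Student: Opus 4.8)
The plan is to obtain this corollary as an immediate consequence of Lemma~\ref{lem: monotonicity of zero suffix}. That lemma already establishes that on the zero suffix $[T,\infty)$ the competitive ratio function $g(x) = \alpha_f(x)$ is monotone, since its derivative $g'(x) = \frac{(1-a)(B-aA)}{(1-a+ax)^2}$ has the constant sign of the fixed quantity $B - aA$ (with $A = \int_{0}^{T} t f_t\,dt$ and $B = f_\infty$), and it also shows $\lim_{x\to\infty} g(x) = \alpha_f(\infty)$. So the only real work is to combine this monotonicity with the fact that $OPT = \max_x \alpha_f(x)$ is a global upper bound for $g$.

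First I would fix the given point $x \in (T,\infty)$ with $g(x) = OPT$ and record that $g(y) \le OPT$ for all $y \ge T$. Then I would rule out both strictly monotone cases. If $B - aA > 0$, then $g$ is strictly increasing, so any $y > x$ would give $g(y) > g(x) = OPT$, a contradiction. If $B - aA < 0$, then $g$ is strictly decreasing, and because $x$ lies \emph{strictly} inside the interval we may pick some $y$ with $T \le y < x$, for which $g(y) > g(x) = OPT$, again a contradiction. Hence $B - aA = 0$, so $g$ is constant on $[T,\infty)$ and equal to $g(x) = OPT$ there. To finish I would extend tightness to the two boundary points: $\alpha_f(T) = OPT$ follows from continuity of $\alpha_f$ (noted in the preliminaries), and $\alpha_f(\infty) = OPT$ follows from the limit clause of Lemma~\ref{lem: monotonicity of zero suffix}. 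Thus the $CR$ constraint is tight at every point of $[T,\infty)$, and at $\infty$ as well.

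I do not anticipate any substantive obstacle: the mathematical content is simply that a function monotone on an interval, bounded above by a value it attains at an interior point of that interval, must be constant. The only points needing a little care are keeping track that the hypothesis places the tight point in the \emph{open} interval $(T,\infty)$ (which is exactly what lets us find a point to its left when $g$ is decreasing), and treating $x=\infty$ separately through the limit rather than through monotonicity on the real line.
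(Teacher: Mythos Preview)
Your proposal is correct and follows the same approach as the paper, which simply states that the corollary follows directly from Lemma~\ref{lem: monotonicity of zero suffix}. Your argument spells out the one-line inference the paper leaves implicit: a monotone function on $[T,\infty)$ that is bounded above by $OPT$ and attains $OPT$ at an interior point must be constant (note that once $B-aA=0$ the formula for $g$ already gives $g(T)=OPT$, so the appeal to continuity at $T$ is harmless but unnecessary).
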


\begin{corollary}
\label{cor: zero suffix with non tight CR}
For any distribution $f$ with a zero suffix $[T,\infty)$, if $CR$ is not tight at $\infty$, then $CR$ is not tight at any $x \in (T,\infty)$.
\end{corollary}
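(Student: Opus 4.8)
The plan is to read this off Lemma~\ref{lem: monotonicity of zero suffix} directly, since that lemma already records the only two facts about $g(x) := \alpha_f(x)$ on the zero suffix $[T,\infty)$ that we need: $g$ is monotone there, and $\lim_{x\to\infty} g(x) = \alpha_f(\infty)$. The hypothesis ``$CR$ not tight at $\infty$'' unpacks to $\alpha_f(\infty) < OPT$, where $OPT = \max_x \alpha_f(x)$; note that $\alpha_f(\infty) \le OPT$ in any case, since $\alpha_f(\infty)$ is a limit of values of $\alpha_f$, so the hypothesis is really the \emph{strict} inequality.

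I would argue by contraposition, which lets me reuse Corollary~\ref{cor: corollary one-to-all suffix}. Suppose $CR$ were tight at some $x_0 \in (T,\infty)$, i.e.\ $g(x_0) = OPT$. By Corollary~\ref{cor: corollary one-to-all suffix} this forces $CR$ to be tight at every point of $[T,\infty)$, so $g \equiv OPT$ on $[T,\infty)$; letting $x \to \infty$ and invoking the limit clause of Lemma~\ref{lem: monotonicity of zero suffix} gives $\alpha_f(\infty) = OPT$, contradicting $\alpha_f(\infty) < OPT$. Hence no $x \in (T,\infty)$ is $CR$ tight. If a self-contained argument is wanted instead, split on the (constant) sign of $g'$ on $[T,\infty)$ as computed in the proof of Lemma~\ref{lem: monotonicity of zero suffix}: when $g$ is non-decreasing, every $x \in [T,\infty)$ has $g(x) \le \lim_{y\to\infty} g(y) = \alpha_f(\infty) < OPT$; when $g$ is strictly decreasing, every $x \in (T,\infty)$ satisfies $g(x) < g(T) = \alpha_f(T) \le OPT$. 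Either way $g(x) < OPT$, so $x$ is not $CR$ tight.

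The only step that needs a moment's care — essentially the sole ``obstacle'' — is the strictly decreasing branch of the direct argument: there $\alpha_f(\infty)$ is the \emph{infimum} of $g$ over the suffix, not its supremum, so one cannot bound $g(x)$ above by $\alpha_f(\infty)$. What saves it is that the interval in the statement is open at $T$: since $x > T$, strict monotonicity gives $g(x) < g(T) \le OPT$. The contraposition route via Corollary~\ref{cor: corollary one-to-all suffix} avoids even this case split, so that is how I would write the final proof.
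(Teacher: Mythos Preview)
Your proposal is correct and matches the paper's approach: the paper simply records that this corollary follows directly from Lemma~\ref{lem: monotonicity of zero suffix}, and both of your arguments (the contraposition via Corollary~\ref{cor: corollary one-to-all suffix} and the direct case split on the sign of $g'$) are exactly the kind of one-line unpacking the paper has in mind. Your observation about the open endpoint at $T$ in the strictly decreasing branch is the only subtlety, and you handle it correctly.
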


\subsection{Zero Interval}

We saw that the competitive ratio has good properties over the zero suffix. A natural extension is to explore if similar properties hold over a general zero interval which is defined as follows.

\begin{defn}
\label{def: def of a zero interval}
    We say that a distribution $f$ has a zero interval in $[1,\infty)$ if $\exists T_1, T_2$ with $1 \leq T_1 < T_2 < \infty$, such that $f_t=0$ for every $t \in [T_1,T_2]$, where $[T_1,T_2]$ is called a zero interval of $f$.

\end{defn}

The following lemma shows that the same monotone property of the competitive ratio function on the zero interval of a distribution $f$ holds.

\begin{lemma}
\label{lem: monotonicity of zero interval }
    For any distribution $f$ with a zero interval in $[1,\infty)$, the competitive ratio is a monotone function of $x$ on the zero interval.
\end{lemma}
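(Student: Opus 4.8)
The plan is to adapt the proof of Lemma~\ref{lem: monotonicity of zero suffix} almost verbatim; the only new point is handling the probability mass that lies \emph{beyond} the zero interval. Fix a zero interval $[T_1,T_2]$ of $f$, so $1 \le T_1 < T_2 < \infty$ and $f_t = 0$ for all $t \in [T_1,T_2]$, and let $x \in [T_1,T_2]$. Since $x \ge 1$, we use the $x \ge 1$ formula
\[
\alpha_f(x) = \int_{t=0}^{x} \frac{t+1-a+a(x-t)}{1-a+ax} f_t\, dt + \frac{x}{1-a+ax}\left(\int_{t=x}^{\infty} f_t\, dt + f_\infty\right).
\]
The key observation is that because $f$ vanishes on $[T_1,T_2]$ and $x$ lies in that interval, we may freeze the $x$-dependent limits of integration: $\int_{0}^{x} (\cdot)\, f_t\, dt = \int_{0}^{T_1} (\cdot)\, f_t\, dt$ and $\int_{x}^{\infty} f_t\, dt = \int_{T_2}^{\infty} f_t\, dt$. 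This is the \emph{only} place where the hypothesis ``$x$ is inside the zero interval'' is used essentially: restricting $x$ there freezes both the prefix mass (at $T_1$) and the suffix mass (at $T_2$).

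Next I would simplify the integrand via $\frac{t+1-a+a(x-t)}{1-a+ax} = 1 + \frac{(1-a)t}{1-a+ax}$, and set $M \coloneqq \int_{0}^{T_1} f_t\, dt$, $A \coloneqq \int_{0}^{T_1} t f_t\, dt$, and $C \coloneqq \int_{T_2}^{\infty} f_t\, dt + f_\infty$ --- all constants depending only on $f, T_1, T_2$. This gives
\[
\alpha_f(x) = M + \frac{(1-a)A}{1-a+ax} + \frac{Cx}{1-a+ax},
\]
which is exactly the functional form appearing in the zero suffix proof. Differentiating, using $\frac{d}{dx}\frac{1}{1-a+ax} = \frac{-a}{(1-a+ax)^2}$ and $\frac{d}{dx}\frac{x}{1-a+ax} = \frac{1-a}{(1-a+ax)^2}$, yields $\alpha_f'(x) = \frac{(1-a)(C-aA)}{(1-a+ax)^2}$, whose sign equals that of the fixed quantity $C - aA$ and is therefore constant on $[T_1,T_2]$. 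Hence $\alpha_f$ is monotone on the zero interval.

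There is no genuine obstacle here: the argument is a direct reduction to the computation already carried out for zero suffixes. The only thing to be careful about is the bookkeeping in the first step --- in particular that, unlike the suffix case where no mass at all sits past $x$, here mass does sit past $T_2$, but it contributes the \emph{same} constant $\int_{T_2}^{\infty} f_t\, dt$ for every $x$ in the interval, so the two explicit rational functions of $x$ are all that remain, and they combine into a derivative of constant sign.
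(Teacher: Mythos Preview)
Your proof is correct and essentially identical to the paper's own argument: both freeze the limits of integration using the vanishing of $f$ on $[T_1,T_2]$, reduce $\alpha_f(x)$ to the form $(\text{const}) + \frac{(1-a)A}{1-a+ax} + \frac{Cx}{1-a+ax}$, and differentiate to obtain $\alpha_f'(x) = \frac{(1-a)(C-aA)}{(1-a+ax)^2}$ with constant sign. The only cosmetic difference is that the paper uses the relation $M = 1-C$ (since $f$ is a probability distribution) to write the constant term explicitly, which you do not need for the monotonicity conclusion.
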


\begin{proof}
Let $I \coloneqq [T_1,T_2]$ be the zero interval of $f$. Also denote $P \coloneqq [0,T_1)$ and $S \coloneqq (T_2,\infty)$. Then by Definition~\ref{def: def of a zero interval}, we have $1 \leq T_1$. Thus for any $x \in [T_1,T_2]$,
\begin{align*}
\alpha_f(x) =& \int_P { \frac{t+1-a+a(x-t)}{1-a+ax}f_t \, dt} + \int_S {\frac{x}{1-a+ax}f_t \, dt}+ \frac{x}{1-a+ax} f_{\infty}\\
=& \int_P {f_t \, dt} + \frac{(1-a)}{1-a+ax} \int_P { t f_t \, dt} + \frac{x}{1-a+ax} (\int_S {f_t \,dt} + f_{\infty})
\end{align*}

Note that for any given distribution $f$, $\int_P {f_t \,dt}$, $\int_P { t f_t \, dt}$ and $\int_S {f_t \,dt}+f_\infty$ are fixed value and $\int_S {f_t \,dt}+f_\infty$ is the total mass after the zero interval. Hence, $\alpha_f(x)$ depends on the mass distribution in the prefix, and only the total mass in the suffix. For a given $f$, denote $c \coloneqq \int_S {f_t \,dt}+f_\infty$, and $E \coloneqq \int_P tf_t \,dt$. Hence $\int_P {f_t \,dt}=1-c$. Then the above competitive ratio function becomes
\[\alpha_f(x) = 1-c + \frac{1-a}{1-a+ax}E+\frac{x}{1-a+ax}c.\]
Taking the derivative of $\alpha_f(x)$ implies that $\alpha_f^{\prime}(x)=\frac{(1-a)(c-aE)}{(1-a+ax)^2}$, which is positive (negative) when the fixed threshold $c-aE$ is positive (negative), and is $0$ when $c-aE=0$. Thus,~$\alpha_f(x)$ is monotone on the zero interval $[T_1,T_2]$.

This finishes Lemma~\ref{lem: monotonicity of zero interval }.
\end{proof}

However, if we analogously define the zero interval in $[0,1]$, we have a strictly decreasing competitive ratio function of $x$ over zero intervals.

\begin{defn}
\label{def: def of zero interval from 0 to 1}
    We say that a distribution $f$ has a zero interval in $[0,1]$ if $\exists T_1,T_2 $ with $0 \leq T_1 < T_2 \leq 1$, such that $f_t=0$ for every $t \in [T_1,T_2]$, where $[T_1,T_2]$ is called a zero interval of $f$.

\end{defn}

\begin{lemma}
\label{lem: CR decreasing over zero interval}
    For any distribution $f$ with a zero interval in $[0,1]$, the competitive ratio is a strictly decreasing function of $x$ on the zero interval.
\end{lemma}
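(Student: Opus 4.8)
The plan is to run essentially the same argument as for the zero-interval lemma on $[1,\infty)$ (Lemma~\ref{lem: monotonicity of zero interval }), except that on a zero interval contained in $[0,1]$ the competitive ratio is normalized by $x$ rather than by $1-a+ax$, and it is this $1/x$ factor that turns monotonicity into \emph{strict} decrease.

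Let $I=[T_1,T_2]$ with $0\le T_1<T_2\le 1$ be the zero interval and fix $x\in I$. First I would substitute into the $0\le x\le 1$ formula for the competitive ratio,
\[
\alpha_f(x)=\int_{t=0}^{x}\frac{t+1-a+a(x-t)}{x}\,f_t\,dt+\Big(\int_{t=x}^{\infty}f_t\,dt+f_\infty\Big),
\]
and use that $I$ carries no mass: the integral over $[T_1,x]$ vanishes, so the first term depends only on the mass in $[0,T_1)$, and $\int_{t=x}^{\infty}f_t\,dt=\int_{t=T_2}^{\infty}f_t\,dt$ (up to measure-zero boundary effects). Rewriting the numerator as $t+1-a+a(x-t)=(1-a)(t+1)+ax$, this collapses to
\[
\alpha_f(x)=\frac{1-a}{x}\int_{0}^{T_1}(t+1)f_t\,dt+a\int_{0}^{T_1}f_t\,dt+\int_{T_2}^{\infty}f_t\,dt+f_\infty .
\]
For a fixed $f$, set $D:=\int_{0}^{T_1}(t+1)f_t\,dt\ge 0$ and let $K$ denote the sum of the last three (constant-in-$x$) terms; then $\alpha_f(x)=(1-a)D/x+K$ on $I$, so $\alpha_f'(x)=-(1-a)D/x^2$. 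Since $a<1$ and $x>0$ this is $\le 0$, and it is $<0$ precisely when $D>0$, i.e.\ when there is positive probability mass strictly before the interval, which yields the claimed strict decrease.

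The only delicate point is therefore ruling out the degenerate case $D=0$. If $D=0$ there is no mass in $[0,T_1)$ and none in $I$, so $\alpha_f\equiv 1$ on $[0,T_2]$ and the statement holds only in the weak sense; I expect the paper handles this by appealing to the context in which the lemma is invoked (for instance, optimal solutions, where the $CR$ constraint is tight at time $1$ with value $OPT\ge 2-a>1$, see Theorem~\ref{thm: CR tight at time 1}), or equivalently by reading the conclusion as ``non-increasing, and strictly decreasing whenever positive mass precedes the interval''. Aside from this bookkeeping, every step is a one-line computation, so I would not anticipate any substantive obstacle.
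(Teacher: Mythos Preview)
Your proof is correct and follows essentially the same computation as the paper: both write $\alpha_f(x)$ on the zero interval as a constant plus $(1-a)/x$ times a fixed nonnegative quantity, then differentiate. Your $D=\int_0^{T_1}(t+1)f_t\,dt$ is exactly the paper's $1-c+E$ (with $c=\int_{T_2}^\infty f_t\,dt+f_\infty$ and $E=\int_0^{T_1}tf_t\,dt$), so the derivative formulas coincide. The paper simply asserts $1-c+E>0$ without comment, whereas you correctly flag the degenerate case $D=0$ (no mass before the interval) where the ratio is constant rather than strictly decreasing; in that sense your treatment is slightly more careful, but the approach is identical.
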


\begin{proof}
We use the same idea in the previous lemma. Let $I \coloneqq [T_1,T_2]$ be the zero interval of $f$. Also denote $P \coloneqq [0,T_1)$ and $S \coloneqq (T_2,\infty)$. Then by Definition~\ref{def: def of zero interval from 0 to 1}, we have $0 \leq T_1 < T_2 \leq 1$. Thus for any $x \in [T_1,T_2]$,
\begin{align*}
\alpha_f(x) =& \int_P { \frac{t+1-a+a(x-t)}{x}f_t \, dt} + \int_S {\frac{x}{x}f_t \, dt}+ \frac{x}{x} f_{\infty}\\
=& (a+\frac{1-a}{x}) \int_P {f_t \, dt} + \frac{1-a}{x} \int_P { t f_t \, dt} + (\int_S {f_t \,dt} + f_{\infty})
\end{align*}

Note that for any given distribution $f$, $\int_P {f_t \,dt}$, $\int_P { t f_t \, dt}$ and $\int_S {f_t \,dt}+f_\infty$ are fixed value and $\int_S {f_t \,dt}+f_\infty$ is the total mass after the zero interval. Hence, $\alpha_f(x)$ depends on the mass distribution in the prefix, and only the total mass in the suffix. For a given $f$, denote $c \coloneqq \int_S {f_t \,dt}+f_\infty$, and $E \coloneqq \int_P tf_t \,dt$. Hence $\int_P {f_t \,dt}=1-c$. Then the above competitive ratio function becomes
\[\alpha_f(x) = (a+\frac{1-a}{x})(1-c) + \frac{1-a}{x}E+c.\]
Taking the derivative of $\alpha_f(x)$ implies that $\alpha_f^{\prime}(x)=-\frac{(1-a)(1-c+E)}{x^2}<0$, which is always negative. Thus, $\alpha_f(x)$ is strictly decreasing on the zero interval $[T_1,T_2]$.

This finishes Lemma~\ref{lem: CR decreasing over zero interval}.
\end{proof}

Similar to Corollary~\ref{cor: corollary one-to-all suffix}, the following useful result follows directly from Lemma~\ref{lem: monotonicity of zero interval } and Lemma~\ref{lem: CR decreasing over zero interval}.

\begin{corollary}
\label{cor: corollary point tight to interval tight}
For any distribution $f$ with a zero interval $[T_1,T_2]$, if competitive ratio constraint is tight at some point $x \in (T_1,T_2)$, then any point in $[T_1,T_2]$ has tight competitive ratio constraint.
\end{corollary}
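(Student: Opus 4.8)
The plan is to deduce this directly from Lemma~\ref{lem: monotonicity of zero interval } and Lemma~\ref{lem: CR decreasing over zero interval}, combined with the definition of $CR$ tightness: $x$ is $CR$ tight exactly when $\alpha_f(x) = OPT = \max_y \alpha_f(y)$, so in particular $\alpha_f(y) \le OPT$ for \emph{every} $y$. The one extra ingredient I would extract from the proofs of those two lemmas is that on a zero interval the derivative $\alpha_f'(x)$ has a \emph{single constant sign}: in the proof of Lemma~\ref{lem: monotonicity of zero interval } it equals $\frac{(1-a)(c-aE)}{(1-a+ax)^2}$ with $c-aE$ a fixed constant, and in the proof of Lemma~\ref{lem: CR decreasing over zero interval} it is $-\frac{(1-a)(1-c+E)}{x^2} < 0$. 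So on any zero interval $\alpha_f$ is either constant or strictly monotone.

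First I would treat a zero interval $[T_1,T_2]$ with $T_1 \ge 1$, where Lemma~\ref{lem: monotonicity of zero interval } applies. Suppose $\alpha_f(x) = OPT$ for some interior $x \in (T_1,T_2)$. If $\alpha_f$ were strictly increasing on $[T_1,T_2]$, then any $y$ with $x < y < T_2$ would satisfy $\alpha_f(y) > \alpha_f(x) = OPT$, contradicting $\alpha_f(y) \le OPT$; symmetrically, strict decrease is ruled out by taking $T_1 < y < x$. Hence $\alpha_f$ is constant on $[T_1,T_2]$, and this constant equals $\alpha_f(x) = OPT$, so $\alpha_f(y) = OPT$ — i.e.\ $y$ is $CR$ tight — for every $y \in [T_1,T_2]$, endpoints included. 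This is exactly the argument used for Corollary~\ref{cor: corollary one-to-all suffix}.

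Next I would handle $T_2 \le 1$, where Lemma~\ref{lem: CR decreasing over zero interval} gives strict decrease on $[T_1,T_2]$: for any interior $x$ and any $y \in (T_1,x)$ we get $\alpha_f(x) < \alpha_f(y) \le OPT$, so no interior point can be $CR$ tight and the claim holds vacuously. The only remaining case is a zero interval straddling $1$, i.e.\ $T_1 < 1 < T_2$; here I would split it as $[T_1,1] \cup [1,T_2]$ (a zero interval in the sense of Definition~\ref{def: def of zero interval from 0 to 1} and one in the sense of Definition~\ref{def: def of a zero interval}, respectively) and run both arguments: a tight interior point in $(1,T_2]$ forces $\alpha_f \equiv OPT$ on $[1,T_2]$, hence $\alpha_f(1) = OPT$, which together with strict decrease on $[T_1,1]$ makes $\alpha_f(y) > OPT$ for $y < 1$ near $1$ — impossible; and a tight interior point in $(T_1,1)$ is already impossible by strict decrease. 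So the hypothesis cannot occur in the straddling case either.

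I expect the only real bookkeeping hurdle to be this treatment of the point $t=1$ and the straddling interval, since the two lemmas are each stated for zero intervals lying entirely on one side of $1$; everything else is the short ``a function with fixed-sign derivative cannot attain its global maximum at an interior point unless it is constant there'' observation, applied to the explicit form of $\alpha_f$ on the zero interval.
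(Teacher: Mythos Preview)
Your proposal is correct and takes essentially the same approach as the paper, which simply states that the corollary ``follows directly from Lemma~\ref{lem: monotonicity of zero interval } and Lemma~\ref{lem: CR decreasing over zero interval}''; you have spelled out the intended one-line argument (a monotone function cannot attain its global maximum at an interior point unless constant). Your treatment of the straddling case $T_1 < 1 < T_2$ is extra care beyond what the paper's definitions strictly require---Definitions~\ref{def: def of a zero interval} and~\ref{def: def of zero interval from 0 to 1} only define zero intervals lying entirely in $[1,\infty)$ or $[0,1]$ respectively---but it is harmless and correct.
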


When we have a monotone decreasing $CR$ over a zero interval, one can show that the $CR$ is also monotone decreasing over an $\epsilon$-interval.\footnote{one can use the same technique of proof here for an $\epsilon$-interval. It turns out that the threshold is $c-aE-a\epsilon$. If $c-aE<0$, then $c-aE-a\epsilon<0$.} Here, an $\epsilon$-interval means that $f_t=\epsilon$ for any $t \in [T_1,T_2]$ and some $T_1,T_2$. The properties of zero suffix, zero interval and $\epsilon$-interval are crucial when we analyze the structure of the optimal solution.

To finish this section, we prove that the asymptotic property holds for any feasible distribution, not only for a zero suffix.

\begin{lemma}
    \label{lem: asymptotic property holds for any feasible distribution}
    For any feasible distribution, we have $\lim_{x \to \infty} \alpha_f(x) = \alpha_f(\infty)$.
\end{lemma}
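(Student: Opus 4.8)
The statement to prove is Lemma~\ref{lem: asymptotic property holds for any feasible distribution}: for any feasible distribution $f$, $\lim_{x \to \infty} \alpha_f(x) = \alpha_f(\infty)$. The plan is to mimic the limit argument already used in the proof of Lemma~\ref{lem: monotonicity of zero suffix}, but to handle the fact that a general feasible distribution need not have a zero suffix, so the tail mass in $(x,\infty)$ and the ``bought before $x$'' mass near $x$ both vary with $x$. The key is that for $x \ge 1$ we can split the $\alpha_f(x)$ formula into a part coming from $t \le x$ and a part coming from $t > x$ (plus $f_\infty$), and then take $x \to \infty$ on each piece.

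First I would write, for $x \ge 1$,
\[
\alpha_f(x) = \int_{t=0}^{x} \frac{t + 1 - a + a(x-t)}{1-a+ax} f_t \, dt + \frac{x}{1-a+ax}\left(\int_{t=x}^{\infty} f_t \, dt + f_\infty\right),
\]
and rewrite the first integrand as $f_t + \frac{(1-a)t}{1-a+ax} f_t$, exactly as in Lemma~\ref{lem: monotonicity of zero suffix}. This gives
\[
\alpha_f(x) = \int_{0}^{x} f_t \, dt + \frac{1-a}{1-a+ax}\int_{0}^{x} t f_t \, dt + \frac{x}{1-a+ax}\left(\int_{x}^{\infty} f_t \, dt + f_\infty\right).
\]
Now take $x \to \infty$ term by term. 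The first term tends to $\int_0^\infty f_t\,dt = 1 - f_\infty$ (since $f$ is a probability distribution with mass $f_\infty$ at $\infty$). The coefficient $\frac{1-a}{1-a+ax} \to 0$; the remaining factor $\int_0^x t f_t\,dt$ need not be bounded in general, so the mild obstacle here is to argue this middle term still vanishes. One clean way: feasibility forces a tail bound that keeps $\int_0^x t f_t \, dt$ from growing too fast relative to $1-a+ax$ — more directly, one can bound $\frac{1-a}{1-a+ax}\int_0^x t f_t \, dt \le \frac{1-a}{1-a+ax}\cdot x \int_0^x f_t \, dt \le \frac{(1-a)x}{1-a+ax} \to \frac{1-a}{a}\cdot 0$? — that crude bound is not quite enough, so instead split the integral at a large fixed $M$: $\int_0^M t f_t\,dt$ is a constant (so that piece, multiplied by the vanishing coefficient, goes to $0$), and $\frac{1-a}{1-a+ax}\int_M^x t f_t\,dt \le \frac{(1-a)x}{1-a+ax}\int_M^\infty f_t\,dt$, whose lim sup is $\frac{1-a}{a}\int_M^\infty f_t\,dt$, which can be made arbitrarily small by taking $M$ large. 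Hence the middle term $\to 0$.

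For the third term, $\frac{x}{1-a+ax} \to \frac{1}{a}$, and $\int_x^\infty f_t\,dt \to 0$ while $f_\infty$ is constant, so the third term tends to $\frac{1}{a} f_\infty$. Adding up, $\lim_{x\to\infty}\alpha_f(x) = (1-f_\infty) + 0 + \frac{1}{a} f_\infty = 1 + (\frac{1}{a}-1) f_\infty = \alpha_f(\infty)$, using the closed form for $\alpha_f(\infty)$ established in the Preliminaries. The only real subtlety — and the place I would be most careful — is the vanishing of the middle term when $f$ has heavy tails; the $M$-splitting argument above, using only that $f$ integrates to $1-f_\infty$ on $\mathbb{R}_{\ge 0}$, resolves it without invoking feasibility at all, so the lemma in fact holds for every distribution, which matches the intuition behind the statement.
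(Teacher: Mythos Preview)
Your proof is correct and follows essentially the same route as the paper: both rewrite $\alpha_f(x)$ for $x\ge 1$ as $\int_0^x f_t\,dt + \frac{1-a}{1-a+ax}\int_0^x t f_t\,dt + \frac{x}{1-a+ax}\bigl(\int_x^\infty f_t\,dt + f_\infty\bigr)$, take the first and third terms to their obvious limits, and handle the middle term by splitting the integral at a large cutoff (your $M$, the paper's $T_\epsilon$) so that the near part is a constant times a vanishing coefficient and the far part is bounded by $\frac{1-a}{a}$ times an arbitrarily small tail mass. Your closing remark that feasibility is not actually used---only that $f$ is a probability distribution on $\mathbb{R}_{\ge 0}\cup\{\infty\}$---is correct and worth noting.
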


\begin{proof}
Recall that at $x=\infty$, $\alpha_f(x) = 1+(\frac{1}{a}-1)f_\infty$. When $x \to \infty$,
\[\alpha_f(x) = \int_{t=0}^{x}{f_t \, dt} + \frac{(1-a)}{1-a+ax} \int_{t=0}^{x} { tf_t \, dt} + \frac{x}{1-a+ax} (\int_{t=x}^{\infty}{f_t \, dt} + f_\infty).\]
The first term goes to $1-f_\infty$ and the third term goes to $\frac{1}{a}f_\infty$ when $x \to \infty$. We claim that the second term goes to 0. In fact, since $\int_{t=0}^{\infty}{f_t \, dt} < \infty$, there exists some finite large $T_\epsilon$ such that $\int_{t=T_\epsilon}^{\infty}{ f_t \, dt}<\epsilon$, for any $\epsilon \to 0$. Thus $\int_{t=T_\epsilon}^{x}{ f_t \, dt}<\epsilon$, and then we have~$\int_{t=T_\epsilon}^{x}{ tf_t \, dt} \leq x \int_{t=T_\epsilon}^{x}{ f_t \, dt}<x \epsilon$. Then
\[ \frac{(1-a)}{1-a+ax} \int_{t=0}^{x} { tf_t \, dt} =  \frac{(1-a)}{1-a+ax} \int_{t=0}^{T_\epsilon} { tf_t \, dt} +  \frac{(1-a)}{1-a+ax} \int_{t=T_\epsilon}^{x} { tf_t \, dt}\]
where the former term goes to 0 as $x \to \infty$ and the latter term goes to 0 when $\epsilon \to 0$ as $\frac{(1-a)\epsilon}{a} \to 0$.
\end{proof}

\section{Proofs in Section~\ref{sec:structure of optimal solutions}}
\label{appendix:structure proofs in appendix}

In this appendix, we give proofs for all lemmas and theorems in Section~\ref{sec:structure of optimal solutions}.

\subsection{Proofs in Section~\ref{subsec: optimal solution by greedy algorithm}}
\label{appendix:proofs of theorems in section 3.1 in appendix}

\FirstTightPointHasMass*

\begin{proof}
    Suppose by a contradiction that there is no mass at time $t$. By the definition of~$t$, we know that the $CR$ at time $t-\tau$ is not tight, hence $CR$ is strictly increasing from $t-\tau$ to $t$, for sufficiently small time step~$\tau>0$. Without loss of generality, we can also assume at time $t-\tau, f_{t-\tau}=0$. This is because if there is mass at $t-\tau$ then it only increases the $CR$ at time $t-\tau$, hence the $CR$ goes down (from $t$ to $t-\tau$) means that it would still go down even if there is no mass at time $t-\tau$. Hence, $f_t=f_{t-\tau}=0$. It implies that $[t-\tau,t]$ is a subinterval of some zero interval and the $CR$ is strictly increasing over this subinterval. If there is no mass at time $t+\tau$, then $[t-\tau,t+\tau]$ is a zero interval, forcing that the $CR$ at $t+\tau$ becomes larger than OPT by the monotonic property of zero intervals, which is a contradiction. However, if there is mass at time $t+\tau$, its $CR$ will be even larger than when there is no mass at $t+\tau$, and hence still larger than OPT, which is a contradiction.

    Therefore, there is nonzero mass at time $t$.
\end{proof}

\CRtightAtOne*

\begin{proof}
    Suppose we have an optimal solution $f$ so that $CR$ is not tight at 1. Let us consider the first $CR$ tight point after 1, call it $t_0$, which may or may not exist. 
    
    First, suppose that such $t_0 < \infty$ exists. By Lemma~\ref{lem: first CR tight point has nonzero mass}, we have that $f_{t_0}$ is nonzero. Since time~1 is not in the bad interval of any other points, it is possible to move some tiny mass from $t_0$ to~1 and give us a feasible solution: we can move some extremely small amount of mass from~$t_0$ to time~1 to make the $CR$ not tight at time 1 (e.g., some mass to make the $CR$ at time~1 increase a half of the gap between OPT and the current $CR$). Our manipulation makes the $CR$ not tight at any $t_0 \leq t < \infty$ since moving mass from $t_0$ to some point earlier benefits the $CR$ at $t_0 \leq t < \infty$, however it may hurt the $CR$ at some point between 1 and $t_0$. Let $t_1$ be the first point with $1<t_1<t_0$ where the $CR$ becomes tight after our manipulation (if such $t_1$ exists). Again by Lemma~\ref{lem: first CR tight point has nonzero mass}, there is nonzero mass at~$t_1$. Similarly we move some mass from $t_1$ to~1 to make the $CR$ not tight at time~1, and repeat the procedure until any point $1 \leq t < \infty$ has non-tight $CR$ (we can do this in the discrete setting). It is important to note that after our mass moving argument (at least once by the definition of~$t_0$) it guarantees that there is no $CR$ convergent to OPT after time 1, namely that no point after 1 can have an arbitrarily close $CR$ to OPT. This observation makes us safe to finally move a sufficient small amount of mass from some point before time 1 to time 1, to keep feasibility. Now, since there is no $CR$ tight point passed~1, consider the first $CR$ tight point $t^0$ in the whole domain (i.e., the first $CR$ tight point after time 0). Lemma~\ref{lem: first CR tight point has nonzero mass} implies that $f_{t^0} \neq 0$ (if $t^0$ is the first point in the whole domain, we still need $f_{t^0} \neq 0$ to make it $CR$ tight). Hence, we can move some tiny mass from~$t^0$ to 1 in order to make the $CR$ not tight at any time $t \geq 1$ as there is no point whose $CR$ is arbitrarily close to OPT. At the same time, the $CR$ at any point $0 \leq t \leq 1$ decreases, since for any such point the best strategy is not to buy. Thus, we find a solution who has a strictly less OPT value, contradicts with the fact that $f_t$ is an optimal solution.

    Second, if $\infty$ becomes the first $CR$ tight point passed time 1, we can move some sufficiently small mass from $\infty$ to time 1, which leads to either some finite point passed~1 becomes $CR$ tight (where we have the first case) or all points passed 1 have non-tight $CR$ after moving all mass from $\infty$ (where we have the third case next).

    Third, if there is no $CR$ tight point (including at $\infty$) passed 1, all we need to show is that there is no $CR$ convergent (arbitrarily close) to OPT after time 1 as before, thus it is safe to move mass from the first tight point in the whole domain $t^0$ to time 1 and get a contradiction as we did previously. Since in this case no point has $CR$ tight passed 1 (including point 1), the convergence can only possibly happen at $\infty$. This contradicts with the asymptotic property we proved in Appendix~\ref{appendix:zero suffix and zero interval} that $CR$ at $x \to \infty$ converges to the $CR$ at $\infty$ which is not tight. Therefore, there is no arbitrarily close $CR$ to OPT after time 1, and we can go back to the first case which leads to a contradiction.

    This concludes that the competitive ratio must be tight at time 1 in any optimal solution.
\end{proof}

\SomethingTightInZeroToOne*

\begin{proof}
    Suppose by a contradiction that there exists $t_1 < 1$ so that neither is tight at $t_1$. Let $t_2>t_1$ be the first point such that something is tight at $t_2$. By Theorem~\ref{thm: CR tight at time 1} we know that the $CR$ is tight at $t=1$, so $t_2 \leq 1$. Then, let us move some sufficiently small amount of mass from $t_2$ to $t_1$. We will show that after this moving, we still get a feasible optimal solution, but in this optimal solution the competitive ratio is not tight at time 1, thus contradict with Theorem~\ref{thm: CR tight at time 1}.

    First of all, let us see $f_{t_2} \neq 0$. Note that $t_2$ is a point where something is tight. If $t_2$ is $CR$ tight, then since it is the first $CR$ tight point after $t_1$, by Lemma~\ref{lem: first CR tight point has nonzero mass}, there is nonzero mass at time $t_2$; If $t_2$ is mass tight and has zero mass, then since $t_2 \leq 1$, the bad interval structure implies that $t_2-\tau$ must also be mass tight, contradict with the fact that $t_2$ is the first something tight point after $t_1$. Thus, $f_{t_2} \neq 0$.

    Then, let us see the feasibility of our distribution after moving a sufficiently small amount of mass from $t_2$ to $t_1$. For those points $x$ where $t_1 \not\in I_{\gamma}(x)$, the total mass over $I_{\gamma}(x)$ cannot increase, in this case our new solution is feasible. For those points $x$ where $t_1 \in I_{\gamma}(x)$, if $t_2$ is also in $I_{\gamma}(x)$, then moving mass from $t_2$ to $t_1$ does not increase the overall mass over the bad interval, and therefore in this case our new solution is also feasible. Finally for those points $x$ where $t_1 \in I_{\gamma}(x)$ but $t_2 \not\in I_{\gamma}(x)$, since $t_1 < t_2 \leq 1$, by the structure of the bad interval, we know that $t_1 \leq x < t_2$. But note that $t_2$ is the first something tight point after~$t_1$, meaning that mass is not tight at $x$. Therefore moving a sufficiently small amount of mass to somewhere in $I_{\gamma}(x)$ can still make it feasible, as $[t_1,t_2]$ is compact.

    Finally, let us see the optimality of our new solution. For any $x < t_1$, $\alpha_f(x)$ does not change since $\alpha_f(x)$ depends only on the total mass after $x$ and does not depend on the mass distribution after $x$. For any $t_1 \leq x < t_2$, since neither constraint is tight at $x$, moving a sufficiently small amount of mass from $t_2$ to $t_1$ can still keep the $CR$ at $x$ not exceed OPT. For any $t_2 \leq x < \infty$, our moving of mass benefits the $CR$ at $x$, hence we have strictly decreasing~$CR$ at $x$.

    Thus after moving mass from $t_2$ to $t_1$, we still get a feasible optimal solution. In particular, since $t_2 \leq 1$, the $CR$ goes down at time 1, which contradicts with Theorem~\ref{thm: CR tight at time 1} that $CR$ must be tight at 1 for any optimal solution.
\end{proof}

\UniqueDistributionInZeroToOne*

\begin{proof}
    Let $\tau > 0$ be a sufficiently small time step to discretize the domain of the problem. Let $f^*$ be an optimal solution and $f: (0,1] \to \mathbb{R}$ be a function whose values are identical to the function values in $(0,1]$ defined in Definition~\ref{def: def of the greedy algorithm}. We will prove by induction that $f^*_i=f_i$ for all $i=\tau, 2\tau, \dots, 1$, which clearly implies the theorem.

    For the base case, it is not hard to see that $f_\tau = \min (\delta, \frac{OPT-1}{1-a}\tau)$ is the only possible feasible value which satisfies Theorem~\ref{thm: in any opt solution something is tight in [0,1]} for time $\tau$. To see this, first suppose that $f^*_\tau > \delta$. Then since $\tau \in I_{\gamma}(\tau)$ we have that $\sum_{t \in I_\gamma(\tau)}f^*_t > \delta$, which implies that $f^*$ is not feasible. This is a contradiction, and hence $f^*_\tau \leq \delta$. Similarly, suppose that $f^*_\tau > \frac{OPT-1}{1-a}\tau$. Then $\alpha_{f^*}(\tau) = \frac{\tau +1-a}{\tau} f^*_\tau + 1 -f^*_\tau = \frac{1-a}{\tau}f^*_\tau + 1 > OPT$, which contradicts the optimality of $f^*$. Hence, $f^*_\tau \leq f_\tau$. On the other hand, suppose that $f^*_\tau < f_\tau$. Then $f^*_\tau < \delta$ and $\alpha_{f^*}(\tau) < OPT$, which contradicts Theorem~\ref{thm: in any opt solution something is tight in [0,1]} for $x=\tau$. So, $f^*_\tau \geq f_\tau$, and thus $f^*_\tau = f_\tau$.

    For the inductive case, suppose that $f^*_t = f_t$ for all $t<j$. If $f^*_j> \delta - \sum_{t \in I_\gamma(j) \setminus \{j\}}f_t$, then 
    \[\sum_{t \in I_\gamma(j)}f^*_t = f^*_j + \sum_{t \in I_\gamma(j) \setminus \{j\}}f_t > \delta - \sum_{t \in I_\gamma(j) \setminus \{j\}}f_t + \sum_{t \in I_\gamma(j) \setminus \{j\}}f_t = \delta\]
    which contradicts the feasibility of $f^*$. So, $f^*_j \leq \delta - \sum_{t \in I_\gamma(j) \setminus \{j\}}f_t$.

    Similarly, suppose that $f^*_j > \frac{j}{1-a}(OPT-1) - \sum_{t<j}(1+t-j)f_t$. Then we have that 
    \begin{align*}
    \alpha_{f^*}(j) =& \sum_{t \leq j} \frac{t+1-a+aj-at}{j}f^*_t + 1 - \sum_{t \leq j} f^*_t \\
    =& 1+\left( \frac{j+1-a}{j} -1 \right) f^*_j + \sum_{t \leq j-\tau} \left( \frac{t+1-a+aj-at}{j} -1 \right) f^*_t \\
    =& 1+ \frac{1-a}{j} f^*_j + \sum_{t \leq j-\tau}  \frac{t+1-a+aj-at-j}{j} f_t  \\
    >& 1+(OPT-1) - \frac{1-a}{j} \sum_{t \leq j-\tau} (1+t-j)f_t + \sum_{t \leq j-\tau}  \frac{t+1-a+aj-at-j}{j} f_t  \\
    =& OPT,
    \end{align*}
    where the third equality is by the induction hypothesis. This is a contradiction since $f^*$ is optimal. Hence, $f^*_j \leq \frac{j}{1-a}(OPT-1) - \sum_{t<j}(1+t-j)f_t$.

    Now, suppose that $f^*_j<f_j$. Then following the above series of inequalities but switching the $>$ to $<$ implies that $\alpha_{f^*}(j)< OPT$. Also, 
    \[\sum_{t \in I_\gamma(j)}f^*_t = f^*_j + \sum_{t \in I_\gamma(j) \setminus \{j\}}f_t < \delta - \sum_{t \in I_\gamma(j) \setminus \{j\}}f_t + \sum_{t \in I_\gamma(j) \setminus \{j\}}f_t = \delta.\]
    But now we have a contradiction to Theorem~\ref{thm: in any opt solution something is tight in [0,1]} for $x=j$. Thus, $f^*_j \geq f_j$. Hence $f^*_j = f_j$ in $[0,1]$, as claimed.
\end{proof}

\CRcanNotIncreaseOverZeroIntervals*

\begin{proof}
We know by Lemma~\ref{lem: CR decreasing over zero interval}, in any optimal solution, in any zero interval before time 1, the $CR$ is non-increasing. Thus, it suffices to show the statement holds for any zero interval after time 1.

    We know from Theorem~\ref{thm: CR tight at time 1} that the $CR$ constraint must be tight at time 1 in any optimal solution $f$. Let $\tau >0$ be a sufficiently small time step. Suppose that there is a zero mass at time $1+\tau$. Then, the competitive ratio at time $1+\tau$ cannot increase (compared with the $CR$ at time 1) in any optimal solution as there are no other choices (since at time 1, $CR$ must equal to OPT in any optimal solution). Hence, over this zero interval (one point of zero mass), we cannot have increasing $CR$. In other words, $c-aE \leq 0$ in Lemma~\ref{lem: monotonicity of zero interval }. That is, if a zero interval is right after time 1, the total mass after the zero interval minus $a$ times the expectation before the zero interval is non-positive. Note that points in any other zero interval after time 1 can only have smaller (or equal) $c$ and larger (or equal)~$aE$ compared to those at time $1+\tau$. Therefore, the $CR$ cannot increase over any zero interval after time 1.
\end{proof}

\begin{defn}
    \label{def: def of the greedy algorithm}
    An algorithm is called greedy if it places mass to make either the competitive ratio constraint or the mass constraint tight at each point until it is not allowed to do so. 
    Formally, for sufficiently small time step $\tau>0$, let $L_3$ and $I_3$ be the left and right endpoints of the bad interval $BI_3$, $L_5$ be the left endpoint of $BI_5$, the greedy algorithm is the following.
    \begin{itemize}
        \item $f_\tau = min(\delta, \frac{OPT-1}{1-a} \tau).$
        \item For $x=2\tau, 3\tau, \dots, L_3$, $f_x = \min \{ \delta - \sum_{t \in I_{\gamma}(x)\setminus \{x\}} f_t ,  \frac{x}{1-a} (OPT-1)-\sum_{t<x} (1+t-x) f_t\}$.
        \item For $x=L_3+\tau,\dots,1$, $f_x = \frac{x}{1-a} (OPT-1)-\sum_{t<x} (1+t-x) f_t$.
        \item Check if $OPT = \frac{1+\tau}{1-a+a(1+\tau)}+\sum_{t \leq 1}\frac{(1-a)(t-\tau)}{1-a+a(1+\tau)}f_t$. That is, check whether $CR$ stays constant when adding zero mass at $1+\tau$. If yes, set $f_\infty = \frac{OPT-1}{\frac{1}{a}-1}$ and return $f$. Otherwise, continue to the next step.
        \item For $x=1+\tau,\dots,I_3$, let $m_x = \frac{1-a+ax}{1-a}OPT-\frac{x}{1-a} - \sum_{t<x}(1+t-x) f_t.$ \begin{itemize}
            \item If $1- \sum_{t < x}f_t \leq m_x$, set $f_x = 1- \sum_{t < x}f_t$. Return $f$.
            \item Otherwise, $f_x=m_x$.
        \end{itemize}
        \item For $x=I_3+\tau,\dots, L_5$, let $m_x = \min \{\delta - \sum_{t \in I_{\gamma}(x)\setminus \{x\}} f_t, \frac{1-a+ax}{1-a}OPT-\frac{x}{1-a} - \sum_{t<x}(1+t-x) f_t\}$. \begin{itemize}
            \item If $1- \sum_{t < x}f_t \leq m_x$, set $f_x = 1- \sum_{t < x}f_t$. Return $f$.
            \item Otherwise, $f_x=m_x$.
        \end{itemize}
        \item For $x \geq L_5+\tau$, let $m_x = \min \{\delta - \sum_{t=L_b+\tau}^{x-\tau} f_t, \frac{1-a+ax}{1-a}OPT-\frac{x}{1-a} - \sum_{t<x}(1+t-x) f_t\}$. \begin{itemize}
            \item If $1- \sum_{t < x}f_t \leq m_x$, set $f_x = 1- \sum_{t < x}f_t$. Return $f$.
            \item Otherwise, $f_x=m_x$.
        \end{itemize}
    \end{itemize}
\end{defn}

We note that when $x \geq L_5+\tau$, the formula to make mass tight at point $x$ changes. As we discussed, this is because of the different structure of the bad intervals: the algorithm has to make sure that the first point in $BI_5$ must always satisfy the mass constraint when adding mass at every point $x$. Moreover, we will argue that the greedy algorithm will terminate and hence has a finite support, in the proof of Theorem~\ref{our contribution:thm 1, greedy gives optimal}.

\GreedyInZeroToOneIsSafe*

\begin{proof}
    By Theorem~\ref{thm: unique mass placement in [0,1]}, we know that every optimal solution is the same in $[0,1]$. By Theorem~\ref{thm: in any opt solution something is tight in [0,1]} and Definition~\ref{def: def of the greedy algorithm}, we know that the greedy algorithm has the same mass distribution in $[0,1]$ as the optimal solution. By the formula of $CR$, we know that the $CR$ at $x$ only depends on the mass distribution before $x$. Therefore, by placing probability mass greedily in $[0,1]$, there is no point $T > 1$ where the $CR$ already exceeds $OPT$ --- otherwise the optimal solution is infeasible.
\end{proof}

\ZeroMassAfterTimeOneWillNotIncreaseTheCR*

\begin{proof}
    This corollary holds by taking $T=1+\tau$ in Lemma~\ref{lem: greedy in [0,1] is safe for points passed 1} and by the fact that the $CR$ at $x$ only depends on the mass distribution before $x$.
\end{proof}

\BehaviorOfZeroIntervalsInTwoWorlds*

\begin{proof}
    In world 1, by Definition~\ref{def: def of two worlds}, one has to place only zero mass at time $1+\tau$. Given the bad interval structure at time $1+\tau$, placing sufficiently small amount of mass does not violate the mass constraint. In other words, in world 1, any nonzero mass at time $1+\tau$ will cause the $CR$ at $1+\tau$ exceeding $OPT$. Since adding mass at $x$ only will increase the $CR$ at $x$, it implies that the $CR$ at $1+\tau$ equals $OPT$ by placing a zero mass at $1+\tau$ in world 1. Thus, we must have zero mass at finite points after $1+\tau$. By Lemma~\ref{lem: monotonicity of zero suffix}, we know that $CR$ is flat over $I$.

    In world 2, we are allowed to place nonzero mass at time $1+\tau$, meaning that if we put a zero at time $1+\tau$, the $CR$ will decrease. Therefore, $c-aE<0$ at time $1+\tau$ in Lemma~\ref{lem: monotonicity of zero interval }. By the fact that any point after time $1+\tau$ can only have smaller (or equal) $c$ and larger (or equal) $aE$ compared to those at time $1+\tau$, it turns out that $CR$ only decreases over $I$ by Lemma~\ref{lem: monotonicity of zero interval }.
\end{proof}

\InWorldOneGreedyGivesOptimal*

\begin{proof}
    In world 1, a zero mass at $1+\tau$ makes its $CR$ equals $OPT$. By Lemma~\ref{lem: CR behavior over zero intervals in two worlds}, $CR$ is flat over zero intervals after time 1. Due to Theorem~\ref{thm: unique mass placement in [0,1]}, any optimal solution has the same mass distribution in $[0,1]$ as the greedy algorithm placed, it turns out that there is a zero suffix in any optimal solution in world 1. Combined with Lemma~\ref{lem: monotonicity of zero suffix}, it implies that any optimal solution must be $CR$ tight at time $\infty$. In other words, $f_\infty = \frac{OPT-1}{\frac{1}{a}-1}$. This is what exactly the greedy algorithm does by Definition~\ref{def: def of the greedy algorithm}. Hence, in world 1, greedy algorithm gives an optimal solution. Furthermore, by our analysis, this solution is unique.
\end{proof}

\ExistenceOfAnOptSolutionWithLongestTightPrefix*

\begin{proof}
    Note that since there is a long suffix in our domain, there is no reason to believe that we have a unique optimal solution. Instead, let us consider the set of optimal solutions so that any solution in the set has the longest tight prefix among all optimal solutions. Take one optimal solution from this set, call it $f$. Also denote the longest tight prefix as $[0,T]$. If there is possibly nonzero mass at time $T+\tau$ and zero mass after time $T+\tau$, then $f$ is our required solution and we are done. If not, there exists some nonzero mass after time $T+\tau$. Let us then consider the first point $t_1>T+\tau$ with a nonzero mass and move some mass from $t_1$ to $T+\tau$. We claim that this will not violate any constraint. For the mass constraint, for any~$x$ such that $T+\tau \in I_\gamma(x)$, we can have either $I_\gamma(x) = (\cdot, x]$, or $I_\gamma(x)=[\cdot,\infty)$. For the latter case, if $T+\tau \in I_\gamma(x)$, then we must have $t_1 \in I_\gamma(x)$. For the former case, if $t_1<x$, then $T+\tau, t_1 \in I_\gamma(x)$. If $T+\tau \leq x < t_1$, we know that $t_1$ is the first nonzero mass point. This implies that $\sum_{t \in I_\gamma(x)}f_t \leq \sum_{t \in I_\gamma(T+\tau)}f_t < \delta$. Hence, one is able to move some amount of mass from $t_1$ to the point $T+\tau$ in $f$ by the compactness of $[T+\tau,t_1]$. This implies that our mass moving can only make $T+\tau$ become the first mass tight point. Now given feasibility, let us see how the $CR$ changes. Recall that $CR$ at point $x$ can be rewritten as only what happens beforehand. For any $x<T+\tau$, our mass moving will not change the $CR$ at time~$x$. Also recall that we are in world~2, meaning that over the zero interval the $CR$ is strictly decreasing. Hence any point~$x$ in the zero interval $[T+2\tau,t_1-\tau]$ cannot have tight $CR$. Moreover, our mass moving from~$t_1$ to $T+\tau$ can only benefit the $CR$ at $x \geq t_1$. Therefore, the only point whose $CR$ increases is the point $T+\tau$. Above all, either point $T+\tau$ becomes tight, or we run out of mass at time $t_1$. In the former case, we construct an optimal solution with a longer tight prefix than~$f$, which is a contradiction to the definition of $f$. In the latter case, we consider the next nonzero mass point and repeat the procedure, and finally get an optimal solution of the required form. 
\end{proof}

\OurContributionTheoremONEGreedyGivesOptimal*

\begin{proof}

    We denote the solution from Lemma~\ref{lem: existence of an opt solution with the longest tight prefix} as $f^p$, and the greedy algorithm returns a distribution $f^g$. We will prove that $f^g_x=f^p_x$ for each $x$ up to $L_b$ by induction.

    We have to argue that making something tight greedily without knowing the future keeps the points tight up to $L_b$. We know that $CR$ can be rewritten as only what happens beforehand. At the beginning, the bad interval of $x$ is in the form $I_\gamma(x) = (\cdot, x]$ and later it becomes $I_\gamma(x)=(\cdot,\infty]$. Any point before $L_b$ does not lie in the bad interval of $x$ in the second form, and for the first form there is no choice about which mass one can place at the next time to make something tight (the way of assigning mass is unique). If $T<L_b$, given the uniqueness of greedily placing mass, until one cannot make any point tight  (meaning that we place all the mass at the next place and nothing is tight), we know that is the special point $T+\tau$ and two solutions both have the same leftover mass. This says that, if points are tight from 0 up until some point $t$, then to make the next point be tight, we have no choice what to place. Let us formalize this.

    Note that $f^g$ does not violate any constraint. For the base case, consider $x=\tau$ and we must have $f_\tau^g \leq f_\tau^p = min(\delta, \frac{OPT-1}{1-a} \tau)$. If $f_\tau^g < f_\tau^p$, then at time $\tau$ neither is tight in~$f^g$. This violates the definition of greedy algorithm.

    For $\tau<x\leq L_b$, assume for induction that $f_t^g=f_t^p$ for all $t<x$. First, $f_x^g \leq f_x^p$. This is because on the one hand $f^g$ satisfies the mass constraint hence $f^g_x$ is less than the mass at $x$ in $f^p$ to make mass tight at $x$, thanks to the form of bad interval $I_\gamma(x) = (\cdot, x]$ when $x \leq L_b$. On the other hand, the $CR$ constraint at $x$ consists of terms for $t<x$ and the term at $x$. By the induction hypothesis, $f_x^g=f_x^p$ for the summation over $t<x$. Thus we find the required upper bound for the $f_x^g$ term by the feasibility. Again, if $f_x^g<f_x^p$, it violates the definition of greedy. Thus $f_x^g=f_x^p$ for each $x \leq L_b$.

    To finish the proof, since $f^p$ is optimal, the total mass equals 1. Also feasibility of $f^p$ implies that the total mass over $(L_b,\infty]$ in $f^p$ is no more than $\delta$, and thus there are at least $1-\delta$ mass over $[0,L_b]$ in $f^p$. Because $f^p=f^g$ in $[0,L_b]$, we know that there are at least $1-\delta$ mass in $[0,L_b]$ in $f^g$. Thus, it is impossible that the total mass is less than 1 if the algorithm places a total of $\delta$ mass in $(L_b,\infty]$. Note that it is allowed to place a $\delta$ mass in $(L_b,\infty]$ in a greedy way when we are in world 2, because a decreasing mass in this suffix to make the total mass in $(L_b,\infty]$ converge before $\delta$ cannot make $CR$ tight at the points (to make points $CR$ tight we should place more and more mass by the monotonicity property of the $\epsilon$-interval, by the fact that $CR$ only decreases in a zero interval, discussed in Appendix~\ref{appendix:zero suffix and zero interval}). Thus, the greedy algorithm gives an optimal solution to the problem.
\end{proof}

\subsection{Proofs in Section~\ref{subsec: general structure of optimal solutions}}
\label{appendix:proofs of theorems in section 3.2 in appendix}

\CRtightFromOneToIthree*

\begin{proof}
    Since the bad interval of any point $t \in [1,I_3]$ is empty, this theorem is equivalently to say that something must be tight in $[1,I_3]$. We prove this by contradiction.

    Suppose that neither is tight at some point $t_1 \in [1,I_3]$. Consider the first point $t_2$ after time~$t_1$ so that $CR$ is tight at $t_2$. If such $t_2$ does not exist, we have that any point is not~$CR$ tight after time $t_1$. If such $t_2$ exists, then we move a sufficiently small amount of mass (exists since~$t_2$ is $CR$ tight) from $t_2$ to $t_1$. Clearly, mass constraint holds as $t_1$ is not in the bad interval of any other point. For the $CR$ constraint, by the definition of $t_2$, $CR$ is not tight at any point in $[t_1,t_2)$, so moving a small amount of mass from $t_2$ to $t_1$ can keep $CR$ non-tightness at any point in $[t_1,t_2)$, as well as decreasing the $CR$ at any point in $[t_2,\infty)$. Now, any point in $[t_1,\infty)$ is not $CR$ tight. For point $t \in (1,t_1)$ we can move, one by one from right to left, some little mass to time $t_1$ so that no point is $CR$ tight after $t$ in each phase. Finally, we move some little mass from the first point to $t_1$, we get a solution with a less value of OPT, which gives us a contradiction.
\end{proof}

To prove Lemma~\ref{lem: exponential function to make CR tight}, we need the following observation. Let $\tau>0$ be a sufficiently small time step. Given that $CR$ is tight at time 1 (Theorem~\ref{thm: CR tight at time 1}), the next observation calculates the amount of mass at time $1+\tau$ to make it $CR$ tight.

\begin{lemma}
    \label{lem: mass at 1+tau to make CR tight}
    For $\tau>0$ small, to make $CR$ tight at time $1+\tau$, the amount of mass at time~$1+\tau$ is exactly $f_{1+\tau}=\tau (\sum_{t=0}^{1}f_t -1+a\sum_{t=0}^{1}tf_t)$.
\end{lemma}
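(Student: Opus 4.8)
The plan is to compute $\alpha_f(1+\tau)$ explicitly as a function of $f_{1+\tau}$ together with the mass that already lives in $[0,1]$ (which is pinned down by the uniqueness result in $[0,1]$), set this equal to $OPT$, and solve the resulting linear equation in $f_{1+\tau}$. The key inputs are Theorem~\ref{thm: CR tight at time 1} (so that $OPT$ has a clean closed form) and the discreteness of the grid (so that no mass lives strictly between $1$ and $1+\tau$).

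\textbf{Step 1: a closed form for $OPT$.} Since the $CR$ constraint is tight at time $1$ by Theorem~\ref{thm: CR tight at time 1}, we have $OPT=\alpha_f(1)$. For $t\le 1$ the relevant case of $\alpha(t,x)$ gives $\alpha(t,1)=\frac{t+1-a+a(1-t)}{1}=1+(1-a)t$, while for every $t>1$ (and for $t=\infty$) we have $\alpha(t,1)=1$. Using $\int_{0}^{\infty}f_t\,dt+f_\infty=1$, this yields $OPT=\int_{0}^{1}\bigl(1+(1-a)t\bigr)f_t\,dt+\bigl(\int_{t>1}f_t\,dt+f_\infty\bigr)=1+(1-a)\int_{0}^{1}t f_t\,dt$.

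\textbf{Step 2: expand $\alpha_f(1+\tau)$.} Since $1+\tau>1$, I use the $x>1$ formula; the denominator $1-a+a(1+\tau)$ simplifies to $1+a\tau$. Because there is no mass strictly between $1$ and $1+\tau$ on the discrete grid, I split the buying time into $t\in[0,1]$, $t=1+\tau$, and $t>1+\tau$ (together with $t=\infty$). For $t\le1$ the coefficient is $\frac{t+1-a+a(1+\tau-t)}{1+a\tau}=\frac{(1+a\tau)+(1-a)t}{1+a\tau}$; for $t=1+\tau$ it is $\frac{(1+\tau)+1-a}{1+a\tau}=\frac{2+\tau-a}{1+a\tau}$ (here $x-t=0$, so this is still the ``$t\le x$'' branch); and for $t>1+\tau$ or $t=\infty$ it is $\frac{1+\tau}{1+a\tau}$. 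Writing $R\coloneqq 1-\int_{0}^{1}f_t\,dt-f_{1+\tau}$ for the total mass strictly after $1+\tau$ (which includes $f_\infty$), this gives $\alpha_f(1+\tau)=\int_{0}^{1}f_t\,dt+\frac{1-a}{1+a\tau}\int_{0}^{1}t f_t\,dt+\frac{2+\tau-a}{1+a\tau}f_{1+\tau}+\frac{1+\tau}{1+a\tau}R$.

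\textbf{Step 3: solve.} Setting $\alpha_f(1+\tau)=OPT$, substituting the two displays above, clearing the factor $1+a\tau$, and expanding, the constant term and the $(1-a)\int_{0}^{1}tf_t\,dt$ term cancel against the corresponding pieces of $(1+a\tau)\,OPT$; the coefficient of $\int_{0}^{1}f_t\,dt$ collapses to $-(1-a)\tau$ and the coefficient of $f_{1+\tau}$ collapses to $1-a$. Dividing through by $1-a$ leaves $f_{1+\tau}=\tau\bigl(\int_{0}^{1}f_t\,dt-1+a\int_{0}^{1}tf_t\,dt\bigr)$, i.e.\ in discrete notation $f_{1+\tau}=\tau\bigl(\sum_{t=0}^{1}f_t-1+a\sum_{t=0}^{1}tf_t\bigr)$, as claimed.

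There is no real conceptual obstacle; the content is bookkeeping, and the places to be careful are: (i) identifying the correct case of $\alpha(t,x)$, in particular that $t=1+\tau$ sits in the ``$t\le x,\ x>1$'' branch with $x-t=0$ rather than in a separate branch; (ii) using the total-probability identity to eliminate $f_\infty$ and the tail mass so that only the already-fixed mass in $[0,1]$ and the unknown $f_{1+\tau}$ remain; and (iii) tracking the cancellations after clearing the $1+a\tau$ denominator. The discreteness assumption (nothing lives strictly between $1$ and $1+\tau$) is exactly what makes the three-way split of the integral exact.
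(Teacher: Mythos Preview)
Your proof is correct and follows essentially the same approach as the paper: both expand $\alpha_f(1+\tau)$ explicitly, set it equal to $OPT$, invoke CR-tightness at time $1$ to write $OPT=1+(1-a)\sum_{t\le 1}tf_t$, and solve the resulting linear equation in $f_{1+\tau}$. The only cosmetic difference is that the paper first isolates the intermediate form $f_{1+\tau}=\frac{\tau}{1-a}\bigl[(1-a)\sum_{t\le 1}f_t-1+a\,OPT\bigr]$ and then substitutes the closed form of $OPT$, whereas you fold in that substitution earlier in the algebra.
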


\begin{proof}
    Since we are considering discrete time points, $CR$ is tight at time 1 implies that $\sum_{t=0}^{1}tf_t = \frac{OPT-1}{1-a}$. Then to make $CR$ tight at time $x= 1+\tau$ we have $OPT$ equals
     \begin{align*}
    \alpha_f(x) =& \sum_{t=0}^{1+\tau} \frac{t+1-a+ax-at}{1-a+ax}f_t + \sum_{t=1+2\tau}^{\infty} \frac{x}{1-a+ax}f_t \\
    =& \sum_{t=0}^{1+\tau} f_t + \frac{1-a}{1-a+ax} \sum_{t=0}^{1+\tau}tf_t + \frac{x}{1-a+ax}\sum_{t=1+2\tau}^{\infty}f_t \\
    =& \sum_{t=0}^{1} f_t + f_{1+\tau} + \frac{1-a}{1-a+a(1+\tau)}(\sum_{t=0}^{1}tf_t+(1+\tau)f_{1+\tau})+\frac{1+\tau}{1-a+a(1+\tau)}\sum_{t=1+2\tau}^{\infty}f_t \\
    =& \sum_{t=0}^{1} f_t + f_{1+\tau} + \frac{OPT-1}{1-a+a(1+\tau)} + \frac{(1-a)(1+\tau)}{1-a+a(1+\tau)} f_{1+\tau}+\frac{1+\tau}{1-a+a(1+\tau)}(1-\sum_{t=0}^{1}f_t-f_{1+\tau})
    \end{align*}
    Simplify the above equation, we derived that $f_{1+\tau} = \frac{\tau}{1-a}[(1-a)\sum_{t=0}^{1}f_t -1+aOPT]$. Note that we are in world 2 and this term turns out to be positive, as $c<aE = \frac{OPT-1}{\frac{1}{a}-1}$ and hence $\sum_{t=0}^{1}f_t=1-c>1-\frac{OPT-1}{\frac{1}{a}-1}=\frac{1-aOPT}{1-a}$. We use $OPT= 1+ (1-a)\sum_{t=0}^{1}tf_t$ again to get another form of $f_{1+\tau}$, which is $f_{1+\tau}=\tau (\sum_{t=0}^{1}f_t -1+a\sum_{t=0}^{1}tf_t)$ as required.
\end{proof}

Because of the fact that 
\[f_{1+\tau}=\tau (\sum_{t=0}^{1}f_t -1+a\sum_{t=0}^{1}tf_t) \leq \tau (\sum_{t=0}^{1}f_t -1+a\sum_{t=0}^{1}f_t) = \tau ((1+a)\sum_{t=0}^{1}f_t -1) \leq \tau(1+a-1)=\tau a.\]
Let $\tau$ be small so that $\tau a <\delta$, then $f_{1+\tau}<\delta$. It turns out that we start with $CR$ tight at time $1+\tau$. 

Now, let us prove Lemma~\ref{lem: exponential function to make CR tight}.

\ExponentialToMakeCRtight*

\begin{proof}
    Let us prove by induction. For $k=1$ the statement is trivial. Assume that it is true for $1,2,\dots,k-1$ and let us prove it holds for $k$. Because of the fact that $CR$ is tight at both~$k$ and $k-1$, we have
    \begin{align*}
    0 =& \alpha_f(1+k\tau)-\alpha_f(1+(k-1)\tau) \\
    =& [\sum_{t=0}^{1+k\tau}f_t + \frac{1-a}{1+ka\tau} \sum_{t=0}^{1+k\tau} tf_t + \frac{1+k\tau}{1+ka\tau} - \frac{1+k\tau}{1+ka\tau} \sum_{t=0}^{1+k\tau}f_t] \\
    &-  [\sum_{t=0}^{1+(k-1)\tau}f_t + \frac{1-a}{1+(k-1)a\tau} \sum_{t=0}^{1+(k-1)\tau} tf_t + \frac{1+(k-1)\tau}{1+(k-1)a\tau} - \frac{1+(k-1)\tau}{1+(k-1)a\tau} \sum_{t=0}^{1+(k-1)\tau}f_t] \\
    =& \frac{1-a}{1+ka\tau}f_{1+k\tau} - \frac{(1-a)\tau}{(1+ka\tau)(1+(k-1)a\tau)}(\sum_{t=0}^{1+(k-1)\tau}f_t + a\sum_{t=0}^{1+(k-1)\tau}tf_t)+\frac{(1-a)\tau}{(1+ka\tau)(1+(k-1)a\tau)}.
    \end{align*}
    $\sum_{t=0}^{1+(k-1)\tau}f_t = \sum_{t=0}^{1}f_t + \frac{(1+\tau)^{k-1}-1}{\tau}f_{1+\tau}$ and $\sum_{t=0}^{1+(k-1)\tau}tf_t = \sum_{t=0}^{1}tf_t+(k-1)(1+\tau)^{k-1}f_{1+\tau}$ by the induction hypothesis, thus
    \[f_{1+k\tau}=\frac{\tau}{1+(k-1)a\tau}(\sum_{t=0}^{1+(k-1)\tau}f_t + a\sum_{t=0}^{1+(k-1)\tau}tf_t-1) = (1+\tau)^{k-1}f_{1+\tau}.\]
\end{proof}

\BadIntervalSuffixHasDeltaMass*

\begin{proof}
    Suppose not, then $BI_5$ can receive mass as the mass constraint for any point in $BI_5$ is not tight. Note that the last finite $CR$ tight point exists. This is because if not, the bad interval suffix has $CR$ tight infinitely often, but sum up to some fixed mass bounded by $\delta$. Thus $f_t \to 0$ as $t \to \infty$. But to keep the $CR$ tight infinitely often, we need more mass for later points. This is because we are in world 2, and the $CR$ decreases over an $\epsilon$-interval (an interval where each point has an $\epsilon$ mass) as over a zero interval, thus gives a contradiction. We have the last $CR$ tight point $t$.

    Now we can move sufficiently small amount of mass from $t$ to some far enough point in the suffix, and repeat for the new final tight point in $(1,\infty)$ until there is no $CR$ tight point in $(1,\infty)$ and the bad interval suffix still has less than $\delta$ mass. Finally, we move sufficiently small amount of mass from the very first tight point in our domain to some far enough point in the suffix to make no point being $CR$ tight, as well as keeping feasibility in the bad interval suffix. This gives us a solution with a $CR$ value strictly less than OPT, which is a contradiction.
\end{proof}

\SmallDeltaIsInWorldTwo*

\begin{proof}
Recall that $CR$ in a zero interval after time 1 can be either flat or decreasing in any optimal solution, depending on the value of $c-aE$.
As we discussed, when $c-aE=0$ (in Lemma~\ref{lem: monotonicity of zero interval }) the $CR$ function is flat while when $c-aE <0$ the $CR$ function is dropping, it turns out that we are in world 1 if and only if $CR$ is tight at $\infty$, by the asymptotic result for any feasible distribution $f$ (in Appendix~\ref{appendix:zero suffix and zero interval}). Recall that $CR$ is tight at $\infty$ if $f_\infty = \frac{OPT-1}{\frac{1}{a}-1}$, it implies that if $\delta < \frac{OPT-1}{\frac{1}{a}-1}$, $CR$ is not tight at $\infty$ and hence we must be in world 2. 
\end{proof}

\OurContributionTWOLargeDelta*

\begin{proof}
    By previous results, the only thing we left to show is that when $\frac{OPT-1}{\frac{1}{a}-1} \leq \delta \leq 1$, the competitive ratio function keeps flat after time~1 with the value of $OPT$. Suppose this is not true. Then we are in world 2 for some optimal solution $f$ with input values $\{a,\{\delta_i\},\{\gamma_i\}\}$. By Theorem~\ref{thm: CR tight at time 1}, at time 1, $CR$ is tight. So,
    \[OPT = \alpha_f(1) = \int_{t=0}^{1}{\frac{t+1-a+a(1-t)}{1}f_t \, dt} + \int_{t=1}^{\infty}{f_t \, dt}.\]
    Let $E \coloneqq \int_{t=0}^{1}{tf_t \, dt}$, then the above equality becomes $E=\frac{OPT-1}{1-a}$. Thus $aE = \frac{OPT-1}{\frac{1}{a}-1} \leq \delta$. On the other hand, since we are in world 2, by Lemma~\ref{lem: monotonicity of zero interval }, $c-aE < 0$. So, $c < aE \leq \delta$. By Theorem~\ref{thm: bad interval suffix has exactly delta mass}, in world 2 we must have $\delta$ mass in $(L_b,\infty]$. Thus $c \geq \delta$. This leads to a contradiction.
\end{proof}

\begin{restatable}{lemma}{BadIntervalIsOfLengthLessThanOne}
    \label{lem: bad interval has length less than 1}
    Any bad interval of $x \in (I_3,L_b]$ is of length less than 1.
\end{restatable}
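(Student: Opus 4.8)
The plan is to reduce the statement to a short computation once we pin down the exact form of $I_\gamma(x)$ for $x$ in this range. Recall from Appendix~\ref{appendix:bad intervals in appendix} that in the regime $2-a \le \gamma < \tfrac1a$ the bad interval of $x$ takes one of five forms according to which of $BI_1,\dots,BI_5$ contains $x$. The left endpoint of $BI_5$ is $L_5 = \frac{\gamma(1-a)}{1-a\gamma}$, so I would first check that $L_b = \frac{\gamma-1}{1-a\gamma} < L_5$, which is equivalent to $\gamma - 1 < \gamma(1-a)$, i.e.\ $a\gamma < 1$ — true in our regime. Since $I_3$ is the right endpoint of $BI_3$, this shows $(I_3, L_b] \subseteq BI_4 = (I_3, L_5]$, so for every $x$ in our range the bad interval is exactly the bounded interval $I_\gamma(x) = \left(\frac{(\gamma-1)a}{1-a}x + \gamma - 1,\ x\right]$, not one stretching out to $\infty$.

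Next I would compute its length directly:
\[
\ell(x) \;=\; x - \left(\frac{(\gamma-1)a}{1-a}x + \gamma - 1\right) \;=\; \frac{1-a\gamma}{1-a}\,x \;-\;(\gamma-1).
\]
Since $1 - a\gamma > 0$ and $1 - a > 0$, the coefficient of $x$ is positive, so $\ell$ is strictly increasing on $(I_3, L_b]$ and is maximized at $x = L_b$. (As a sanity check, $\ell(I_3) = 0$, consistent with $I_3$ being the last point with empty bad interval.)

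Finally, evaluating at $x = L_b$,
\[
\ell(L_b) \;=\; \frac{1-a\gamma}{1-a}\cdot\frac{\gamma-1}{1-a\gamma} - (\gamma-1) \;=\; (\gamma-1)\!\left(\frac{1}{1-a}-1\right) \;=\; \frac{a(\gamma-1)}{1-a},
\]
and $\frac{a(\gamma-1)}{1-a} < 1$ is equivalent to $a(\gamma-1) < 1-a$, i.e.\ $a\gamma < 1$, which again holds. Hence $\ell(x) \le \ell(L_b) = \frac{a(\gamma-1)}{1-a} < 1$ for all $x \in (I_3, L_b]$, which proves the lemma. The only place any care is needed is the first step — verifying that $x \in (I_3, L_b]$ really lands in $BI_4$, so that the bad interval is finite and given by the stated formula; everything after that is an elementary estimate on an increasing linear function.
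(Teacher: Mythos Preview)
Your proof is correct and follows essentially the same approach as the paper: compute the length of the bad interval as the linear function $\frac{1-a\gamma}{1-a}x-(\gamma-1)$, bound it by its value at $x=L_b$, and then use $a\gamma<1$ to conclude this value is below $1$. Your version is slightly more careful in that you explicitly verify $(I_3,L_b]\subseteq BI_4$ so that the finite bad-interval formula applies, whereas the paper simply asserts this form.
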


\begin{proof}
    Recall that $L_b = \frac{\gamma-1}{1-a\gamma}$ and $\gamma < \frac{1}{a}$. The bad interval of $x$ is $(\frac{(\gamma-1)a}{1-a}x+\gamma-1,x]$ which has length $\frac{1-a\gamma}{1-a}x-(\gamma-1)<\frac{1-a\gamma}{1-a}L_b-(\gamma-1)=(\frac{1}{1-a}-1)(\gamma-1)<\frac{a}{1-a}(\frac{1}{a}-1)=1.$
\end{proof}

\OurContributionFOURGreedyStructure*

\begin{proof}
    Let $f_t$ be the greedy solution. By Theorem~\ref{thm: in any opt solution something is tight in [0,1]}, the first statement holds. By Lemma~\ref{lem: exponential function to make CR tight}, the second statement holds.
    
    To prove the third statement in the theorem, we only need to show that if at some point before $L_b$ the greedy solution has to be mass tight, then it will keep mass tight until $L_b$. We prove this point in the discrete setting for convenience, while the result that mass tight at every point in $(p,L_b]$ can be generalized to a continuous setting. Let $T$ be the first point where the mass at time~$T$ is larger than the mass at time $T+\tau$. Then there are two cases: either $CR$ is tight up to time $T$ and mass is tight at time $T+\tau$ with $f_T>f_{T+\tau}$, or $CR$ is tight up to time $T-\tau$ and mass is tight at time $T$ with $f_{T-\tau} \leq f_T < (1+\tau)f_{T-\tau}$. Let us assume we are in the first case without loss of generality (proof for the second case is similar). Let $T_1$ be the first point where $T$ is not in the bad interval of $T_1$ but $T+\tau$ is. We call the bad interval of $T_1$ the first interval. Similarly, let $T_2$ be the first point where $T_1$ is not in the bad interval of $T_2$ but $T_1+\tau$ is, and we call the bad interval of $T_2$ the second interval. We define the $k^{th}$ interval in the same way in $[T,L_b]$ for each $k$, where the last interval is chopped at $L_b$ as a sub-interval of $(1,L_b]$.

    Let $\bar{f}$ be the mass distribution where $\bar{f_t}=f_t$ for each point $t \leq T$, and $\bar{f_t}$ has the amount of mass to make each point $t>T$ mass tight until $L_b$. We will show that each point $T<x \leq L_b$ in $\bar{f}$ satisfies the $CR$ constraint, i.e. $\alpha_{\bar{f}}(x) \leq OPT$. Then by the fact that $\alpha_f(x)$ depends only on $f_t$ where $t \leq x$, we know $\bar{f_t}=f_t$ for each point $t$ from left to right, hence they are equal the whole way. Thus $f_t$ has the required structure and we are done.

    By the definition of the $k^{th}$ interval, we know that the set of points inside the $k^{th}$ interval coincides with the set of points in some bad interval of some point, hence by Lemma~\ref{lem: bad interval has length less than 1}, the convex hull of points in the $k^{th}$ interval is of length less than 1. Let us explore what is the mass distribution within the $k^{th}$ interval. For the first interval, the first point is the one to make mass tight. Since we discretize points with time step $\tau$, for each point $x$ in the first interval, the bad interval of $x$ contains some previous discrete mass values. Note that the bad interval grows when $x$ is larger, and the left bound of the bad interval moves slower than the speed that $x$ moves. Hence, each time we have either $\bar{f_x}=0$ (where the bad intervals of $x$ and $x-\tau$ contain the same set of discrete points), or $\bar{f_x} = \bar{f_t}$ where $t<x$ is the point that just escaped the bad interval of $x$. Thus, by the definition of $T_1$, $\bar{f}_{T_1}=\bar{f_T}$. Thus, in the first interval (as well as in each defined interval), there is a bunch of zero mass and some repetitions of previous mass values. It turns out that $\bar{f_T}=\bar{f}_{T_1}=\bar{f}_{T_2}=\dots$ has the largest value of mass among $\bar{f_t}$ where $1<t \leq L_b$.

    Now, we show that the $CR$ of any point in the $k^{th}$ interval is less than the $CR$ of the last point in the $(k-1)^{th}$ interval $T_{k-1}$ for each $k=1,2,\dots$, where the last point in the $0^{th}$ interval is $T$. For each step $k=1,2,\dots$ we do the following: assume that there is a distribution~$\hat{f}$ satisfying $\hat{f}_t = \bar{f}_t$ at any point before the $k^{th}$ interval and $\hat{f}_t$ keeps $CR$ constant in the $k^{th}$ interval, with the constant value $\alpha_{\bar{f}}(T_{k-1})$. Since we want the $CR$ to be flat in the $k^{th}$ interval, the mass distribution of $\hat{f}_t$ must be an increasing function in the $k^{th}$ interval. Thus, $\hat{f}_t > \hat{f}_{T_{k-1}} = \bar{f}_{T_{k-1}} \geq \bar{f}_{t}$ for any $t$ in the $k^{th}$ interval. In other words, $\hat{f}_t$ pointwise dominates~$\bar{f}_t$ in the $k^{th}$ interval. We know that when $x \geq 1$, $\alpha_f(x) = \frac{x}{1-a+ax}+ \sum_{t=0}^{x} \frac{(1-a)(1+t-x)}{1-a+ax}f_t$. It implies that decreasing the value of $f_t$ when $x-1 < t \leq x$ only benefits $\alpha_f(x)$. Hence, for point $x$ in the $k^{th}$ interval, since any other point is within a distance of less than 1 to point $x$, we have $\alpha_{\bar{f}}(x) < \alpha_{\hat{f}}(x) = \alpha_{\hat{f}}(T_{k-1}) = \alpha_{\bar{f}}(T_{k-1})$. 

    Thus, we know that $\alpha_{\bar{f}}(x) < OPT$ for any $T < x \leq L_b$, which means that after the first mass tight point in $f_t$, at each point we need more mass to make $CR$ tight than to make mass tight. This completes the proof for the third statement.

    Finally, the last statement holds by the definition and feasibility of the greedy solution.
\end{proof}

\OurContributionFiveAllDelta*

\begin{proof}
    In the small $\delta$ regime, by Theorem~\ref{thm: bad interval suffix has exactly delta mass} there is exactly $\delta$ mass over $(L_b,\infty]$. Hence, we can put all the $\delta$ mass in the bad interval suffix to $\infty$ and still get a feasible optimal solution without violating the $CR$ constraint at $\infty$. Thus in both small and large $\delta$ regime, we have an optimal solution with support $[0,L_b] \cup \{\infty\}$.
\end{proof}


\section{Algorithms}
\label{sec:algorithms}

In this section, we will design algorithms that give optimal solution(s), provided that we have one tail constraint~$(\gamma,\delta)$. We will give two algorithms, a binary search algorithm and an LP algorithm. Recall that in Section~\ref{subsec: optimal solution by greedy algorithm}, the greedy algorithm can be used to analyze the structure of optimal solutions, but they cannot directly construct a solution because the optimal competitive ratio remains unknown, and hence the exact amount of mass required to achieve $CR$ tight is also unknown. However, based on the greedy algorithm, we show that there is a binary search algorithm that takes in an accuracy parameter $\epsilon>0$ and a guess of OPT in each step (Theorem~\ref{our contribution:thm 6, binary search}). On the other hand, as a consequence of Theorem~\ref{our contribution:thm 5, closely related greedy, all delta}, one can also design an LP algorithm (Theorem~\ref{our contribution:thm 7, LP}). We will discuss the trade-off between these algorithms.

\subsection{Binary Search Algorithm}
\label{subsec: binary search algorithm}

In Section~\ref{subsec: optimal solution by greedy algorithm}, we have a greedy algorithm which needs to know the information of OPT. A natural choice is to guess OPT and do a binary search. This subsection will give a way of constructing an optimal solution to the problem by guessing OPT. A mathematical program is needed to prove the correctness of our construction. Since we are going to use mathematical programming, we will consider discrete time points: let $\tau>0$ be a sufficiently small time step so that for any critical point $B$, we have $B/\tau$ as an integer, where the set of critical points consists of all boundary points of $BI_1, \dots, BI_5$, and time $1$ and $L_b$. Thus, we will find the mass distribution over discrete time $1 \cdot \tau, 2 \cdot \tau, 3 \cdot \tau, \cdots$. We relax the constraint that $\sum_{t=\tau}^{\infty} f_t=1$, hence $f_t$ is not necessarily a probability density function (PDF).

Let $T$ be a guess of OPT. First we want to find some similar zero interval properties of a mass distribution (may not sum up to 1) under a guess of value $T$, given the mass constraint and the ``competitive ratio constraint'' (we call it $T-$competitive ratio constraint or in short~$T$-$CR$, meaning that the $CR$ of every point does not exceed our guess $T$). We will show that the guess of $T$ will not change the monotonic property over the zero interval after 1 and asymptotic property of the zero suffix.

\begin{lemma}
    \label{lem: T-monotonicity over zero interval}
    For any mass distribution $f$ with a zero interval in $[1,\infty)$, given the mass constraint and the $T$-competitive ratio constraint, the $T$-competitive ratio is a monotone function of $x$ on the zero interval.
\end{lemma}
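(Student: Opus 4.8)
The plan is to reuse the computation from the proof of Lemma~\ref{lem: monotonicity of zero interval } essentially verbatim, observing that the only place the normalization $\sum_t f_t = 1$ entered there was to rewrite the prefix mass $\int_P f_t\,dt$ as $1-c$; since that quantity is independent of $x$, it plays no role in establishing monotonicity. The guess $T$ likewise never enters the derivative computation: the ``$T$-competitive ratio'' is just the same function $\alpha_f(x)$, and the $T$-$CR$ and mass constraints merely delineate the class of (possibly non-normalized) distributions to which the lemma is applied, not the monotonicity argument itself.

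Concretely, I would fix a zero interval $I = [T_1,T_2]$ of $f$ with $1 \le T_1 < T_2 < \infty$, and set $P \coloneqq [0,T_1)$ and $S \coloneqq (T_2,\infty)$. For $x \in I$ we have $x \ge 1$, so using the corresponding branch of $\alpha_f$ together with the identity $\frac{t+1-a+a(x-t)}{1-a+ax} = 1 + \frac{(1-a)t}{1-a+ax}$ we can write
\[\alpha_f(x) = \int_P f_t\,dt + \frac{1-a}{1-a+ax}\int_P t f_t\,dt + \frac{x}{1-a+ax}\Big(\int_S f_t\,dt + f_\infty\Big).\]
Writing $m_P \coloneqq \int_P f_t\,dt$, $E \coloneqq \int_P t f_t\,dt$, and $c \coloneqq \int_S f_t\,dt + f_\infty$, each of these is a finite constant depending only on $f$ ($E$ is finite because $P$ is bounded, and $m_P$ and $c$ are finite because the mass distributions produced by the greedy construction under a guess $T$ have finite total mass). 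Hence $\alpha_f(x) = m_P + \frac{1-a}{1-a+ax}E + \frac{x}{1-a+ax}c$, and differentiating gives
\[\alpha_f'(x) = \frac{(1-a)(c-aE)}{(1-a+ax)^2},\]
whose sign is that of the fixed threshold $c-aE$ (positive, negative, or zero accordingly). Therefore $\alpha_f$ is monotone on $I$: nondecreasing when $c-aE \ge 0$ and nonincreasing when $c-aE \le 0$.

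I expect no genuine obstacle here; the one point deserving care is confirming that $\alpha_f(x)$ is well defined for the relaxed $f$ (which need not be a PDF), i.e.\ that the integrals $m_P, E, c$ converge, which follows from the boundedness of $P$ together with the finite total mass of $f$ in the binary-search construction. Once that bookkeeping is in place the derivative computation is routine and the claim follows. The same template then also yields the companion statement about the zero suffix under the guess $T$ (monotonicity together with the asymptotic identity $\lim_{x\to\infty}\alpha_f(x) = \alpha_f(\infty)$), by taking $S = \emptyset$ so that $c = f_\infty$ and letting $x \to \infty$ in the formula above.
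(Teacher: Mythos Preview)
Your proposal is correct and follows essentially the same route as the paper's own proof: fix the zero interval, write $\alpha_f(x)$ as a constant plus $\frac{1-a}{1-a+ax}E + \frac{x}{1-a+ax}c$, differentiate, and read off the sign from the fixed threshold $c-aE$. The paper uses the notation $c_1,c_2$ in place of your $m_P,c$ (absorbing $f_\infty$ into the suffix integral), but the computation and the observation that neither the normalization nor the guess $T$ enters the derivative are identical.
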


\begin{proof}
Let $f$ be a mass distribution not necessarily sum up to 1.  Let $I \coloneqq [T_1,T_2]$ be a zero interval of $f$ after point 1. Also denote $P \coloneqq [0,T_1)$ and $S \coloneqq (T_2,\infty]$, and the $T$-$CR$ as $\alpha_f^T(x)$. Thus for any $x \in [T_1,T_2]$,
\begin{align*}
\alpha_f^T(x) =& \int_P { \frac{t+1-a+a(x-t)}{1-a+ax}f_t \, dt} + \int_S {\frac{x}{1-a+ax}f_t \, dt}\\
=& \int_P {f_t \, dt} + \frac{(1-a)}{1-a+ax} \int_P { t f_t \, dt} + \frac{x}{1-a+ax} \int_S {f_t \,dt}.
\end{align*}

For a given $f$, denote $c_1 \coloneqq \int_P {f_t \,dt}$, $E \coloneqq \int_P tf_t \,dt$, and $c_2 \coloneqq \int_S {f_t \,dt}$. Then the above competitive ratio function becomes
\[\alpha_f^T(x) = c_1 + \frac{1-a}{1-a+ax}E+\frac{x}{1-a+ax}c_2.\]
Taking the derivative of $\alpha_f^T(x)$ implies that ${\alpha_f^T}^{\prime}(x)=\frac{(1-a)(c_2-aE)}{(1-a+ax)^2}$, which is positive (negative) when the fixed threshold $c_2-aE$ is positive (negative), and is $0$ when $c_2-aE=0$. Thus, $\alpha_f^T(x)$ is monotone on the zero interval $[T_1,T_2]$.

We state that with a similar proof of Lemma~\ref{lem: asymptotic property holds for any feasible distribution}, one can also find that the $T$-$CR$ is strictly decreasing over the zero interval in $[0,1]$ with a guess of $T$. Note that the asymptotic property also holds in this setting. Let $[T_0,\infty)$ be a zero suffix. Any point $x \in [T_0,\infty)$ has a $T-$competitive ratio $\alpha_f^T(x) =  \int_{t=0}^{T_0} { f_t \, dt} + \frac{1-a}{1-a+ax} \int_{t=0}^{T_0} { t f_t \, dt} + \frac{x}{1-a+ax} f_{\infty}$. When $x \to \infty$, $\alpha_f^T(x)$ converges to $f_{<\infty}+0+\frac{1}{a}f_\infty$, where $f_{<\infty}$ is the total mass over $[0,\infty)$. On the other hand, at $x=\infty$, we have $\alpha_f^T(x) =  \int_{<\infty} { \frac{t+1-a+a(x-t)}{1-a+ax}f_t \, dt} + \frac{x}{1-a+ax}f_\infty$ which becomes $f_{<\infty}+\frac{1}{a}f_\infty$. Thus our asymptotic property holds.
\end{proof}

Now, we can construct an algorithm given a guess of $T$. We give $ALG(a, \gamma, \delta, T)$  in Algorithm~\ref{alg:ALG_subroutine}.

\begin{algorithm}[ht]
\caption{ALG\((a,\gamma,\delta,T)\), a subroutine for the binary search}\label{alg:ALG_subroutine}
\begin{algorithmic}[1]
  \Require $a,\gamma,\delta,T$
  \Statex \textbf{Notation:}
    \(\;L_3\) is the left bound for \(BI_3\),  
    \(\;I_3\) is the right bound for \(BI_3\),  
    \(\;L_5\) is the left bound for \(BI_5\).
  \State \(f_\tau \gets \min\bigl(\delta,\;\tfrac{T-1}{1-a}\,\tau\bigr)\)
  \For{\(x = 2\tau,3\tau,\dots\)}
    \If{\(\sum_{t<x}f_t \ge 1\)}
      \State \Return \(\{f_t\}\)
    \EndIf

    \If{\(x \le L_3\)}
      \State
      \(f_x \gets
        \min\Bigl(
          \delta - \sum_{t\in I_{\gamma}(x)\setminus\{x\}}f_t,\;
          \tfrac{x}{1-a}\bigl(T-1-\sum_{t<x}\tfrac{(1-a)(1+t-x)}{x}f_t\bigr)
        \Bigr)\)

    \ElsIf{\(L_3 + \tau \le x \leq 1\)}
      \State
      \(f_x \gets
        \tfrac{x}{1-a}\bigl(T-1-\sum_{t<x}\tfrac{(1-a)(1+t-x)}{x}f_t\bigr)\)

    \ElsIf{\(x = 1+\tau\)}
      \State
      \(f_x \gets
        \tfrac{1-a+ax}{1-a}\bigl[T-\tfrac{x}{1-a+ax}
        -\sum_{t<x}\tfrac{(1-a)(1+t-x)}{1-a+ax}f_t\bigr]\)
      \If{\(f_{1+\tau}=0\)}
        \State \(f_\infty \gets a\bigl(T-\sum_{t=\tau}^{1}f_t\bigr)\)
        \State \Return \(\{f_t\}\)
      \EndIf

    \ElsIf{\(1+2\tau \le x \le I_3\)}
      \State
      \(f_x \gets
        \tfrac{1-a+ax}{1-a}\bigl[T-\tfrac{x}{1-a+ax}
        -\sum_{t<x}\tfrac{(1-a)(1+t-x)}{1-a+ax}f_t\bigr]\)

    \ElsIf{\(I_3+\tau \le x \le L_5\)}
      \State
      \(f_x \gets
        \min\Bigl(
          \delta - \sum_{t\in I_{\gamma}(x)\setminus\{x\}}f_t,\;
          \tfrac{1-a+ax}{1-a}\bigl[T-\tfrac{x}{1-a+ax}
          -\sum_{t<x}\tfrac{(1-a)(1+t-x)}{1-a+ax}f_t\bigr]
        \Bigr)\)

    \ElsIf{\(x \ge L_5+\tau\)}
      \State
      \(f_x \gets
        \min\Bigl(
          \delta - \sum_{t=L_b+\tau}^{\,x-\tau}f_t,\;
          \tfrac{1-a+ax}{1-a}\bigl[T-\tfrac{x}{1-a+ax}
          -\sum_{t<x}\tfrac{(1-a)(1+t-x)}{1-a+ax}f_t\bigr]
        \Bigr)\)
      \If{\(f_x = 0\)}
        \State \Return \(\{f_t\}\)
      \EndIf

    \EndIf
  \EndFor
\end{algorithmic}
\end{algorithm}

\begin{lemma}
    \label{lem: algorithm terminates}
    Algorithm~\ref{alg:ALG_subroutine} will terminate.
\end{lemma}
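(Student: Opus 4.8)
The plan is to show that on every input, one of the three return conditions of Algorithm~\ref{alg:ALG_subroutine} fires after finitely many iterations: the running total $\sum_{t<x}f_t$ reaches $1$; at $x=1+\tau$ the ``world~1'' shortcut $f_{1+\tau}=0$ triggers; or $f_x=0$ at some $x\ge L_5+\tau$. This parallels the termination argument sketched inside the proof of Theorem~\ref{our contribution:thm 1, greedy gives optimal}, with the guess $T$ playing the role of $OPT$. If $f_{1+\tau}=0$ the algorithm returns at $x=1+\tau$, so assume henceforth $f_{1+\tau}>0$, the $T$-analogue of being in world~2. By Lemma~\ref{lem: T-monotonicity over zero interval} the $T$-competitive ratio is monotone over any zero interval or zero suffix after time~$1$, with threshold $c_2-aE$; since $f_{1+\tau}>0$ forces this threshold to be negative at $x=1+\tau$, and in later zero intervals the trailing mass $c_2$ only shrinks while the prefix expectation $E$ only grows, the $T$-$CR$ is in fact strictly decreasing over every zero interval and zero suffix past~$1$.

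Next I would track the suffix mass $M(x):=\sum_{L_b+\tau\le t\le x}f_t$ and establish the invariant $M(x)\le\delta$: by the bad-interval formulas of Appendix~\ref{appendix:bad intervals in appendix}, for $x\in(I_3,L_5]$ the bad interval of $x$ contains $(L_b,x-\tau]$, and for $x\ge L_5+\tau$ the algorithm's mass cap is exactly $\delta-M(x-\tau)$; either way $f_x\le\delta-M(x-\tau)$, so $M(x)\le\delta$. Since the masses produced are nonnegative (true for the range of guesses $T$ used by the binary search), $M(x)$ is nondecreasing and bounded, hence convergent, so $f_x\to 0$ along $BI_5$. Now the key dichotomy: either at some $x\ge L_5+\tau$ the mass cap is the binding term, i.e.\ $M(x)=\delta$, in which case the cap at $x+\tau$ is $\delta-M(x)=0$ and the algorithm returns with $f_{x+\tau}=0$; or the $CR$ cap is the binding term at every point of $BI_5$, so $CR$ is tight throughout $BI_5$. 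In the latter case a one-step difference of $\alpha_f(x)=\frac{x}{1-a+ax}+\sum_{t\le x}\frac{(1-a)(1+t-x)}{1-a+ax}f_t$ across consecutive points of $BI_5$ (the same computation as in Lemma~\ref{lem: exponential function to make CR tight}) gives $f_{x+\tau}=\tau(S_x-c)$ with $S_x:=\sum_{t\le x}f_t$ and $c:=\frac{1-aT}{1-a}$, hence the recurrence $S_{x+\tau}-c=(1+\tau)(S_x-c)$. Nonnegativity of $f$ forces $S_x-c\ge 0$ on $BI_5$; if it is ever strictly positive the partial sums grow geometrically, so $S_x\ge 1$ after finitely many steps and the algorithm returns, while if it equals $0$ then the next mass is $0$ and the algorithm returns. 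In all cases the algorithm halts.

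The step I expect to be the main obstacle is making this world~2 dichotomy airtight. In particular one must verify that the $CR$ cap stays nonnegative throughout---so that ``binding term'' and the clean recurrence keep their meaning---which ultimately rests on the fact that inserting zero mass at a point past~$1$ cannot raise the $T$-$CR$ above its (already $\le T$) value; and one must check that the recurrence $S_{x+\tau}-c=(1+\tau)(S_x-c)$ survives the discretization-induced zero subintervals inside $BI_5$ (Lemma~\ref{lem: bad interval has length less than 1} is the tool that confines the relevant interactions to a window of length less than~$1$). The residual degenerate cases---an exact hit $S_x=c$, or the mass cap and $CR$ cap trading off as the binding term---each still yield either $f_x=0$ or a total mass exceeding~$1$, so the qualitative conclusion is unchanged.
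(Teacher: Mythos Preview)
Your proposal is correct and follows essentially the same approach as the paper: the same three-way case split on the return conditions, the same reduction to world~2 via $f_{1+\tau}>0$, and the same dichotomy in $BI_5$ between the mass cap binding (forcing $f_{x+\tau}=0$) and the $CR$ cap binding. The only difference is that where the paper argues qualitatively that ``more mass is needed at the next point to make $T$-$CR$ tight'' (via the zero-interval monotonicity), you make this quantitative by deriving the geometric recurrence $S_{x+\tau}-c=(1+\tau)(S_x-c)$ from the calculation of Lemma~\ref{lem: exponential function to make CR tight}; this is a refinement rather than a different route, and the termination conclusion---either the total mass passes~$1$ or the mass cap eventually binds---is the same.
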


\begin{proof}
    Clearly, when $\sum_{t<x}f_t \geq 1$ for some point $x$, the algorithm will terminate. When $\sum_{t<x}f_t <1$ in the whole domain, if $f_{1+\tau}=0$, then we know that in this case a zero mass can make $T$-$CR$ tight at time $1+\tau$, the algorithm will terminate. If not, we know that some nonzero mass is needed to make $T$-$CR$ tight at time $1+\tau$, hence the $T$-$CR$ is decreasing over the zero interval after time 1. In this case, we want to show that the algorithm terminates. By our construction for $x \geq L_5+\tau$, if the amount of mass at time $x$ to make mass tight (the first term) is smaller than the one to make $T$-$CR$ tight (the second term), then we know that $\sum_{t=L_b+\tau}^{x} f_t = \delta$, and $f_{x+\tau}=0$. If the second term is smaller, by the monotonicity of $T$-$CR$ over a zero interval, we know that more mass is needed at the next point to make $T$-$CR$ tight compared to the current point. In other words, it is impossible that the mass distribution goes on infinitely long with less and less mass to make the total mass converges to some number. Thus, mass tight will dominate the term finally, and hence the above algorithm can terminate.
\end{proof}

Given Algorithm~\ref{alg:ALG_subroutine} as a subroutine, we provide the binary search algorithm\footnote{Since the greedy algorithm gives optimal solution for every possible tail constraint $(\gamma,\delta)$, the binary search algorithm is able to compute the optimal distribution even for a collection of tail bounds $\{(\gamma_i, \delta_i)\}_{i=1}^k$, by taking the minimum amount of mass one can add each time into account in the subroutine algorithm, among all mass constraints.}, which takes in an accuracy parameter $\epsilon > 0$. Given $2-a \leq \gamma <\frac{1}{a}$ and $0 \leq \delta \leq 1$. At each step we maintain an interval $[l,u]$ such that $T \in [l,u]$. Then, we give Algorithm~\ref{alg:binary_search_alg}.

\begin{algorithm}[ht]
\caption{BinarySearch\((a,\gamma,\delta,\epsilon)\)}\label{alg:binary_search_alg}
\begin{algorithmic}[1]
  \Require $a,\gamma,\delta,\epsilon$
  \State $l \gets \dfrac{e}{e-1+a},\quad u \gets 2 - a$
  \While{$u - l > \epsilon$}
    \State $T \gets \dfrac{l + u}{2}$
    \State $f \gets \mathrm{ALG}(a,\gamma,\delta,T)$
    \If{$\sum_t f_t < 1$}
      \State $l \gets T$
    \Else
      \State $u \gets T$
    \EndIf
  \EndWhile
  \State $x^* \gets \min\{\,x : \sum_{t\le x}f_t \ge 1\}$
  \State \Return
  $\displaystyle
    f'_x =
    \begin{cases}
      f_x, & x < x^*,\\
      1 - \sum_{t < x^*}f_t, & x = x^*,\\
      0, & x > x^*.
    \end{cases}$
\end{algorithmic}
\end{algorithm}

We show that Algorithm~\ref{alg:binary_search_alg} is the required algorithm in Theorem~\ref{our contribution:thm 6, binary search}.

\OurContributionSixBinarySearch*

For the remaining part of Section~\ref{subsec: binary search algorithm}, we prove Theorem~\ref{our contribution:thm 6, binary search}. Let us start from the construction $ALG(a,\gamma, \delta, T)$. Let $x^*$ be the first point with $\sum_{t \leq x^*}f_t \geq 1$.
We will show that Algorithm~\ref{alg:ALG_subroutine} solves the following mathematical program (MP). This helps us prove the correctness of search procedure.

\[
\begin{aligned}
 \max  & \  \sum_{t=\tau}^{\infty} f_t \\
\text{s.t.} & \quad \sum_{t \leq x} \frac{(1-a)(1+t-x)}{x} f_t + 1 \leq T \quad &&\text{for } x=\tau, 2\tau, \dots, 1 \\
& \quad \sum_{t \leq x} \frac{t+1-a+a(x-t)}{1-a+ax} f_t + \frac{x}{1-a+ax} \max(0,1-\sum_{t \leq x} f_t) \leq T \quad &&\text{for } x=1, 1+\tau, \dots \\
& \quad \sum_{t \in I_{\gamma}(x)} f_t \leq \delta  \quad &&\text{for } x=\tau, 2\tau, \dots, L_5 + \tau  \\
& \quad f_t \geq 0 \quad &&\text{for } x=\tau, 2\tau, \dots
\end{aligned}
\]

Denote the above mathematical program as $MP(T)$, where $T$ is our guess of OPT. The first and second lines of constraints represent the $T$-$CR$ constraints, and the third line of constraints represents the mass constraints. For brevity, we call these three lines of constraints nontrivial constraints. Now, let $f$ be any optimal solution with the longest tight prefix among all optimal solutions to the problem $MP(T)$. The next lemma shows the tightness of nontrivial constraints in $[0,min(x^*,L_b)]$.

\begin{lemma}
    \label{lem: tightness to Lb for MP}
    Let $f$ be any optimal solution with the longest tight prefix among all optimal solutions to $MP(T)$. Then, for each $x=\tau,2\tau,\dots,min(x^*,L_b)$, some nontrivial constraint in $MP(T)$ must be tight.
\end{lemma}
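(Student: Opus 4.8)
The plan is to adapt, with $T$ in place of $OPT$ and ``maximize the total mass'' in place of ``minimize the worst-case competitive ratio'', the contradiction arguments behind Theorems~\ref{thm: CR tight at time 1}, \ref{thm: in any opt solution something is tight in [0,1]} and~\ref{thm: CR must be tight in [1,I3]} and Lemma~\ref{lem: existence of an opt solution with the longest tight prefix}, using Lemma~\ref{lem: T-monotonicity over zero interval} wherever zero-interval monotonicity was used there. Suppose for contradiction that $f$ is an optimal solution of $MP(T)$ with the longest tight prefix $[0,p]$ among all optima, but that $p<\min(x^*,L_b)$. Then at $x_1:=p+\tau\in[\tau,\min(x^*,L_b)]$ neither the mass constraint nor the $T$-$CR$ constraint is tight; the case $x_1=\tau$ is the base case and is handled as the base case in the proof of Theorem~\ref{thm: unique mass placement in [0,1]}.

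First I would record the structural fact that localizes every modification to points $\ge x_1$. Since $x_1\le L_b$, the bad interval $I_{\gamma}(x_1)$ is a subinterval of $[0,x_1]$, and every point $x'$ with $x_1\in I_{\gamma}(x')$ satisfies $x'\ge x_1$: for $x'\le L_b$ because bad intervals there are prefix-subintervals, for $x'\in(L_b,L_5]$ trivially, and for $x'>L_5$ because the left endpoint of $I_{\gamma}(x')$ then exceeds $L_b$, so its bad interval misses $[0,L_b]$. As $\alpha_f^T(x')$ for $x'<x_1$ depends only on $\{f_t\}_{t\le x'}$ and on $\max(0,1-\sum_{t\le x'}f_t)$, any change to $f$ supported on $[x_1,\infty]$ leaves all constraints indexed below $x_1$ untouched, so the prefix $[0,p]$ stays tight throughout the argument.

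The core step is to raise $f_{x_1}$ by the largest feasible amount $\eta^*$ (finite, since the $T$-$CR$ constraint at $x_1$ bounds $f_{x_1}$). If $\eta^*>0$, the resulting $f$ is feasible with strictly larger total mass, contradicting optimality. If $\eta^*=0$, then, since both constraints at $x_1$ are slack, the binding constraint must be at some $x_2>x_1$ that is exactly tight in $f$, and I would transfer a small amount of mass from $x_2$ to $x_1$. Because the coefficient of $f_t$ in the $T$-$CR$ constraint at any point is nondecreasing in $t$ and the residual term is nonincreasing in each $f_t$, such a transfer does not increase $\alpha_f^T$ at any $x'\ge x_2$, keeps $\sum_{t\in I_{\gamma}(x_2)}f_t$ unchanged if $x_2$ is mass tight (since $x_1,x_2\in I_{\gamma}(x_2)$), keeps every slack constraint in $(x_1,x_2)$ slack for small transfers, and strictly decreases $\alpha_f^T(x_2)$ if $x_2$ is $T$-$CR$ tight. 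In the latter case one can then raise $f_{x_1}$ further for a strict gain in total mass, unless $x_1$ becomes tight first and the tight prefix is extended; in the mass-tight case one iterates the transfer over successive binding points, as in Lemma~\ref{lem: existence of an opt solution with the longest tight prefix}, terminating (by compactness) with either more total mass or a longer tight prefix, the control of $\alpha_f^T$ over the zero intervals created along the way being supplied by Lemma~\ref{lem: T-monotonicity over zero interval}. Either outcome contradicts the choice of $f$.

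I expect the main obstacle to be the case analysis in the region $(1,\min(x^*,L_b)]$. There the $T$-$CR$ constraint carries the nonlinear residual $\frac{x}{1-a+ax}\max(0,1-\sum_{t\le x}f_t)$ (which also makes the constraint change form at $x^*$), raising $f_{x_1}$ can \emph{decrease} $\alpha_f^T$ at far points $x'>x_1+1$, so one must track exactly which later constraints can become binding; and, unlike the pure-buy setting, the bad intervals of points in $(L_b,L_5]$ reach back before $L_b$, so when the binding constraint at $x_2$ is a mass constraint one must verify that the transfer stays inside $I_{\gamma}(x_2)$. Each of these is resolved exactly as in the $OPT$-versions of the cited statements.
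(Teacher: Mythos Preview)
Your plan follows the same exchange-argument skeleton as the paper's proof, and your structural observations (bad intervals of points $\le L_b$ are prefix-subintervals, constraints indexed below $x_1$ are unaffected by changes at $\ge x_1$) are correct and match the paper. Your $\eta^*>0$ branch is fine and corresponds exactly to the paper's ``$B$ is a suffix'' case.

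The gap is in the $\eta^*=0$ branch. You propose to transfer mass \emph{from the binding point $x_2$} to $x_1$, but you never verify that $f_{x_2}>0$, and it need not be. This is not a cosmetic omission: if the binding constraint is the mass constraint at $x_2$, the mass $\sum_{t\in I_\gamma(x_2)}f_t=\delta$ can be supported entirely on points other than $x_2$, so there may be nothing at $x_2$ to move. More seriously, even when $f_{x_2}>0$, in the mass-tight case your transfer keeps $\sum_{t\in I_\gamma(x_2)}f_t$ exactly $\delta$ (as you note, both $x_1,x_2\in I_\gamma(x_2)$), so the same constraint is still binding and you can \emph{never} increase the total mass by this route; the only possible progress is that $x_1$ itself becomes tight, extending the prefix. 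Your sentence ``iterates the transfer over successive binding points'' does not explain what happens when the transfer exhausts $f_{x_2}$ without making $x_1$ tight, nor where the next unit of mass comes from.

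The paper avoids both issues by organizing the argument differently: rather than looking at the constraint that blocks an increase at $x_1$, it takes $t_2$ to be the \emph{first tight point} after $t_1=x_1$ and transfers from there. It then explicitly handles the case $f_{t_2}=0$ (arguing that $t_2$ cannot be $T$-$CR$ tight with zero mass by the monotonicity Lemma~\ref{lem: T-monotonicity over zero interval}, hence $t_2$ is mass tight, locating $t_2$, and moving mass from the first later point $t_3$ with $f_{t_3}>0$). The termination is then: either $t_1$ becomes tight (longer tight prefix, contradicting the choice of $f$) or all later tight points are loosened so the slack set becomes a suffix (then add mass, contradicting optimality). Your plan would become correct if you replaced ``binding point'' by ``first tight point after $x_1$'' and supplied the $f_{t_2}=0$ case analysis.
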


\begin{proof}
    
    Suppose the lemma does not hold. Let $B$ be the set of $x \in \{\tau, \dots, \infty \}$ so that neither nontrivial constraint is tight at $x$, then $B \cap [0,min(x^*,L_b)] \neq \emptyset$.  Note that $\forall x \in B$ we have both $\sum_{t \in I_{\gamma}(x)} f_t < \delta$, and
    \[\begin{cases}
    \sum_{t \leq x} \frac{(1-a)(1+t-x)}{x} f_t + 1 < T \quad & \text{when} \ x \in B \cap [0,1] \\
    \sum_{t \leq x} \frac{t+1-a+a(x-t)}{1-a+ax} f_t + \frac{x}{1-a+ax} max(0,1-\sum_{t \leq x} f_t) < T \quad & \text{when} \ x \in B \cap [1, \infty] .
    \end{cases} \label{eq:fakeCR} \tag{\(\dagger\)}\]
    Consider the structure of set $B$. If $B$ is a suffix of the form $B= \{x^\prime, x^\prime+\tau, x^\prime+2\tau, \dots \}$, then $B \cap [0,min(x^*,L_b)] \neq \emptyset$ implies that $x^\prime \leq min(x^*,L_b)$. Thus we consider
    \[f_t^\prime = \begin{cases}
    f_{x^\prime} + \epsilon \quad & \text{if} \ t = x^\prime \\
    f_{x^\prime} \quad & \text{otherwise}.
    \end{cases}\]
    Since $x^\prime \leq min(x^*,L_b) \leq L_b$, if $x^\prime$ lies in some bad interval, then that must be the bad interval of some elements in $B$, which are all slack. For the \eqref{eq:fakeCR} constraints, adding $\epsilon$ mass at $x^\prime$ only changes the LHS of \eqref{eq:fakeCR} for $x > x^\prime$. But such $x$ only lies in $B$ which are all slack. It is safe to add mass and will not violate \eqref{eq:fakeCR} constraints for all finite $x$ values.
    On the other hand, when $x \to \infty$, LHS of the second inequality of \eqref{eq:fakeCR} cannot converge to $T$. This is because, if the max term takes $1-\sum_{t \leq x} f_t$, adding an $\epsilon$ amount of mass at some finite point $x^\prime$ decreases LHS at $x \to \infty$. If the max term takes 0, then we have $\sum_{t \leq x}f_t \geq 1$ and thus halt after assigning mass at $x^*$. It turns out that $f_t$ has a zero suffix. This implies that we have a decreasing LHS in $f$ and hence cannot converge to $T$. Thus we are safe to add mass at $x^\prime$. However, this gives us a larger total amount of mass, contradicting to the optimality of $f$.
    
    If $B$ is not a suffix, let $t_1 = min_{x \in B} x$ and $t_2=min_{x \not\in B, x>t_1} x$. Since $B \cap [0,min(x^*,L_b)] \neq \emptyset$, we know that $t_1 \leq min(x^*,L_b)$. If $f_{t_2} \neq 0$, let us move an $\epsilon$ amount of mass (sufficiently small) from $t_2$ to $t_1$, that is,
    \[f_t^{*} = \begin{cases}
    f_{t_1} + \epsilon \quad & \text{if} \ t = t_1 \\
    f_{t_2} - \epsilon \quad & \text{if} \ t = t_2 \\
    f_{t} \quad & \text{otherwise}.
    \end{cases}\]
    If we are allowed to do this (not violating any constraint), the objective value is preserved as the sum of $f_t$ unchanged, so $f_t$ is optimal will imply that $f_t^*$ attains the optimal objective value. For any $x<t_1$, since $t_1 \leq min(x^*,L_b) \leq L_b$, its bad interval $I_\gamma(x)$ (if non-empty) is of the form $(\cdot,x]$ (call it form 1). Hence moving some mass from $t_2$ to $t_1$ will not affect~$I_\gamma(x)$. For the \eqref{eq:fakeCR} constraints, since its LHS only depends on $f_t$ where $t \leq x$, thus there is no change; For any $x \in \{t_1, t_1+\tau, \dots, t_2-\tau\}$, by the definition of $t_2$, all such $x$ are in $B$ and hence constraints associated with such $x$ are slack in $f$. Thus, we can choose $\epsilon$ so that constraints~\eqref{eq:fakeCR} at $x$ keeps valid. Again, since $t_1 \leq L_b$, we can also choose $\epsilon$ so that the mass constraint is valid no matter where $t_2$ is; For any $x \geq t_2$, if $t_1 \in I_\gamma(x)$, then $t_2$ must be in $I_\gamma(x)$ and hence the mass constraint at $x$ is valid. For constraints \eqref{eq:fakeCR}, if the max term takes $1-\sum_{t \leq x} f_t$ at $x$, then both two lines in \eqref{eq:fakeCR} are indicating the $T$-$CR$, i.e. the `competitive ratio constraints' with a guess of OPT $T$. In this case, moving mass from $t_2$ to $t_1$ decreases the LHS at $x$. If the max term takes 0, LHS of the second line of constraints in \eqref{eq:fakeCR} becomes $\sum_{t \leq x} \frac{t+1-a+a(x-t)}{1-a+ax} f_t$. When $x \geq t_2$, numerator changes will be $(1-a)t_1\epsilon-(1-a)t_2\epsilon$ which is negative. Thus at $x$ the LHS decreases, and \eqref{eq:fakeCR} remains to be true. Note that this makes both constraints at $t_2$ decrease. Thus we can repeat the procedure until either $t_1$ becomes something tight in some optimal solution (which is a contradiction to the longest tightness assumption of $f$), or we find another optimal solution with the same longest tight prefix but a nontight suffix (and apply the previous case).

    Suppose that $f_{t_2} = 0$. If some constraint in \eqref{eq:fakeCR} is tight at $t_2$ (does not hold), by the monotonicity property given the guess of $T$ (Lemma~\ref{lem: T-monotonicity over zero interval}), a local first tight point must have nonzero mass, so we have a contradiction. Thus, mass is tight at $t_2$. Since $t_2$ is the first something tight point after $t_1$ and $f_{t_2}=0$, $t_2$ must be $L_5$. Then $f_{t_2} = 0$ and $t_2-\tau \in B$ indicate that there exists $t_3 > t_2$ with $f_{t_3} \neq 0$ (otherwise $t_2$ cannot be mass tight). Without loss of generality, $t_3$ is the first nonzero mass point after $t_2$. We move some tiny mass from~$t_3$ to $t_1$. This is a feasible solution since $t_1 \leq min(x^*,L_b) \leq L_b$ means that $t_1$ is only in the bad interval of some $x<t_2=L_5$ whose mass constraint is not tight. Clearly, \eqref{eq:fakeCR} holds at any point in $[t_1,t_2]$ and also holds at any point in $[t_2,t_3-\tau]$ which is a zero interval, by the monotonicity property over the zero interval given the guess of $T$ (Lemma~\ref{lem: T-monotonicity over zero interval}). The fact that~\eqref{eq:fakeCR} holds at any point outside $[t_1,t_3-\tau]$ is trivial. Thus we go back to the previous case.

    We finish the proof as required.
\end{proof}

This lemma implies the following theorem that our construction of $f$ gives an optimal solution to the mathematical program.

\begin{lemma}
    \label{lem: ALG is optimal for MP}
    Let $f$ be the output of $ALG(a,\gamma, \delta, T)$, and let $f^*$ be an optimal solution to $MP(T)$ with the longest tight prefix. Then $f$ gives the same optimal distribution to $f^*$ from 0 to $min(x^*,L_b)$.
\end{lemma}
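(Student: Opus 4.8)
The plan is to carry out the same point-by-point induction as in the proof of Theorem~\ref{our contribution:thm 1, greedy gives optimal}, now comparing the output $f$ of $ALG(a,\gamma,\delta,T)$ against the longest-tight-prefix MP-optimal solution $f^*$. Writing the discretized points as $\tau,2\tau,\dots$, I would prove by induction on $x$ that $f_x=f^*_x$ for every $x\in\{\tau,2\tau,\dots,\min(x^*,L_b)\}$. The three ingredients are: (i) Lemma~\ref{lem: tightness to Lb for MP}, which guarantees that in $f^*$ either the $T$-$CR$ constraint at $x$ or the mass constraint at $x$ is tight for every such $x$; (ii) the explicit description of Algorithm~\ref{alg:ALG_subroutine}, which at each point $x$ sets $f_x$ to the maximum value that violates neither the mass constraint at $x$ nor the $T$-$CR$ constraint at $x$; and (iii) the bad-interval structure from Appendix~\ref{appendix:bad intervals in appendix}, by which for $x\le L_b$ (which lies in $BI_1\cup BI_2\cup BI_3\cup BI_4$) the bad interval $I_{\gamma}(x)$, when nonempty, has the form $(\cdot,x]$, so that both the mass constraint at $x$ and the $T$-$CR$ constraint at $x$ are functions only of $\{f_t:t\le x\}$, and moreover no mass constraint at a point $x'>x$ with $x'\le L_b$ or $x'\in BI_5$ involves $f_x$ together with masses at points $\le x$.

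Given those ingredients, the inductive step is short. The base case $x=\tau$ and the general step have the same proof, so assume $f_t=f^*_t$ for all $t<x$ with $x\le\min(x^*,L_b)$. Feasibility of $f^*$ for $MP(T)$ forces $f^*_x$ to be at most the mass-tightening amount $\delta-\sum_{t\in I_{\gamma}(x)\setminus\{x\}}f^*_t$ and at most the $T$-$CR$-tightening amount at $x$; the latter is a genuine upper bound because the coefficient of $f_x$ in the $T$-$CR$ constraint at $x$ is positive (one checks it equals $\frac{1-a}{x}$ for $x\le 1$ and $\frac{1-a}{1-a+ax}$ for $x\ge 1$). By the induction hypothesis these two amounts depend only on $f^*_{<x}=f_{<x}$, and their minimum is exactly the value $ALG$ assigns to $f_x$, so $f^*_x\le f_x$. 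Conversely, if $f^*_x<f_x$, then $f^*_x$ is strictly below both tightening amounts, so neither the mass constraint at $x$ nor the $T$-$CR$ constraint at $x$ is tight in $f^*$, contradicting Lemma~\ref{lem: tightness to Lb for MP}. Hence $f^*_x=f_x$, completing the induction, and therefore $f$ and $f^*$ agree on $\{\tau,\dots,\min(x^*,L_b)\}$.

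The main obstacle, and the one place requiring care, is reconciling $ALG$'s forward-scanning local choices with the global MP-optimality of $f^*$ at the points where $ALG$ changes behavior — in particular the special step at time $1+\tau$, i.e. the world-1 versus world-2 split. One must first check that $ALG$ cannot halt prematurely on $[0,\min(x^*,L_b)]$: the halt triggered by $\sum_{t<x}f_t\ge 1$ first fires only at $x^*+\tau$, and the halt for $x\ge L_5+\tau$ lies beyond $L_b$ since $L_5>L_b$, so neither interferes. In world 2 ($f_{1+\tau}>0$) the induction runs unobstructed through $\min(x^*,L_b)$, using that the bad intervals of points in $(1,L_b]$ are empty or of the form $(\cdot,x]$. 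In world 1 ($f_{1+\tau}=0$) the algorithm returns after setting $f_{1+\tau}=0$; here I would argue, using the monotonicity of the $T$-$CR$ over zero intervals past time $1$ (Lemma~\ref{lem: T-monotonicity over zero interval}) exactly as in Lemma~\ref{lem: CR behavior over zero intervals in two worlds}, that the $T$-$CR$ stays flat at value $T$ beyond time $1$, so any positive mass at a point in $(1,\infty)$ would push that point's $T$-$CR$ above $T$; hence $f^*_t=0=f_t$ for all $t\in(1,\min(x^*,L_b)]$, and the claim reduces to the agreement on $[0,1]$ already established by the induction. Combining the two worlds gives $f_x=f^*_x$ throughout $[0,\min(x^*,L_b)]$, as required.
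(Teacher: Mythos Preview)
Your proof is correct and follows essentially the same induction as the paper's own proof, invoking Lemma~\ref{lem: tightness to Lb for MP} and the bad-interval structure for $x\le L_b$ to force $f^*_x=f_x$ at each step. Your treatment is in fact more careful than the paper's: you explicitly verify that neither halting condition of $ALG$ interferes on $[0,\min(x^*,L_b)]$ and you separately handle the world-1 branch via the flat-$T$-$CR$ argument, points the paper leaves implicit.
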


\begin{proof}
    First it is easy to note that the solution constructed by our algorithm $ALG(a,\gamma, \delta, T)$ is feasible for the $MP(T)$ (any possibly chopping off solution of a feasible solution is still feasible). It suffices to show by induction that $f_x=f_x^*$ for $x \leq min(x^*,L_b)$.

    For the base case, consider $x=\tau$ and we must have $f_\tau^* \leq f_\tau = min(\delta, \frac{T-1}{1-a} \tau)$ since $f^*$ is feasible. If $f_\tau^* < f_\tau$, then at time $\tau$ neither is tight at $f^*$. This violates Lemma~\ref{lem: tightness to Lb for MP}.

    For $x>\tau$, assume for induction that $f_t^*=f_t$ for all $t<x$. First, $f_x^* \leq f_x$. This is because feasibility of $f_x^*$ to $MP(T)$ leads to valid \eqref{eq:fakeCR} constraints when the summation is over $t \leq x$, which consists of two terms $t<x$ and at $x$. By the induction hypothesis, $f_x^*=f_x$ for the summation over $t<x$. Thus we find the required upper bound for the $f_x^*$ term. Again, if $f_x^*<f_x$, it violates Lemma~\ref{lem: tightness to Lb for MP} at time $x$. Thus $f_x^*=f_x$ when $x\leq min(x^*,L_b)$. This completes our proof.
\end{proof}

Now, we want to prove that the solution returned by our algorithm has more than one unit of mass if and only if our guess is larger than the true OPT value. This allows us to complete the search procedure.

\begin{lemma}
    \label{lem: sum>=1 iff T>=OPT}
    Let $f$ be the output of $ALG(a,\gamma, \delta, T)$. Then $\sum_t f_t \geq 1 \iff T \geq OPT$. 
\end{lemma}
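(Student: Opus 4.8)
The plan is to prove the two implications separately, using the correspondence with $MP(T)$ from Lemma~\ref{lem: ALG is optimal for MP} and the fact (Theorem~\ref{our contribution:thm 1, greedy gives optimal}) that when the guess equals $OPT$ the algorithm reproduces the greedy solution, which is an honest purchase distribution; so $\sum_t f_t = 1$ when $T = OPT$. Throughout, let $\tilde f$ denote the probability distribution obtained from $f$ by cutting off at the first point $x^*$ where the running mass reaches $1$ (this is exactly what $\mathrm{BinarySearch}$ returns).

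$(\Rightarrow)$ Suppose $\sum_t f_t \ge 1$. I will show $\tilde f$ is feasible with $\max_x \alpha_{\tilde f}(x) \le T$, which gives $OPT \le T$. The mass constraints hold since $\tilde f \le f$ pointwise and $f$ obeys every mass constraint by construction. For the $CR$ bound: when $x < x^*$, $f$ and $\tilde f$ agree on $[0,x]$ and the mass beyond $x$ is the same positive quantity $1 - \sum_{t \le x} f_t$, so $\alpha_{\tilde f}(x)$ is precisely the left-hand side of the $T$-$CR$ constraint of $MP(T)$ at $x$, hence $\le T$ since $ALG$ never exceeds the $T$-$CR$ slack; when $x \ge x^*$, the interval $[x^*+\tau,\infty)$ is a zero suffix of $\tilde f$, so Lemma~\ref{lem: monotonicity of zero suffix} makes $\alpha_{\tilde f}$ monotone there with limit $\alpha_{\tilde f}(\infty) = 1 \le T$, and the one remaining value $\alpha_{\tilde f}(x^*)$ is controlled using $\sum_{t \le x^*} f_t \ge 1$ (so the implicit infinity-mass at $x^*$ is nonpositive) together with the fact that truncation can only decrease the first moment of the mass up to $x^*$; the range $x \le 1$ is immediate since the coefficients there are positive and $\tilde f \le f$. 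The $ALG$ branch that returns after setting $f_\infty$ is handled identically after truncating that infinity-mass.

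$(\Leftarrow)$ Suppose $T \ge OPT$; the case $T = OPT$ is Theorem~\ref{our contribution:thm 1, greedy gives optimal}, so assume $T > OPT$ and, for contradiction, $\sum_t f_t < 1$. I first show the optimal value of $MP(T)$ is at least $1$. Start from an optimal purchase distribution $g$ with support in $[0,L_b] \cup \{\infty\}$ (Theorem~\ref{our contribution:thm 5, closely related greedy, all delta}) and redistribute its infinity-mass onto finitely many points spread thinly over $(L_5,N]$ for a large $N$. Because $L_5 = L_b + 1$ and because the bad interval of every point in $BI_5$ contains $\infty$, each point's bad interval either misses the new support entirely or contains the whole block of new mass together with whatever $g$ already had there, so all mass constraints still hold; and a direct estimate shows the redistribution changes every $T$-$CR$ value by at most some $\eta(N) \to 0$, so for $N$ large the positive slack $T - OPT$ absorbs it, yielding a feasible $MP(T)$ solution with finite mass exactly $1$. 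Now let $f^*$ be a maximum-mass, longest-tight-prefix optimal solution of $MP(T)$, so $\sum_t f^*_t \ge 1$; since $f^*$ has at most $\delta$ mass past $L_b$ by feasibility, $\sum_{t \le L_b} f^*_t \ge 1 - \delta$. By Lemma~\ref{lem: ALG is optimal for MP}, $f = f^*$ on $[0,L_b]$ (here $x^*$ does not exist, as $\sum_t f_t < 1$), and by the termination analysis of Lemma~\ref{lem: algorithm terminates} the algorithm keeps running until it has placed a full $\delta$ of mass past $L_b$, so $\sum_t f_t = \sum_{t \le L_b} f_t + \delta \ge 1$, contradicting $\sum_t f_t < 1$. (In the $ALG$ branch returning with $f_\infty = a(T - \sum_{t \le 1} f_t)$, combine this formula with $\sum_{t\le 1} f_t \ge 1 - g_\infty$ to get $\sum_t f_t = 1 + a(T-OPT) \ge 1$ directly.)

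The main obstacle is the redistribution step in the second direction: $g$ puts all of its non-finite mass at the single point $\infty$, and that mass already counts toward the bad interval of every point in $BI_5$, so $g$ is mass-tight in the suffix and one cannot simply \emph{add} finite mass there — the mass must be moved off $\infty$ and onto finite points simultaneously, and spread far enough out that it neither violates a mass constraint nor lifts any $T$-$CR$ value above $T$. Pinning down the estimate $\eta(N) \to 0$ and checking it against the slack $T - OPT$ is the technical heart of the argument; the rest is bookkeeping around the $MP$ correspondence (Lemma~\ref{lem: ALG is optimal for MP}) and the termination analysis (Lemma~\ref{lem: algorithm terminates}).
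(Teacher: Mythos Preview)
Your proposal follows the same two-direction strategy as the paper, and both directions are essentially correct, though the execution differs in a few places.

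\textbf{Forward direction.} The paper argues by a direct comparison: writing $C(x)$ for the left-hand side of the $T$-CR constraint of $MP(T)$ at $x$ and $C'(x)$ for the true competitive ratio of the truncated distribution, it verifies $C'(x)\le C(x)\le T$ for every $x$ in two lines (using that for $x\ge x^{*}$ the max term in $MP(T)$ is $0$ and $\tilde f_t\le f_t$). You instead invoke Lemma~\ref{lem: monotonicity of zero suffix} for the suffix $x\ge x^{*}$ and anchor at $x^{*}$ and at $\infty$. Both work; the paper's route is shorter.

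\textbf{Backward direction.} Here you are actually \emph{more} careful than the paper. The paper simply asserts ``the real optimal solution to the true problem is a feasible solution to $MP(T)$'', which immediately gives $\mathrm{opt}(MP(T))\ge 1$. But $MP(T)$ has only finite-time variables, so if the real optimal carries mass at $\infty$ this assertion is imprecise. Your redistribution of $g_\infty$ onto $(L_5,N]$ with $N\to\infty$ is a legitimate fix: your observation $L_5=L_b+1$ is correct, the mass-constraint check goes through, and the $T$-CR perturbation at any $x$ is bounded by $\frac{(1-a)g_\infty}{1-a+aN}\to 0$, which is absorbed by the slack $T-OPT>0$. After that, both proofs finish the same way via Lemma~\ref{lem: ALG is optimal for MP} and the observation that at most $\delta$ mass can sit beyond $L_b$.

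\textbf{One loose end.} Your parenthetical for the world~1 branch (``combine with $\sum_{t\le 1}f_t\ge 1-g_\infty$ to get $\sum_t f_t=1+a(T-OPT)$'') is not justified as stated: you would need that the true optimum is also in world~1 and that $ALG(T)$ places at least as much mass on $[0,1]$ as the true optimum, neither of which you argue. In fact there is a cleaner route: when $ALG(T)$ takes the world~1 branch, the check $f_{1+\tau}=0$ together with $T$-CR tightness at $1$ forces $\sum_{t\le 1}f_t=\frac{1-aT}{1-a}$, and then $\sum_t f_t=\sum_{t\le 1}f_t+a\bigl(T-\sum_{t\le 1}f_t\bigr)=1$ exactly, with no reference to $g$ needed.
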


\begin{proof}
    For the forward direction, assume that $\sum_t f_t \geq 1$. We consider 
    \[f_x^{\prime} = \begin{cases}
    f_x \quad & \text{if} \ \sum_{t \leq x} f_t < 1 \\
    1 - \sum_{t < x} f_t \quad & \text{if} \ \sum_{t \leq x-\tau} f_t < 1 \ \text{but} \ \sum_{t \leq x} f_t \geq 1 \\
    0 \quad & \text{otherwise}. 
    \end{cases}\]
    Then $f^\prime$ is a probability distribution. Let $x^\prime$ be the point such that $\sum_{t=\tau}^{x^\prime} f_t^\prime =1$. We want to show that $f^\prime$ satisfies all tail constraints, and $f^\prime$ has expected competitive ratio at most $T$, implying that $OPT \leq T$.
    Since we get $f^\prime$ by chopping from $f$ which satisfies tail constraints, $f^\prime$ also satisfies all tail constraints. Now let us consider the ratio at each point $x$ in $f^\prime$. Let the LHS of the $T$-$CR$ constraint to be $C(x)$ for $f$, and $C^\prime(x)$ for $f^\prime$. For any $x$ such that $\sum_{t=\tau}^{x}f_t \leq 1$, we have $C^\prime(x) = C(x) \leq T$, where the equality is because the max term takes the nonzero term so that $f^\prime=f$, and the inequality holds since $f$ is feasible for $MP(T)$. For any $x$ such that $\sum_{t=\tau}^{x}f_t^\prime > 1$, at these $x$, we have $f_x^\prime=0$. Then
    \begin{align*}
    C^\prime(x) =& \sum_{t=\tau}^{x^\prime} \frac{t+1-a+a(x-t)}{1-a+ax} f_t^\prime + \sum_{t=x^\prime+dt}^{x} \frac{t+1-a+a(x-t)}{1-a+ax} f_t^\prime\\
    \leq& \sum_{t=\tau}^{x^\prime} \frac{t+1-a+a(x-t)}{1-a+ax}f_t+\sum_{t=x^\prime+dt}^{x}\frac{t+1-a+a(x-t)}{1-a+ax}f_t = C(x) \leq T
    \end{align*}
    where the last equality holds since the max term takes 0, and the last inequality holds since~$f$ is feasible.

    For the backward direction, assume that $T \geq OPT$. If $x^*=min(x^*,L_b)$, we place at least one unit of mass when reaching time $x^*$ by its definition. Thus, $\sum_t f_t \geq 1$. Otherwise,~$L_b$ is smaller. In this case, note that the real optimal solution to the true problem is a feasible solution to $MP(T)$. It turns out that any optimal solution to $MP(T)$ has at least one unit of mass. By the mass constraint in $MP(T)$, there is at most $\delta$ mass in $(L_b,\infty]$. Thus, any optimal solution to $MP(T)$ has at least $1-\delta$ mass in $[0,L_b]$. By Lemma~\ref{lem: ALG is optimal for MP}, $f$ gives an optimal distribution to $MP(T)$ in $[0,min(x^*,L_b)]$ (which is now $[0,L_b]$) and thus has at least $1-\delta$ mass in $[0,L_b]$. To show that $\sum_t f_t \geq 1$, it suffices to show that we can put $\delta$ mass in $(L_b,\infty]$. Note that in our $T-$competitive ratio setting, $T$-$CR$ is tight at time 1. By Lemma~\ref{lem: T-monotonicity over zero interval}, if there is a zero interval after time 1, we cannot have an increasing $T$-$CR$. Clearly, if it gives us a strictly decreasing $T$-$CR$ over the zero interval, so does any zero interval after time 1 since $c_2-aE$ in Lemma~\ref{lem: T-monotonicity over zero interval} decreases while increasing $x$ in $f$. In this case, we have enough room to place $\delta$ mass in the bad interval suffix as long as satisfying the $T$-$CR$ constraint. Thus $f$ gives an optimal solution to $MP(T)$ and $\sum_t f_t \geq 1$. If the zero interval gives us a flat ratio, there must be a zero suffix $(1,\infty)$ and by the asymptotic property at $\infty$ we must have $T$-$CR$ tight. Hence there is only one way of placing mass, i.e. tightness holds at every point and hence our algorithm computes the optimal solution to the $MP(T)$. Because of the fact that any optimal solution to $MP(T)$ has at least one unit of mass, we can conclude that $\sum_t f_t \geq 1$.
\end{proof}

Finally, we go back to prove Theorem~\ref{our contribution:thm 6, binary search}.
 
\begin{proof}[Proof of Theorem~\ref{our contribution:thm 6, binary search}]
  By Lemma~\ref{lem: sum>=1 iff T>=OPT}, we have $\sum_t f_t \geq 1 \iff T \geq OPT$. Therefore, by making a guess $T$ each round, our binary search algorithm will push our guess closer to the true optimal $CR$. Since the size of the initial search range is $2-a-\frac{e}{e-1+a}$ and we reduce the length by a factor of 2 after each query, the binary procedure queries our algorithm $O(\log(\frac{1}{\epsilon}))$ times, resulting in the last search space $[l,u]$ with $u-l \leq \epsilon$. Finally, we truncate the construction $f$ to make it feasible. This finishes the proof of this theorem.
\end{proof}

We know that to keep the competitive ratio tight after 1, we need exponentially increasing mass. This is the same for our $T-CR$. The binary procedure only queries $O(log(\frac{2-a-\frac{e}{e-1+a}}{\epsilon}))$ times. Because of the log, only few queries needed ($log(\frac{2-a-\frac{e}{e-1+a}}{\epsilon}) \approx 12$) when we set small $\epsilon=10^{-6}$ and $a=0.5$. Thus, our construction is reasonably fast, and so is the binary search procedure. However, by the nature of the binary search procedure, we cannot get the exact optimal $CR$ to the problem. In Section~\ref{subsec:LP algorithm}, we argue that one can use another approach of linear programming to get the exact value, but significantly slower.

\subsection{Linear programming algorithm}
\label{subsec:LP algorithm}

To find an exact optimal solution to our problem, one can take advantage of the linear program. In Section~\ref{subsec: general structure of optimal solutions}, we argued that there always exists an optimal solution with a fixed, finite support $[0,L_b] \cup \{\infty\}$ (Theorem~\ref{our contribution:thm 5, closely related greedy, all delta}). In this section, we will introduce the LP algorithm by given the LP formulation and solving the~LP. The finite support property of Theorem~\ref{our contribution:thm 5, closely related greedy, all delta} allows us to set a linear program (LP) to solve the problem, namely we minimize the competitive ratio~$\lambda$ without violating $CR$ and mass constraints at any point in $[0,L_b]$ and the mass over the support is sum to 1.

Let us give the LP formulation based on the result that there always exists a solution with a fixed finite support $[0,L_b] \cup \{\infty\}$ to the problem.

\begin{equation}
\tag{LP}\label{eq:lp}
\begin{aligned}
  \min\; & \lambda \\
  \text{s.t.}\; 
    & \sum_{t \le x} \frac{t+1-a+a(x-t)}{x} f_t
      + \sum_{x<t\le L_b} f_t
      + f_\infty \le \lambda
      &&\text{for }x=\tau,2\tau,\dots,1,\\
    & \sum_{t \le x} \frac{t+1-a+a(x-t)}{1-a+ax} f_t
      + \sum_{x<t\le L_b} \frac{x}{1-a+ax}f_t
      + \frac{x}{1-a+ax}f_\infty \le \lambda
      &&\text{for }x=1,1+\tau,\dots,L_b,\\
    & \sum_{t\in I_\gamma(x)} f_t \le \delta 
      &&\text{for }x=\tau,2\tau,\dots,L_b,\\
    & f_t \ge 0 
      &&\text{for }t=\tau,2\tau,\dots,L_b,\\
    & \sum_{t \leq L_b} f_t + f_\infty = 1.
\end{aligned}
\end{equation}

We have the following LP algorithm (Algorithm~\ref{alg:solveLP}) and Theorem~\ref{our contribution:thm 7, LP}.

\begin{algorithm}[ht]
\caption{A linear program based algorithm}\label{alg:solveLP}
Solve the linear program \eqref{eq:lp}.\;
\end{algorithm}

\OurContributionSevenLP*

\begin{proof}
    The correctness follows from the $LP$ formulation and Theorem~\ref{our contribution:thm 5, closely related greedy, all delta}.
\end{proof}

By using some LP solver, one can get the exact optimal competitive ratio and the corresponding optimal distribution. A major downside of this LP algorithm is that it is much slower than implementing the combinatorial binary search procedure, especially when the size of the problem is sufficiently large. The running time of the LP algorithm depends on general LP solvers and scales with the number of variables and constraints derived from the problem. In our LP formulation, one can observe that there are $\Theta(L_b/\tau)$ variables and $\Theta(L_b/\tau)$ constraints, leading to a substantial computational slowdown. In practice, although solving the LP might be faster than its theoretical complexity suggests (e.g., when using the practical simplex method), it remains markedly slower than our combinatorial approach due to its huge size.

\section{Simulations of Algorithms}
\label{sec:Simulations}

In this section, we simulate both the binary search algorithm and the LP algorithm to demonstrate the structures of the corresponding optimal solutions under different setups. We also give practical evidence that the optimal solution may not be unique, and some point in the optimal distribution can be neither $CR$ nor mass tight, which is different from the optimal structure in the classical setting with tail bounds. We note that although some computational error is inevitable by tuning the guess of OPT and scaling the problem when using the binary search method, it is much faster to get an arbitrarily close answer than solving the LP with the same problem scale. 

In Figure~\ref{fig:two different solutions with the same input}, we set~$a=0.8,\delta=0.05$ and~$\gamma=2-a$. We show figures of the optimal purchase distribution, the expected $CR$ (showing $CR$ constraints) and the probability of exceeding~$\gamma$ (total mass in the bad interval for each point, showing mass constraints) given by both the LP algorithm (Figure~\ref{fig:world2 LP}) and the binary search algorithm (Figure~\ref{fig: world 2 greedy with finite support}). Note that both algorithms give the same optimal $CR$ value (a bit larger than 1.1) to our tail bound problem. By our criterion in Section~\ref{subsec: general structure of optimal solutions}, we know that the optimal solution must be in world~2. In other words, this is a solution in the small $\delta$ regime as we discussed earlier. Since our subroutine in the binary search algorithm follows a greedy approach, the structure of the optimal solution it produces is expected to match the form described in the full structural Theorem~\ref{our contribution:thm 4, greedy solution, small delta} before $L_b$. As Figure~\ref{fig: world 2 greedy with finite support} shows, it confirms that the solution returned by the binary search algorithm exhibits this structure: Before time 1, either $CR$ or mass constraint is tight. After time 1 (rescaled by $\tau=0.001$), $CR$ keeps tight until mass becomes tight at some point until $L_b = \frac{\gamma-1}{1-a\gamma}=5$. Then, by placing the remaining mass passed $L_b$ to $\infty$, mass is tight at $\infty$. Finally, we output the corresponding distribution and plot the figures. Figure~\ref{fig:exponential_interrupted_by_zero_intervals} serves as a supplementary illustration, depicting partially the mass distribution we give in Figure~\ref{fig: world 2 greedy with finite support} under the same setup. It shows that we have an exponentially increasing distribution to make $CR$ tight after time 1 at the beginning, then when it becomes mass tight, we have exponential functions (with repeating range of values) interrupted by a bunch of zero intervals. In comparison, the optimal distribution given by the LP algorithm in Figure~\ref{fig:world2 LP} is different from the binary search solution. We show its behavior from 0 to $L_b$. This simulation indicates that the optimal solution is not unique in our small $\delta$ regime. As we promised, one can see that neither $CR$ nor mass constraint is tight at some point in the given solution. We give more plots (Figure~\ref{fig:no fixed pattern for LP output}) showing that solutions returned by the LP algorithms do not have fixed structures. The distribution given by the LP algorithm can be non-greedy in $[1,L_b]$, thus the optimal purchase distribution is not unique even before $L_b$.

We also give the plot when the problem is in the large $\delta$ regime, by setting~$a=0.5, \delta=0.25$ and $\gamma=2-a$. Both the binary search algorithm and the LP algorithm give the same Figure~\ref{fig: unique solution in world 1}. This aligns with the structural property when the optimal solution is in world~1 since $\frac{OPT-1}{\frac{1}{a}-1} < \delta$ in our setup. Again, before time 1, either $CR$ or mass is tight. After time 1, we have a zero suffix which makes $CR$ always tight and guarantees the uniqueness.

\begin{figure}[H]
    \centering
    \begin{subfigure}[b]{0.45\textwidth}
        \centering
        \includegraphics[width=\textwidth]{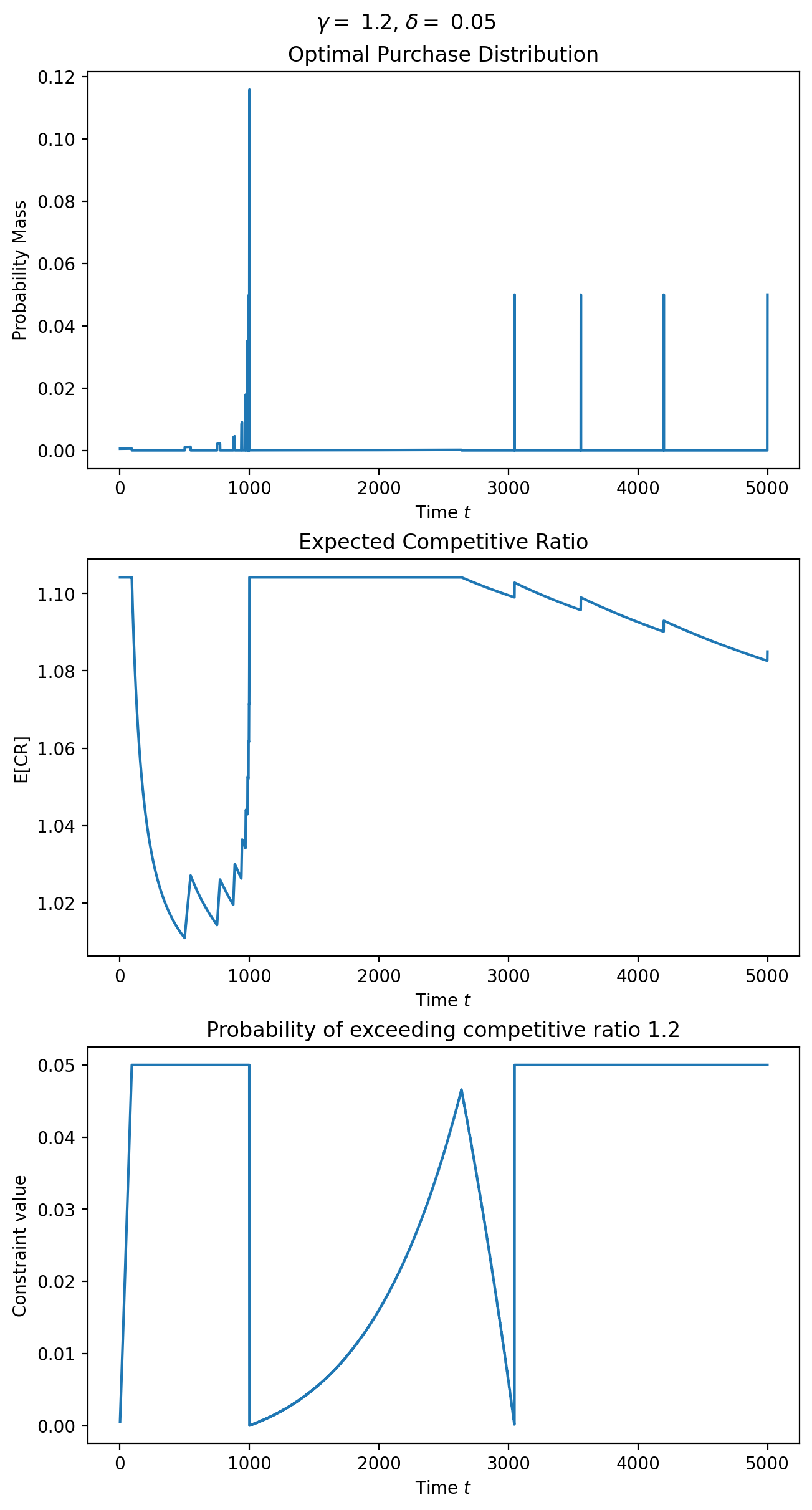}
        \caption{An optimal solution with a finite support $[0,L_b] \cup \{\infty\}$ for $a=0.8, \delta=0.05$ and $\gamma=2-a$ returned by the Linear Program.}
        \label{fig:world2 LP}
    \end{subfigure}
    \hfill
    \begin{subfigure}[b]{0.45\textwidth}
        \centering
        \includegraphics[width=\textwidth,height=13cm]{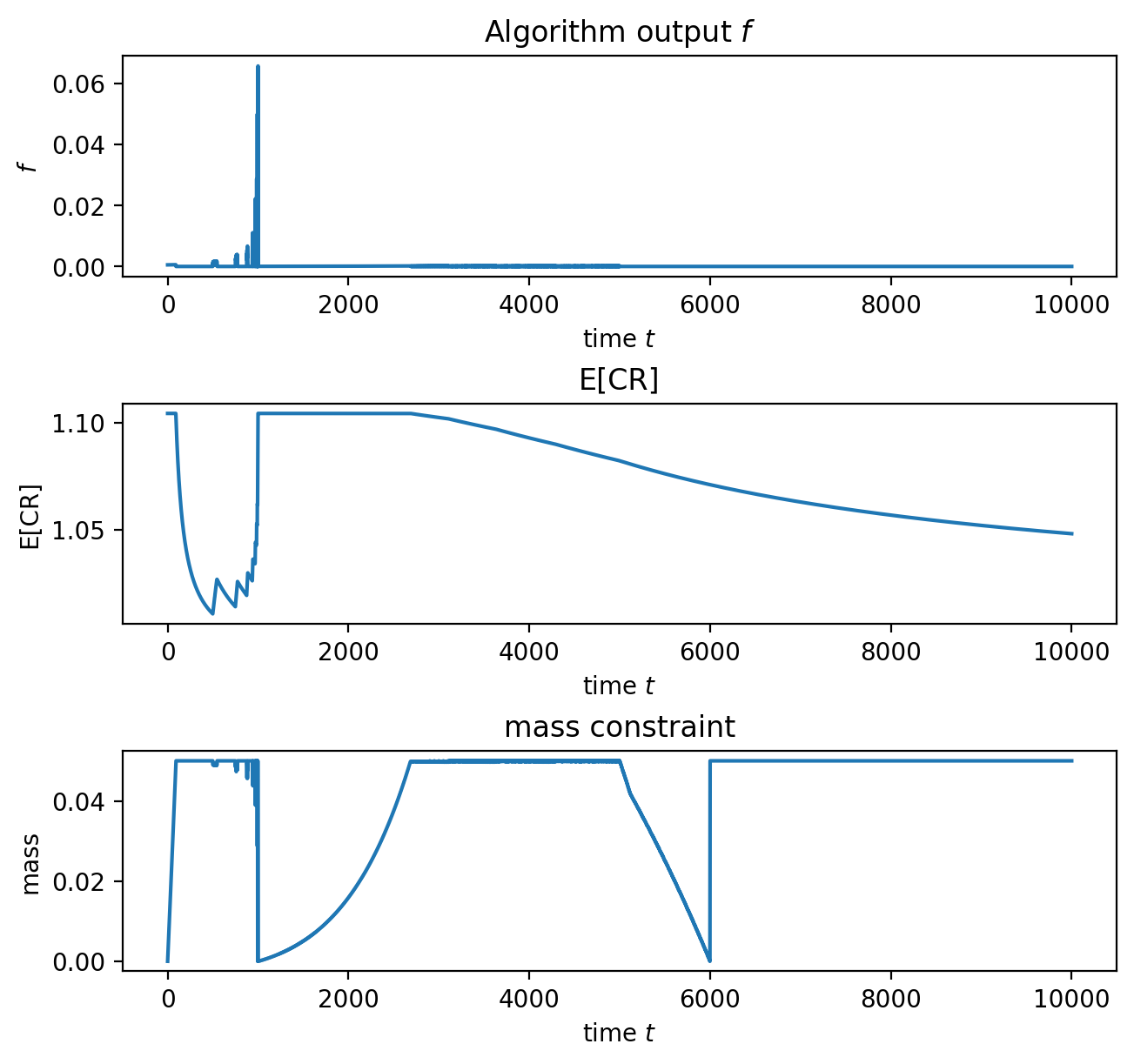}
        \caption{The greedy solution for $a=0.8, \delta=0.05$ and $\gamma=2-a$ returned by the  binary search algorithm. For the output distribution, we plot the distribution by putting all mass after $L_b$ to $\infty$, to give the solution matching the form of the LP solution.}
        \label{fig: world 2 greedy with finite support}
    \end{subfigure}
    \caption{Two optimal solutions for $a=0.8, \delta=0.05$ and $\gamma=2-a$. On each side, the above figure is the mass distribution. The middle is the expected $CR$ constraint as a function of the adversary's choice $t$. The bottom is the probability of exceeding $\gamma$ as a function of the adversary's choice $t$, that is, the total mass in the bad interval of each $t$.}
    \label{fig:two different solutions with the same input}
\end{figure}

\begin{figure}[H]
    \centering
    \begin{subfigure}[b]{0.45\textwidth}
        \centering
        \includegraphics[width=\textwidth]{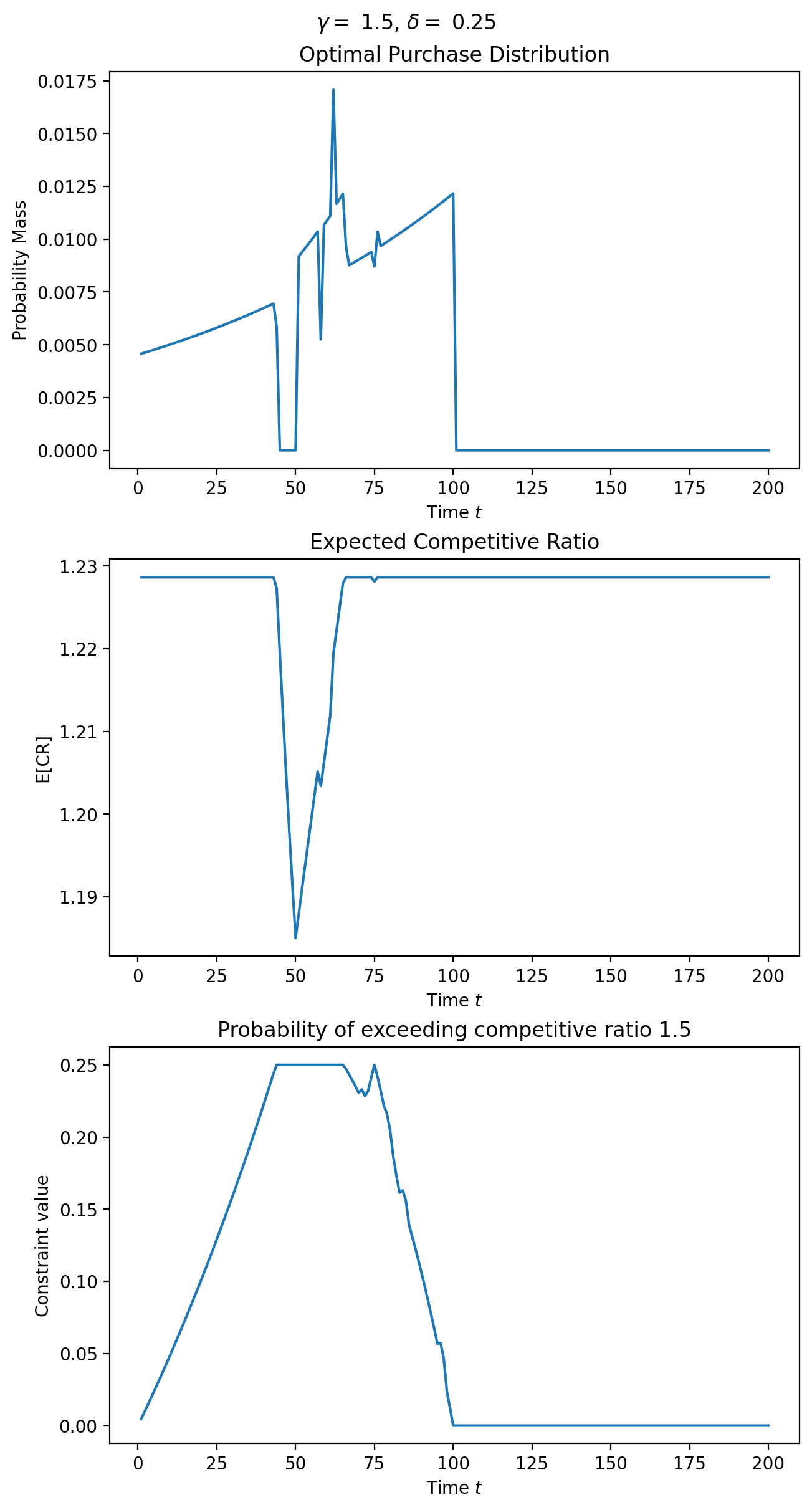}
        \caption{The optimal distribution for $a=0.5, \delta=0.25$ and $\gamma=2-a$. Note that for this input, after time 1 (100 with scaling) $CR$ keeps tight. Our problem is in a large $\delta$ regime.}
        \label{fig: unique solution in world 1}
    \end{subfigure}
    \hfill
    \begin{subfigure}[b]{0.45\textwidth}
        \centering
        \includegraphics[width=\textwidth,height=13.5cm]{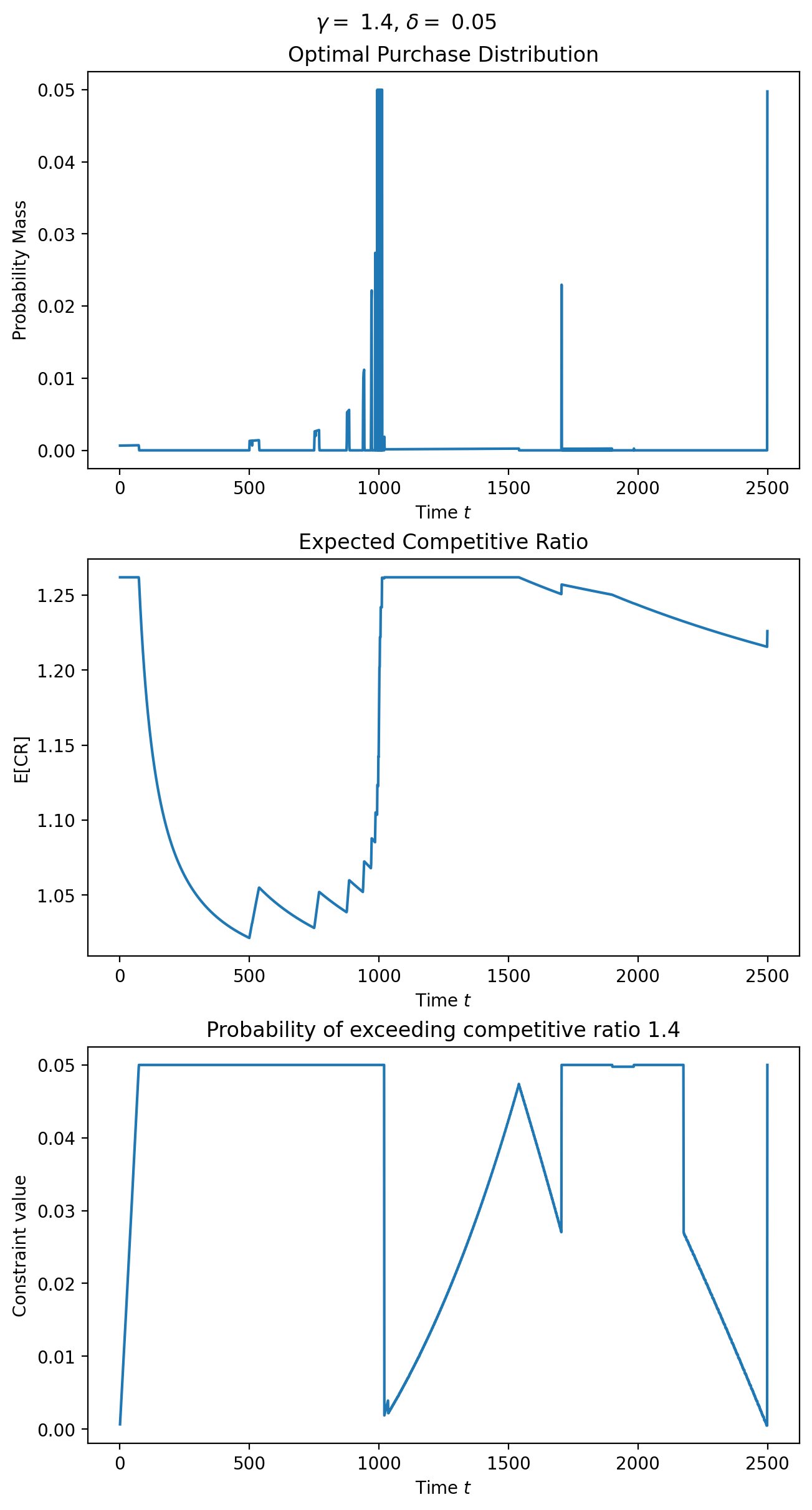}
        \caption{An optimal solution with a finite support for $a=0.6, \delta=0.05$ and $\gamma=2-a$ returned by the Linear Program. It shows that the output of the LP may have different structures (patterns).}
        \label{fig:no fixed pattern for LP output}
    \end{subfigure}
    \caption{On each side, the meaning of each figure is the same as Figure~\ref{fig:two different solutions with the same input}.}
    \label{fig:more experiments}
\end{figure}

\begin{figure}[H]
    \centering
    \includegraphics[width=15cm, height=15cm]{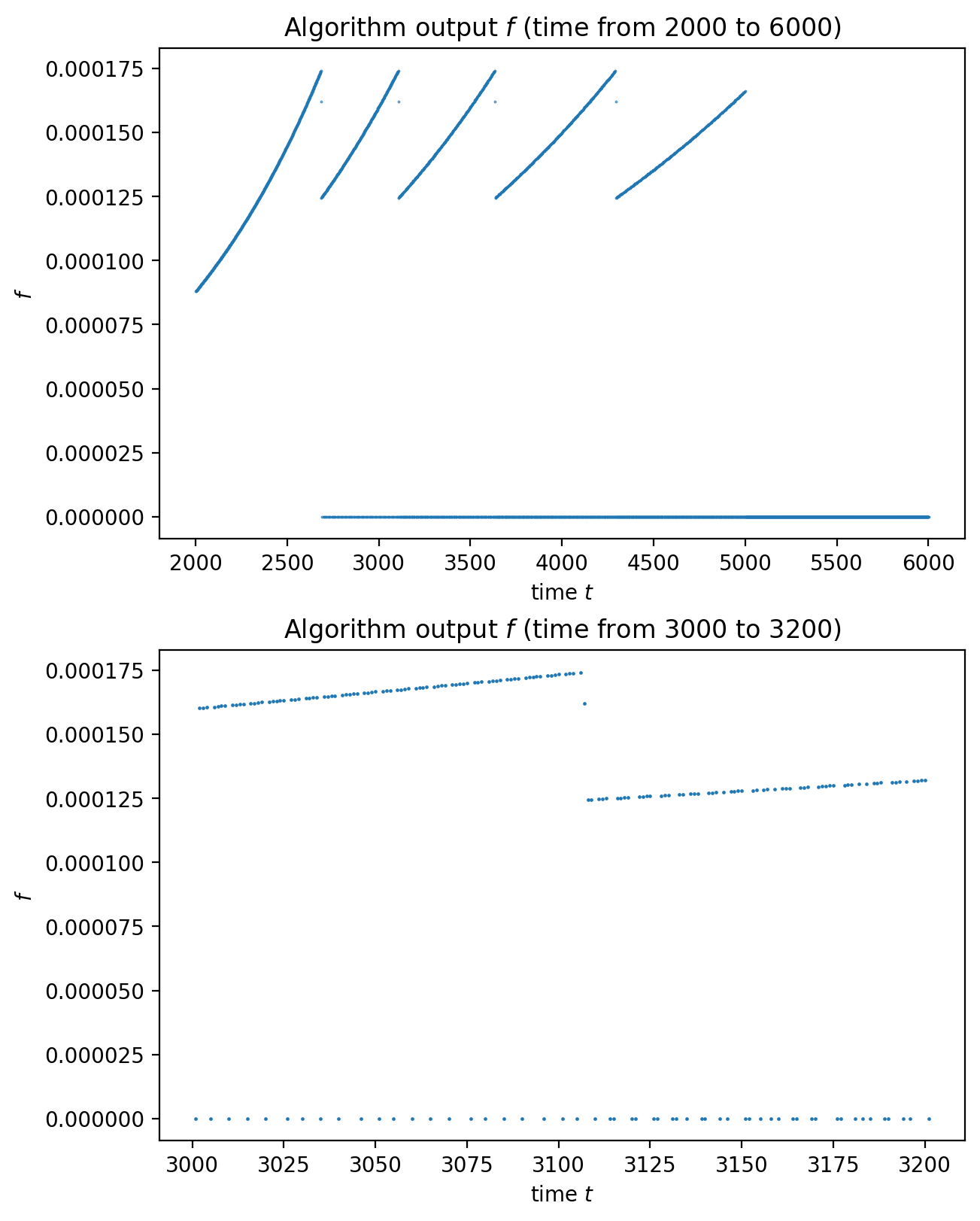}
    \caption{A zoomed-in of the mass distribution between time 2 and 6 of the binary search solution for $a=0.8, \delta=0.05$ and $\gamma=2-a$ (Figure~\ref{fig: world 2 greedy with finite support}) by placing the mass after $L_b$ to $\infty$. A demonstration of how does the solution look like when mass constraint keeps tight, i.e. how exponential functions are interrupted by a bunch of zero intervals. The top figure displays the mass distribution between time 2 and 6 for the optimal solution. The bottom figure, as a further zoomed-in version from time 3 to 3.2, shows that the distribution is an alternation between zeros and exponentials, not overlapping.}
    \label{fig:exponential_interrupted_by_zero_intervals}
\end{figure}

\end{document}